\definecolor{mygreen}{RGB}{20,120,60}
\title{Stochastic Matching with Few Queries: $(1-\varepsilon)$ Approximation\footnote{A version of this paper is to appear at STOC 2020.}}
\author{Soheil Behnezhad\thanks{Supported by a Google PhD Fellowship.} \and Mahsa Derakhshan \and MohammadTaghi Hajiaghayi}
\date{University of Maryland\\ \texttt{\{soheil,mahsa,hajiagha\}@cs.umd.edu}
}
\newcommand{\local}[0]{\ensuremath{\mathsf{LOCAL}}}
\newcommand{\E}[0]{\ensuremath{\mathbb{E}}}
\DeclareMathOperator{\Var}{Var}
\DeclareMathOperator{\Cov}{Cov}
\newcommand{\opt}[0]{\ensuremath{\textsc{opt}}}
\newcommand{\alg}[0]{\ensuremath{\mathsf{ALG}}\ref{alg:sampling}}
\newcommand{\findmatching}[2]{\ensuremath{\mathsf{FindMatching}_{#1}(#2)}}
\newcommand{\apxMIS}[1]{\ensuremath{\mathsf{ApproximateMIS}(#1)}}
\newcommand{\MM}[1]{\mathsf{MM}(#1)}
\DeclareMathOperator{\poly}{poly}
\DeclareMathOperator{\polylog}{polylog}
\DeclareMathOperator{\var}{Var}
\renewcommand{\epsilon}[0]{\ensuremath{\varepsilon}}
\let\originalleft\left
\let\originalright\right
\renewcommand{\left}{\mathopen{}\mathclose\bgroup\originalleft}
\renewcommand{\right}{\aftergroup\egroup\originalright}
\newtheorem{theorem}{Theorem}
\newtheorem{lemma}{Lemma}[section]
\newtheorem{corollary}[lemma]{Corollary}
\newtheorem{definition}[lemma]{Definition}
\newtheorem{claim}[lemma]{Claim}
\newtheorem{observation}[lemma]{Observation}
\newtheorem{remark}[lemma]{Remark}
\newtheorem{assumption}[lemma]{Assumption}
\def\thm@space@setup{%
  \thm@preskip= 0.2cm
  \thm@postskip=\thm@preskip 
}
\definecolor{mygreen}{RGB}{20,125,20}
\definecolor{myred}{RGB}{200,20,20}
\definecolor{linkcolor}{RGB}{0,0,230}
\definecolor{mylightgray}{RGB}{230,230,230}
\definecolor{verylightgray}{RGB}{240,240,240}
\definecolor{commentcolor}{RGB}{120,120,120}
\newcommand{\smparagraph}[1]{
\par\addvspace{0.2cm}
\noindent \textbf{#1}
}
\newcommand{\etal}[0]{\textit{et al.}}
\newcommand{\mc}[1]{\ensuremath{\mathcal{#1}}}
\newcounter{myalgctr}
\newenvironment{tbox}{
\par\addvspace{0.2cm}
\begin{tcolorbox}[width=\textwidth,
                  enhanced,
                  boxsep=2pt,
                  left=1pt,
                  right=1pt,
                  top=4pt,
                  boxrule=1pt,
                  arc=0pt,
                  colback=white,
                  colframe=black,
                  unbreakable
                  ]
}{
\end{tcolorbox}
}
\newenvironment{tboxh}{
\par\addvspace{0.2cm}
\begin{tcolorbox}[width=\textwidth,
                  enhanced,
                  boxsep=2pt,
                  left=1pt,
                  right=1pt,
                  top=4pt,
                  boxrule=1pt,
                  arc=0pt,
                  colback=white,
                  colframe=black,
                  unbreakable,
                  float=t
                  ]
}{
\end{tcolorbox}
}
\newcommand{\algcomment}[1]{{\color{gray} // #1 }}
\newcommand{\tboxhrule}[0]{\vspace{0.1cm} \hrule \vspace{0.2cm}}
\newenvironment{titledtbox}[1]{\begin{tbox}#1 \tboxhrule}{\end{tbox}}
\newenvironment{titledtboxh}[1]{\begin{tboxh}#1 \tboxhrule}{\end{tboxh}}
\newenvironment{tboxalg}[2][]{\refstepcounter{myalgctr}\begin{titledtbox}{\textbf{Algorithm \themyalgctr}#1\textbf{.} #2}}{\end{titledtbox}}
\newenvironment{tboxalgh}[2][]{\refstepcounter{myalgctr}\begin{titledtboxh}{\textbf{Algorithm \themyalgctr}#1\textbf{.} #2}}{\end{titledtboxh}}
\newenvironment{highlighttechnical}[0]{
\vspace{0.1cm}
\begin{tcolorbox}[width=\textwidth,
                  enhanced,
                  boxsep=2pt,
                  left=1pt,
                  right=1pt,
                  top=4pt,
                  boxrule=0.8pt,
                  arc=0pt,
                  colback=lightgray,
                  colframe=black,
                  unbreakable
                  ]
}{
\end{tcolorbox}
}
\newcommand{\restatedesc}[3]{
\par\addvspace{0.2cm}
\noindent\textbf{#1 {\normalfont (#2)}.} {\em #3}
\par\addvspace{0.2cm}
}
\newcommand{\restate}[2]{
\restatedesc{#1}{restated}{#2}
}
\begin{document}
\maketitle

\thispagestyle{empty}
\begin{abstract}
\setlength{\parskip}{0.3em}

	Suppose that we are given an arbitrary graph $G=(V, E)$ and know that each edge in $E$ is going to be {\em realized} independently with some probability $p$. The goal in the {\em stochastic matching} problem is to pick a sparse subgraph $Q$ of $G$ such that the realized edges in $Q$, in expectation, include a matching that is approximately as large as the maximum matching among the realized edges of $G$. The maximum  degree of $Q$ can depend on $p$, but not on the size of $G$.
	
	This problem has been subject to extensive studies over the years and the approximation factor has been improved from $0.5$ \cite{blumetal,AKL16} to $0.5001$ \cite{AKL17} to $0.6568$ \cite{soda19} and eventually to $2/3$ \cite{sosa19}. In this work, we analyze a natural sampling-based algorithm and show that it can obtain all the way up to $(1-\epsilon)$ approximation, for any constant $\epsilon > 0$.
	
	A key and of possible independent interest component of our analysis is an algorithm that constructs a matching on a {\em stochastic} graph, which among some other important properties, guarantees that each vertex is matched {\em independently} from the  vertices that are sufficiently far. This allows us to bypass a previously known barrier \cite{AKL16,AKL17} towards achieving $(1-\epsilon)$ approximation based on existence of dense Ruzsa-Szemerédi graphs.
\end{abstract}
\clearpage

{

\hypersetup{
     linkcolor= black
}

\thispagestyle{empty}
\tableofcontents{}
\clearpage
}

\setcounter{page}{1}

\section{Introduction}\label{sec:intro}
{

We study the following {\em stochastic matching} problem. An arbitrary graph $G=(V, E)$ is given, then each edge $e \in E$ is  retained (or to be consistent with the literature {\em realized}) independently with some given probability $p \in (0, 1]$. The goal is to pick a subgraph $Q$ of $G$ without knowing the edge realizations such that:
\begin{enumerate}[itemsep=0pt,topsep=5pt]
	\item The expected size of the maximum matching among the realized edges of $Q$ approximates the expected size of the maximum matching among the realized edges in $G$.
	\item The maximum degree in $Q$ is bounded by a function that may depend on $p^{-1}$ but must be independent of the size of $G$.\footnote{In this paper, we solve a generalization of this problem where each edge $e$ has its own  realization probability $p_e$ and the degree of $Q$ can be proportional to $p = \min_e p_e$. See Section~\ref{sec:prelim} for the formal setting.}
\end{enumerate}
It would be useful to think of $p$ as some constant whereas $n := |V| \to \infty$. Then the second condition translates to $Q$ having $O(1)$ maximum degree. In other words, the subgraph $Q$ should provide a good approximation while having $O(n)$ edges, in contrast to $G$ which may have up to $\Omega(n^2)$ edges.

\smparagraph{Applications.} The setting is mainly motivated by applications in which the process of determining an edge realization (referred to as {\em querying} the edge) is considered time consuming or expensive. For such applications, one can instead of querying every edge of $G$, only query the edges of its much sparser subgraph $Q$ and still find a large realized matching in $G$. Kidney exchange and online labor markets are major examples of such applications. For more details on the role of the stochastic matching problem in these applications, see \cite{arXivblumetal,blumetal,AKL16,AKL17,BR18} (particularly \cite[Section~1.2]{arXivblumetal}) for kidney exchange and \cite{BR18,soda19,sagt19} for online labor markets. Another natural application of the model is that this subgraph $Q$ can be used as a {\em matching sparsifier} for $G$ which approximately preserves its maximum matching size under random edge failures \cite{sosa19}.

\smparagraph{Related work.} The problem has received significant attention \cite{blumetal,AKL16,AKL17,YM18,BR18,soda19,sosa19,sagt19} after the pioneering work of Blum~\etal{}~\cite{blumetal} who proved that it admits a $(\frac{1}{2}-\epsilon)$-approximation.  Earlier follow-up works revolved around the prevalent half-approximation barrier until it was first broken by Assadi~\etal{}~\cite{AKL16}. This was followed by a $0.6568$-approximation by Behnezhad~\etal{}~\cite{soda19} and eventually a $(\frac{2}{3}-\epsilon)$-approximation by Assadi and Bernstein \cite{sosa19} which is the state-of-the-art. See also \cite{YM18,BR18,soda19,YM19} for various natural generalizations of the problem.

\smparagraph{Our result.} In this work, we improve the approximation-factor all the way up to $(1-\epsilon)$:

\begin{highlighttechnical}
	\begin{theorem}\label{thm:main}
		For any $\epsilon > 0$, there is an algorithm that picks an $O_{\epsilon, p}(1)$-degree subgraph $Q$ of $G$ such that the expected size of the maximum realized matching in $Q$ is at least $(1-\epsilon)$ times the expected size of the maximum realized matching in $G$.
	\end{theorem}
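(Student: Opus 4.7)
The plan is to analyze a natural sampling algorithm: for a carefully chosen number $R = R(\epsilon,p)$ of iterations, independently sample a realization of $G$ and compute a matching $M_i$ on the sampled realization; output $Q = \bigcup_{i=1}^{R} M_i$. By construction the maximum degree of $Q$ is at most $R = O_{\epsilon,p}(1)$, so the degree requirement is immediate. The real content of the theorem is the approximation bound: we must show that $\E[\mu(\realized{Q})] \geq (1-\epsilon)\,\E[\mu(\realization)]$, where $\realized{Q}$ denotes a \emph{fresh} realization of $Q$ that is independent of the realizations used during sampling.

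The decisive ingredient, foreshadowed in the abstract, is a local matching procedure on stochastic graphs: I would construct an algorithm producing a matching $M(\realization)$ such that the indicator of whether a vertex $v$ is matched (and to whom) is independent of the analogous indicator for any vertex $u$ at distance more than $r = r(\epsilon,p)$ in $G$. The natural way to achieve this is to emulate a centralized near-maximum matching computation but truncate all augmenting-path searches at depth $O(1/\epsilon)$; a Hopcroft--Karp style argument shows that eliminating only augmenting paths of length at most $O(1/\epsilon)$ already yields a matching whose expected size is at least $(1-\epsilon)\,\E[\mu(\realization)]$, while truncation at bounded depth forces each vertex's output to depend solely on the edge realizations within a ball of bounded radius around it.

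Given such a locality guarantee, I would prove the approximation bound vertex-by-vertex. For each $v$ let $q_v$ be the probability that $v$ is matched by $M(\realization)$. Across the $R$ independent sampling rounds, the expected number of matchings in which $v$ participates is $R q_v$; a Chernoff bound, combined with a Lov\'asz Local Lemma or correlation-decay step to control dependencies among overlapping balls, ensures that for all but a $\poly(\epsilon)$-fraction of the vertices, $Q$ accumulates enough "witness" edges around $v$ that in the fresh realization $\realized{Q}$ one can produce a fractional matching covering $v$ with probability at least $(1-\epsilon)q_v$. Summing over vertices and using the standard $\E[\mu(\realization)] \leq \tfrac{1}{2}\sum_v q_v$ then delivers the required $(1-\epsilon)$-approximation, and integrality of the resulting fractional matching follows since it lives on a bounded-degree graph (either via rounding or an auxiliary matching argument).

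The main obstacle is the locality lemma itself. The Ruzsa--Szemer\'edi constructions of \cite{AKL16,AKL17} already rule out strategies that treat the sampled matchings as a black box once the approximation ratio exceeds a certain constant, so the proof must genuinely exploit the structural guarantees of the matching algorithm being invoked. The delicate technical points I anticipate are (i) ensuring that the bound on augmenting-path depth, and therefore the radius $r$, can be made a function of $\epsilon$ and $p$ alone (independent of $|V|$) while losing only $\epsilon$ in the matching's expected size, and (ii) pushing the far-independence property through the concentration argument so that, even though nearby balls are highly correlated, the global count of matched vertices in $\realized{Q}$ concentrates around its expectation. These two steps are exactly where the new "independence-from-far-vertices" procedure pays off.
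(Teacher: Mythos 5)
Your proposal gets the algorithm right (sample $R$ realizations, take the union of their maximum matchings) and correctly identifies the key analytical idea — a matching procedure with ``independence from far vertices'' that breaks the RS-barrier — but there are several genuine gaps between the sketch and a working proof, and the actual route in the paper is substantially different in its internals.

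First, you omit the crucial/non-crucial edge decomposition, and this is not optional. The paper partitions $E$ by $q_e := \Pr[e \in \MM{\mc{G}}]$ into crucial edges ($q_e \geq \tau_+$), non-crucial edges ($q_e \leq \tau_-$), and a thin band of ignored edges. The vertex-independent matching $Z$ is built \emph{only} on realized crucial edges, and the independence of $X_u$ and $X_v$ is required only when $d_C(u,v) \geq \lambda$ — distance in the \emph{crucial} graph $C$, not in $G$. This is essential: $C$ has bounded degree $\Delta_C \leq 1/\tau_+$, so a $\lambda$-radius ball in $C$ has $O_{\epsilon,p}(1)$ size, and the independence clause is vacuously useless otherwise — in a dense $G$ every pair of vertices is within distance two, so ``far in $G$'' never applies. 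Your statement that the truncated search ``forces each vertex's output to depend solely on edge realizations within a ball of bounded radius around it'' is only useful once you have engineered a bounded-degree graph to run the local procedure on; without the crucial/non-crucial split you have no such graph.

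Second, your truncated Hopcroft--Karp idea addresses two of the four properties needed from the local matching (near-optimal size and locality), but misses the one that makes the augmentation step work: a per-vertex upper bound on the matching probability, $\Pr[X_v] \leq c_v - \epsilon^2$ where $c_v$ is the probability that $v$ is matched via a crucial edge in $\MM{\mc{G}}$. This upper bound is what guarantees $\E[x_v]\leq 1$ when the non-crucial fractional mass $\frac{n_v}{1-\Pr[X_v]}$ is added on top. Simply eliminating short augmenting paths can increase $\Pr[X_v]$ above $c_v$ for some vertices, and then the fractional matching around those vertices overshoots. The paper's actual algorithm (Algorithm~\ref{alg:crucial}) is a recursion over $\alpha+1$ independent realizations with ``augmenting hyperwalks'' that hop between copies and a ``saturated'' vertex concept specifically designed to increase $\E[|Z|]$ without ever pushing any $\Pr[X_{v}]$ past its budget. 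That tension between properties 1 and 2 is the heart of the construction and is absent from your sketch.

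Third, the combining step is left vague. In the paper, the non-crucial edge $e=\{u,v\}$ with $d_C(u,v)\geq\lambda$ receives fractional value $\frac{f_e}{p_e(1-\Pr[X_u])(1-\Pr[X_v])}$ exactly when $e$ is realized and both endpoints are unmatched in $Z$; the denominator is set to cancel the probability of that event (which factors by independence), giving $\E[x_e] = f_e \approx q_e$. The size bound then sums up, and an explicit second-moment/Chebyshev argument (not LLL) controls the tail probability of $x_v > 1 + \epsilon$, again using the $\lambda$-independence to limit the number of correlated terms. Finally blossom inequalities of size up to $1/\epsilon$ (easy here because individual $x_e$'s on non-crucial edges are tiny) get you from a fractional to an integral matching. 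Your per-vertex ``witness edges'' argument with ``$\E[\mu(\mc{G})] \leq \frac12\sum_v q_v$'' hints at the right direction but does not yet constitute a proof without the decomposition, the probability cap, and the explicit augmentation accounting.
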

\end{highlighttechnical}

To get a $(1-\epsilon)$-approximation, the dependence of the maximum degree of $Q$ on both $\epsilon$ and $p$ is necessary. Particularly, a simple lower bound shows that even when $G$ is a clique, to avoid too many singleton vertices in a realization of $Q$, the maximum degree in $Q$ must be $\Omega(\frac{\ln \epsilon^{-1}}{p})$ \cite{AKL16}. The same lower bound also shows that a $(1-o(1))$ approximation is not achievable unless the maximum degree of $Q$ is $\omega(1)$, meaning that our approximation-factor is essentially the best one can hope for.

\begin{remark}
	The $O_{\epsilon, p}(1)$ term in Theorem~\ref{thm:main} is in the order 
	$\exp\left(\exp\left(\exp\left(O\left(\epsilon^{-1} \right)\right) \times \log \log p^{-1}\right)\right)$. We do not believe this dependence is optimal and leave it as an open problem to improve it. Particularly, we conjecture that the same algorithm that is analyzed in this work (see Algorithm~\ref{alg:sampling}) should obtain up to  $(1-\epsilon)$-approximation even by picking only a $\poly(1/\epsilon p)$-degree subgraph.
\end{remark}

\smparagraph{The algorithm.} Many different constructions of $Q$ have been studied in the literature. A well-studied algorithm first considered by Blum~\etal{}~\cite{blumetal} which was further analyzed (module minor differences and generalizations) in the subsequent works of \cite{AKL16,AKL17,BR18,YM18,YM19} is as follows: Iteratively pick a maximum matching $M_i$ from $G$, remove its edges, and finally let $Q = M_1 \cup \ldots \cup M_R$ for some parameter $R$ that controls the maximum degree in $Q$. Despite the positive results proved for this algorithm, it was already shown in \cite{blumetal} that its approximation-factor is not better than $5/6$. Thus to obtain $(1-\epsilon)$-approximation, one has to use a different algorithm.

We focus on an algorithm proposed previously by Behnezhad~\etal{}~\cite{soda19}, which they proved obtains at least a $0.6568$-approximation. The algorithm is equally simple, but subtly different: Draw $R$ independent realizations $\mc{G}_1, \ldots, \mc{G}_R$ of $G$ and let $Q = \MM{\mc{G}_1} \cup \ldots \cup \MM{\mc{G}_R}$ where $\MM{\mc{G}_i}$ is a maximum matching of $\mc{G}_i$. Our main result is obtained via providing a different analysis of this algorithm. Within the next two paragraphs, we discuss how our analysis differs substantially from the previous approaches and in particular from the analysis of \cite{soda19}.

\smparagraph{The analysis and the Ruzsa-Szemerédi barrier.} A major barrier to overcome in order to prove existence of a $(1-\epsilon)$-approximate subgraph was already discussed in the work of Assadi, Khanna, and Li \cite[Section~6]{AKL16} based on Ruzsa-Szemerédi graphs \cite{ruzsa1978triple,DBLP:conf/stoc/FischerLNRRS02,DBLP:conf/soda/GoelKK12,DBLP:conf/stoc/AlonMS12} which we henceforth call the ``RS-barrier''. Consider an extension of the stochastic matching setting where the realization of edges in a single a-priori known matching $M$ of $G$ can be correlated while other edges are still realized independently. An implication of the RS-barrier is that in this extended model, no algorithm can obtain $(1-\epsilon)$-approximation (or even beat $\frac{2}{3}$-approximation\footnote{The original proof of \cite{AKL16} rules out  $>\frac{6}{7}$-approximation. A similar instance can rule out $\frac{2}{3}$-approximation using a more efficient construction of RS-graphs  \cite{DBLP:conf/soda/GoelKK12} and allowing a subset of edges of $G$ to have realization probability 1.}) unless $Q$ has maximum degree  $n^{\Omega(1/\log\log n)} = \omega(\polylog n)$. Put differently, this proves that in order to beat $\frac{2}{3}$-approximation, the analysis has to use the fact that {\em every} edge around a vertex is realized independently. This explains why the previous arguments were short of bypassing $\frac{2}{3}$-approximation: They can all (to our knowledge) be adapted to tolerate  adversarial realization of one edge per vertex.

\smparagraph{``Vertex-independent matchings'' to the rescue.} We overview our analysis soon in Section~\ref{sec:techniques}. However, here we briefly mention our key analytical tool in bypassing the RS-barrier. It is an algorithm (Lemma~\ref{lem:independentmatching}) for constructing a matching $Z$ on the realized {\em crucial} edges (roughly, an edge is crucial if it has a sufficiently high probability of being part of an optimal realized matching). The algorithm constructs $Z$ such that among some other useful properties, it guarantees that each vertex is matched independently from all but $O(1)$ other vertices. Here the independence is with regards to both the randomization of the algorithm in constructing $Z$, and also importantly  \underline{the edge realizations of $G$}. This independence property is the key that separates the stochastic matching model from the extended model of the RS-barrier: Due to the added correlations in the edge realizations, such vertex-independent matchings essentially do not exist in the model of the RS-barrier. Using this independence, we show that $Z$ can be well-augmented by the rest of the realized edges in $Q$. See Section~\ref{sec:techniques} for a more detailed overview of our analysis and how the independence property helps.

Our method of bypassing the RS-barrier via vertex-independent matchings sheds more light on the limitations imposed by Ruzsa-Szemerédi type graphs. These graphs are known to be notoriously hard examples in various other areas such as property testing, streaming algorithms, communication complexity, and additive combinatorics among others \cite{DBLP:conf/soda/Kapralov13,DBLP:conf/soda/GoelKK12,DBLP:conf/stoc/AlonMS12,ruzsa1978triple,DBLP:conf/stoc/FischerLNRRS02,gowers2001some}. As such, we believe that this method may find applications beyond the stochastic matching problem.

\smparagraph{Organization of the paper.} In Section~\ref{sec:techniques} we provide an informal overview of our  analysis. In Section~\ref{sec:prelim} we formally state the problem and the notations used throughout the paper. In Section~\ref{sec:analysissetup} we describe the algorithm and basic definitions that we will use throughout the analysis. In Section~\ref{sec:analysisviavertexindependent} we prove how the vertex-independent matching lemma leads to a $(1-\epsilon)$-approximation and in Section~\ref{sec:independentmatching}, we prove the vertex-independent matching lemma. Finally, Section~\ref{sec:proofs} contains the proofs of (less important) statements that are deferred.

}
\section{Our Techniques}\label{sec:techniques}

As previously described, we consider the following algorithm for constructing subgraph $Q$ (see also Algorithm~\ref{alg:sampling}): Draw $R$ realizations $\mc{G}_1, \ldots, \mc{G}_R$ of graph $G$, then pick a matching $\MM{\mc{G}_i}$ from each realization, and finally set $Q = \MM{\mc{G}_1} \cup \ldots \cup \MM{\mc{G}_R}$. In this section, we give an informal overview of our analysis for this algorithm. 

Note that these realizations $\mc{G}_i$ are part of the randomization of the algorithm and may be very different from the actual realization $\mc{G}$ of $G$. In fact, in expectation, only $p$ fraction of the edges of each matching $\MM{\mc{G}_i}$ are realized in $\mc{G}$. Thus, we have to argue that the realized edges of these matchings can be used to augment each other and form a large matching in the realized subgraph $\mc{Q}$ of $Q$. In order to do this, we will give a ``procedure'' to construct a matching in $\mc{Q}$. To get a handle on the dependencies involved, the procedure carefully decides how the realization of edges in $Q$ are revealed and which are chosen to be in the matching. We emphasize that this procedure is merely an analytical tool for analyzing the approximation-factor. Thus, no matter how intricate it is, the algorithm for constructing $Q$ remains to be the simple Algorithm~\ref{alg:sampling} described above.

\smparagraph{A crucial/non-crucial decomposition.} Similar to  \cite{soda19} (and also implicitly \cite{AKL17}), we consider a partitioning of the edges of $G$ into what we call {\em crucial} and {\em non-crucial} edges. For each edge $e$, define $q_e := \Pr[e \in \MM{\mc{G}}]$ where $\MM{\cdot}$ is the same matching algorithm used to construct $Q$. We further assume that $\MM{\cdot}$ is deterministic, so the probability is taken only over the realization $\mc{G}$. For two thresholds $0 < \tau_- < \tau_+ < 1$ that we fix later, we define:
\begin{itemize}[itemsep=0pt]
	\item The crucial edges as $C := \{ e \in E \mid q_e \geq \tau_+\}$.
	\item The non-crucial edges as $N = \{ e \in E \mid q_e \leq \tau_-\}$.
\end{itemize}
Note that in the decomposition above edges $e$ with $q_e \in (\tau_-, \tau_+)$ are neither crucial nor non-crucial. We will essentially ``ignore'' these edges in the analysis but ensure that we choose $\tau_-$ and $\tau_+$ such that there are few ignored edges. 

In our procedure to construct a matching on $\mc{Q}$, we treat crucial and non-crucial edges differently. We start with the crucial edges and (in Lemma~\ref{lem:independentmatching}) construct a matching $Z$ on them whose expected size is (almost) as large as the expected number of crucial edges in the optimal maximum realized matching of $G$. We then show that this matching $Z$ can be augmented via the non-crucial edges to eventually form a matching whose expected size is arbitrarily close to $\opt := \E[|\MM{\mc{G}}|]$.

\smparagraph{The procedure for crucial edges.} In addition to the lower bound on the expected size of $Z$, we make sure that no vertex tends to be ``over-matched'' in $Z$. More formally, the probability of any vertex $v$ being matched in $Z$ should not be larger than the probability that $v$ is matched via a crucial edge in $\MM{\mc{G}}$. Both of these conditions can actually be satisfied by a very simple randomized procedure: Reveal the whole realization $\mc{C}$ of $C$, also draw a random realization $\mc{N}'$ of the non-crucial edges, and let $Z$ be the crucial edges in matching $\MM{\mc{C} \cup \mc{N}'}$. 

Unfortunately, the matching constructed via the above-mentioned procedure is hard to augment via the non-crucial edges as we have no control over the correlations. To get around this, we need an extra  ``independence'' property. Let $X_v$ be the indicator of the event that vertex $v$ is matched in $Z$. The independence property requires random variables $X_{v_1}, X_{v_2}, \ldots, X_{v_n}$ to be (almost) independent where $\{v_1, \ldots, v_n\}$ is the vertex-set of $G$. Clearly, perfect independence cannot be achieved: Given the event that a vertex $v$ is matched in $Z$, we derive that at least one of its neighbors in $C$ is also matched. What we prove can be achieved, though, is that each $X_{v}$ is independent from $X_u$ of vertices $u$ outside a small local neighborhood of $v$ in graph $C$. (See Lemma~\ref{lem:independentmatching} part~4 for the formal statement.)

In order to satisfy the independence property described above, we will not reveal the whole realization $\mc{C}$ outright and then construct $Z$ based on it as it was done in the simple procedure described above. Instead, we present a different algorithm (Algorithm~\ref{alg:crucial}) for constructing this matching $Z$. To prove the independence property, we show that this algorithm can be simulated locally. In other words, for each vertex $v$, the value of $X_v$ can be determined uniquely by having the realization of edges in a small local neighborhood of $v$. Thus, if two vertices $u$ and $v$ are sufficiently far from each other in graph $C$, then $X_v$ and $X_u$ would be independent. 

\smparagraph{Augmenting $Z$ via non-crucial edges.} We noted above that $\E[|Z|]$ is (almost) as large as the expected number of crucial edges in $\MM{\mc{G}}$. Therefore, in order to construct a matching of $\mc{Q}$ with expected size arbitrarily close to $\opt$, we have to augment $Z$ via the non-crucial edges. To do this, we only use non-crucial edges $\{u, v\}$ in $Q$ such that $X_u$ and $X_v$ are independent. Describing how exactly we construct the matching on these non-crucial edges requires a number of definitions which we give in Section~\ref{sec:constructfractional}. However, to convey the key intuition, here we only mention how and why the independence of $X_u$ and $X_v$ plays an important role in using a non-crucial edge $e = \{u, v\}$ to augment $Z$. Suppose that $\Pr[X_u] = \Pr[X_v] = 1/2$. Note that it is only when {\em both} $u$ and $v$ are unmatched in $Z$ that we can use edge $e$ to augment $Z$. If $X_u$ and $X_v$ are independent, there is a relatively large probability $(1-\Pr[X_u])(1-\Pr[X_v]) = \frac{1}{4}$ that this occurs. However, if $X_u$ and $X_v$ can be correlated, it may be the case that with probability half  $X_u = 1$ and $X_v = 0$, and with probability half $X_u = 0$ and $X_v = 1$. In this case, the probability of both $u$ and $v$ being unmatched in $Z$ would be zero and thus we would never be able to use $e$ to augment $Z$. We remark that this is precisely the type of correlation introduced in the RS-barrier of \cite{AKL16} which the independence property allows us to bypass.

\section{Preliminaries}\label{sec:prelim}

\paragraph{General notations.} We denote the maximum matching size of any graph $G$ by $\mu(G)$. For a matching $M$, we use $V(M)$ to denote the set of vertices matched in $M$. For any two nodes $u$ and $v$ in a graph $G$, we use $d_G(u, v)$ to denote their distance, i.e. the number of edges in their shortest path. Furthermore, the distance $d_G(u, e)$ between an edge $e$ and a node $u$ is the minimum distance between an endpoint of $e$ and $u$. We use $\mathbbm{1}(A)$ as the {\em indicator} of an event $A$, i.e. $\mathbbm{1}(A) = 1$ if event $A$ occurs and $\mathbbm{1}(A) = 0$ otherwise. Also, we may use $[k] := \{1, 2, \ldots, k\}$ for any integer $k \geq 1$.

Throughout the paper, we define various functions of form $x : E \to [0, 1]$ that map each edge $e \in E$ to a real number in $[0, 1]$. Having such function $x$, for any vertex $v$ we define $x_v := \sum_{e \ni v} x_e$, for any edge subset $F$ we define $x(F) := \sum_{e \in F} q_e$, and for any vertex subset $U$ we define $x(U) := \sum_{e=\{u, v\}: u, v \in U} x_e$. We also denote $|x| = \sum_e x_e$.

\smparagraph{The setting.} We consider a generalized variant of the standard stochastic matching problem studied in the literature where each edge $e$ has a realization probability $p_e$ that may be different from that of other edges. We then let $p = \min_e p_e$, which is the parameter the degree of subgraph $Q$ can depend on. This generalization will actually help in solving the original model of the literature defined in Section~\ref{sec:intro} which coincides with the case where $p_e = p$ for every edge $e$.

We denote realizations by script font; for instance, we use $\mc{G}=(V, \mc{E})$ to denote the realized subgraph of the input graph $G$, which includes each edge $e$ independently with probability $p_e$. Similarly, we use $\mc{Q}$ to denote the realized subgraph of $Q$. The same notation also naturally extends to denote realization of other subgraphs of $G$ that we may later define.

As discussed in Section~\ref{sec:intro}, the goal is to pick a sparse subgraph $Q$ of $G$ such that the ratio $\E[\mu(\mc{Q})]/\E[\mu(\mc{G})]$, known as the approximation-factor, is large. Here  the expectations are taken over the realizations $\mc{Q}$ and $\mc{G}$, and possibly the randomization of the algorithm in constructing subgraph $Q$. For brevity, we use $\opt$ to denote  $\E[\mu(\mc{G})]$. Note that $\opt$ is just a number.

We note that the {\em expected} approximation-factor defined above can automatically be turned into {\em high-probability} due to a simple concentration bound. See Appendix~\ref{sec:concentration}.

\section{The Algorithm and Basic Definitions}\label{sec:analysissetup}

The algorithm that we analyze is formally stated as Algorithm~\ref{alg:sampling}.

\begin{tboxalg}[ (\cite{soda19})]{A sampling-based non-adaptive algorithm for stochastic matching.}\label{alg:sampling}
	\textbf{Parameter:} $R$, which controls the maximum degree of $Q$.
	\vspace{0.1cm}
	
	Take $R$ realizations $\mc{G}_1, \ldots, \mc{G}_R$ of $G$ independently where each realization $\mc{G}_i$ includes each edge $e$ independently with probability $p_e$. Return subgraph $Q = \MM{\mc{G}_1} \cup \ldots \cup \MM{\mc{G}_R}$.
\end{tboxalg}

In the algorithm above, $\MM{\mc{G}_i}$ returns a maximum matching of $\mc{G}_i$. It will be convenient for the analysis to assume $\MM{\cdot}$ is a deterministic maximum matching algorithm.

In order to analyze Algorithm~\ref{alg:sampling}, we will make the following assumption which will simplify many of our arguments.

\begin{assumption}\label{ass:optlarge}
	$\opt \geq 0.1\epsilon n$.
\end{assumption}

Assumption~\ref{ass:optlarge} comes w.l.o.g. due to a reduction of Assadi~\etal{}~\cite{AKL16}. The reduction is roughly as follows: If $n \gg \opt$, randomly put nodes of $G$ into $O(\frac{\opt}{\epsilon})$ buckets and contract the nodes within each bucket. The resulting graph will have only $O(\frac{\opt}{\epsilon})$ nodes but its expected maximum realized matching will be as large as $(1-O(\epsilon))\opt$. Solving this modified graph will then solve the original graph $G$ as well. We provide further details in Appendix~\ref{app:optlarge} and note that for the reduction to work, it is important that our algorithm can handle different edge realization probabilities.

\subsection{A Crucial/Non-crucial Decomposition}

For each edge $e$ define $q_e := \Pr[e \in \MM{\mc{G}}]$ where $\MM{\cdot}$ is the same matching algorithm used in Algorithm~\ref{alg:sampling}. Since we assumed $\MM{\cdot}$ is deterministic, the probability is taken only over the randomization of the realization $\mc{G}$. Having this definition, for any vertex $v$ we denote $q_v := \sum_{e \ni v} q_e$ and for any subset $E' \subseteq E$ denote $q(E') := \sum_{e \in E'} q_e$. The following statements immediately follow from the definition:

\begin{observation}
	$q(E) = \opt$.
\end{observation}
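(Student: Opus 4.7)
The plan is to unfold the definitions and apply linearity of expectation in one step. By definition $q(E) = \sum_{e \in E} q_e$ and $q_e = \Pr[e \in \MM{\mc{G}}] = \E[\mathbbm{1}(e \in \MM{\mc{G}})]$, where the probability/expectation is over the realization $\mc{G}$ (recall $\MM{\cdot}$ is deterministic by assumption). So I would write
\[
    q(E) \;=\; \sum_{e \in E} \E[\mathbbm{1}(e \in \MM{\mc{G}})] \;=\; \E\!\left[\sum_{e \in E} \mathbbm{1}(e \in \MM{\mc{G}})\right] \;=\; \E[|\MM{\mc{G}}|],
\]
where the middle equality is linearity of expectation (the sum is finite, so no subtlety is needed). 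Since $\MM{\mc{G}}$ is a maximum matching of $\mc{G}$, we have $|\MM{\mc{G}}| = \mu(\mc{G})$, and therefore $\E[|\MM{\mc{G}}|] = \E[\mu(\mc{G})] = \opt$ by the definition of $\opt$ given in Section~\ref{sec:prelim}. This proves $q(E) = \opt$.

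There is no real obstacle here; the only thing to be a bit careful about is that $\MM{\cdot}$ must return a matching of size exactly $\mu(\cdot)$ on every input (which is guaranteed since it is specified to return a maximum matching), so the indicator-sum $\sum_e \mathbbm{1}(e \in \MM{\mc{G}})$ really does equal $\mu(\mc{G})$ pointwise rather than just in expectation.
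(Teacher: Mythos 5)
Your proof is correct and is exactly the intended argument; the paper leaves this observation unproved (stating it "immediately follows from the definition"), and your unfolding via linearity of expectation together with $|\MM{\mc{G}}| = \mu(\mc{G})$ is precisely that immediate argument spelled out.
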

\begin{observation}\label{obs:qv}
	For any vertex $v$, $q_v$ denotes the probability that $v$ is matched in $\MM{\mc{G}}$.
\end{observation}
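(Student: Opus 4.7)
The plan is very short because the statement is essentially a tautology from the definitions, and the only subtlety is recognizing that the events being summed are disjoint.

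First I would unpack the two definitions. By definition $q_v = \sum_{e \ni v} q_e = \sum_{e \ni v} \Pr[e \in \MM{\mc{G}}]$, so the claim reduces to verifying that
\[
\Pr\bigl[v \in V(\MM{\mc{G}})\bigr] \;=\; \sum_{e \ni v} \Pr\bigl[e \in \MM{\mc{G}}\bigr].
\]
The event ``$v \in V(\MM{\mc{G}})$'' is exactly the event that some edge of $\MM{\mc{G}}$ is incident to $v$, i.e.\ the union $\bigcup_{e \ni v} \{e \in \MM{\mc{G}}\}$.

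Next I would invoke the key structural fact: $\MM{\mc{G}}$ is a matching, so at most one edge incident to any given vertex $v$ can belong to $\MM{\mc{G}}$. Hence the events $\{e \in \MM{\mc{G}}\}$ for $e \ni v$ are pairwise disjoint (as subsets of the underlying probability space on realizations $\mc{G}$, noting that $\MM{\cdot}$ is deterministic). Additivity of probability over a disjoint union then gives
\[
\Pr\bigl[v \in V(\MM{\mc{G}})\bigr] = \sum_{e \ni v} \Pr[e \in \MM{\mc{G}}] = \sum_{e \ni v} q_e = q_v,
\]
which is exactly the observation. There is no real obstacle here; the only point worth flagging is the determinism assumption on $\MM{\cdot}$, which ensures that ``$e \in \MM{\mc{G}}$'' is a well-defined event depending only on the realization $\mc{G}$, so the probabilities really are taken over a single underlying space and the disjointness argument applies.
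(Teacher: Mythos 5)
Your proof is correct and is exactly the argument the paper leaves implicit (it states the observation as following immediately from the definitions): the key step is disjointness of the events $\{e \in \MM{\mc{G}}\}$ over edges $e \ni v$, which holds because $\MM{\mc{G}}$ is a matching. Nothing is missing, and your note that determinism of $\MM{\cdot}$ makes these well-defined events on the realization space is a fair point of care.
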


We will fix two thresholds $0 < \tau_- < \tau_+ < 1$ that both depend only on $\epsilon$ and $p$. Next, for any edge $e$, we say $e$ is {\em crucial} if $q_e \geq \tau_+$,  {\em non-crucial} if $q_e \leq \tau_-$, and {\em ignored} if $q_e \in (\tau_-, \tau_+)$. We denote the crucial edges by $C := \{ e \in E \mid \text{$e$ is crucial} \}$, and the non-crucial edges by $N := \{ e \in E \mid \text{$e$ is non-crucial} \}$. Furthermore, we denote their realizations by $\mc{C} := C \cap \mc{E}$ and $\mc{N} := N \cap \mc{E}$. When confusion is impossible, we may use $C$ to denote graph $(V, C)$ instead of merely the edge-subset. The same also naturally generalizes to $N$, $\mc{C}$, and $\mc{N}$. We will further use $\Delta_C$ to denote the maximum degree in graph $C$. Moreover, for any vertex $v$ we use $c_v$ (resp. $n_v$) to  denote the probability that $v$ is matched via a crucial (resp. non-crucial) edge in $\MM{\mc{G}}$.

\begin{observation}\label{obs:crucialdegree}
	$\Delta_C \leq 1/\tau_+$.
\end{observation}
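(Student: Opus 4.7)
The plan is a one-line counting argument using the two preceding observations. First I would fix an arbitrary vertex $v$ and look at its crucial incident edges. By definition of $C$, every crucial edge $e \ni v$ satisfies $q_e \geq \tau_+$, so if $\deg_C(v)$ denotes the degree of $v$ in the crucial graph, then
\[
q_v \;=\; \sum_{e \ni v} q_e \;\geq\; \sum_{e \ni v,\, e \in C} q_e \;\geq\; \deg_C(v)\cdot \tau_+.
\]

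Next I would invoke Observation~\ref{obs:qv}, which identifies $q_v$ with the probability that $v$ is matched in $\MM{\mc{G}}$. Since any event has probability at most $1$, we get $q_v \leq 1$, and combining with the inequality above yields $\deg_C(v) \leq 1/\tau_+$. Taking the maximum over all $v \in V$ gives $\Delta_C \leq 1/\tau_+$, as claimed.

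There is really no obstacle here; the only thing to be careful about is not to double-count (each edge contributes once to the sum $\sum_{e \ni v} q_e$ for its endpoint $v$), and to remember that the $q_e$ values are probabilities of a specific event defined on a single matching $\MM{\mc{G}}$, so their sum over edges incident to any fixed vertex is bounded by the probability that $v$ is matched, hence by $1$.
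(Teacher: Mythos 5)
Your proof is correct and is essentially identical to the paper's, just phrased directly rather than by contradiction: both lower-bound $q_v$ by $\deg_C(v)\cdot\tau_+$ and cap $q_v$ at $1$ via Observation~\ref{obs:qv}. No gaps.
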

\begin{proof}
	Each edge $e \in C$ has $q_e \geq \tau_+$ by definition. Thus, if there is a vertex $v$ of degree larger than $1/\tau_+$ in $C$, then it should hold that $q_v > 1/\tau_+ \times \tau_+ = 1$ which contradicts Observation~\ref{obs:qv}.
\end{proof}

\subsection{Setting the Thresholds $\tau_-$ and $\tau_+$}

To describe how we set the values of $\tau_-$ and $\tau_+$, we state a lemma that we prove in Section~\ref{sec:proofs}.

\begin{lemma}\label{lem:gap}
	Fix any arbitrary function $f(x)$ such that $0 < f(x) < x$ for any $0 < x < 1$. There is a choice of $0 < \tau_- < \tau_+ < 1$ such that: (1) $\tau_- = f(\tau_+)$. (2) $q(N) + q(C) \geq (1-\epsilon)\opt$. (3) Both $\tau_-$ and $\tau_+$ depend only on $\epsilon$ and $p$. And finally, (4) $\tau_+ \leq (\epsilon p)^{50}$.
\end{lemma}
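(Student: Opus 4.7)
The approach is a pigeonhole argument over a nested family of candidate threshold pairs, chosen so that the ``ignored'' intervals are pairwise disjoint and each $\tau_+$ candidate lies below $(\epsilon p)^{50}$. Since $q(E)=\opt$, only a $1/k$-fraction of the candidates can each carry $q$-mass exceeding $\opt/k$, so one of them must give $q(I) \le \epsilon\,\opt$ where $I$ is its set of ignored edges.

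Concretely, I would set $\tau_+^{(1)} := (\epsilon p)^{50}$ and define the rest of the sequence inductively by
\[
\tau_-^{(i)} \;:=\; f\!\bigl(\tau_+^{(i)}\bigr), \qquad \tau_+^{(i+1)} \;:=\; \tau_-^{(i)}.
\]
The assumption $0 < f(x) < x$ guarantees $\tau_-^{(i)} < \tau_+^{(i)}$, so each open interval $J_i := (\tau_-^{(i)}, \tau_+^{(i)})$ is non-empty, and the construction glues them end-to-end so the sequence $\tau_+^{(1)} > \tau_-^{(1)} = \tau_+^{(2)} > \tau_-^{(2)} = \tau_+^{(3)} > \cdots$ is strictly decreasing. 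Because $J_{i+1}$ sits entirely below $\tau_+^{(i+1)} = \tau_-^{(i)}$ and $J_i$ sits entirely above $\tau_-^{(i)}$, the intervals $J_i$ are pairwise disjoint, and therefore so are the edge sets $I_i := \{e \in E : q_e \in J_i\}$.

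Now take $k := \lceil 1/\epsilon \rceil$ and look at $I_1,\dots,I_k$. Disjointness gives $\sum_{i=1}^{k} q(I_i) \le q(E) = \opt$, so by averaging there is some $i^\star \in [k]$ with $q(I_{i^\star}) \le \opt/k \le \epsilon\,\opt$. Set $\tau_+ := \tau_+^{(i^\star)}$ and $\tau_- := \tau_-^{(i^\star)}$. Then (1) holds by construction, (2) follows from $q(C) + q(N) = q(E) - q(I_{i^\star}) \ge (1-\epsilon)\opt$, and (4) holds because the $\tau_+^{(i)}$ are non-increasing in $i$ with $\tau_+^{(1)} = (\epsilon p)^{50}$. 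For (3), note that the finite list of candidate pairs $\{(\tau_-^{(i)},\tau_+^{(i)}) : i \in [k]\}$ is determined entirely by $\epsilon$, $p$, and the fixed function $f$, independent of $G$; only the index $i^\star$ depends on the instance, so each individual threshold value is drawn from a set whose description depends only on $\epsilon$ and $p$.

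There is no real obstacle here: the only subtle point is choosing the recurrence $\tau_+^{(i+1)}=\tau_-^{(i)}$ so that the open ignored-intervals $J_i$ are disjoint (even though the endpoints touch, the strict inequalities in the definitions of $C$, $N$, and ignored edges keep the $I_i$ disjoint). The fact that $f$ is arbitrary — and could in principle make the sequence collapse extremely quickly — is harmless because the argument only uses $0<f(x)<x$, and the starting value $(\epsilon p)^{50}$ is fixed before the recursion begins, so condition (4) is automatic.
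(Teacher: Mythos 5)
Your proposal is correct and follows essentially the same approach as the paper: iterate $f$ starting from $(\epsilon p)^{50}$ to get a nested sequence of candidate threshold pairs, note the resulting ignored-mass quantities are summable to at most $\opt$ because the ignored intervals are disjoint, and apply pigeonhole to find a pair with ignored mass at most $\epsilon\opt$. The paper phrases this as ``take the smallest index $j$ with $q_j \le \epsilon\opt$ and bound $j \le \lceil 1/\epsilon\rceil + 1$ by contradiction,'' while you phrase it as a direct averaging argument over the first $\lceil 1/\epsilon\rceil$ candidates; these are interchangeable.
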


The lemma above essentially shows that we can have any desirably large gap between $\tau_+$ and $\tau_-$ and still ensure that $q(N)+q(C) \geq (1-\epsilon)\opt$. That is, the ignored edges in expectation constitute at most $\epsilon \opt$ edges of  $\MM{\mc{G}}$. While this may sound counter-intuitive, it follows roughly speaking from the fact that by iteratively reducing the threshold $\tau_+$ by a sufficient amount, all the previously ignored edges become crucial. Thus it cannot continue to hold that there are still a significant mass of the matching on the ignored edges after sufficiently many iterations. See Section~\ref{sec:proofs} for the proof.

Having Lemma~\ref{lem:gap}, we set our thresholds and the parameter $R$ of Algorithm~\ref{alg:sampling} as follows:

\begin{highlighttechnical}
	\textbf{Setting $\tau_-, \tau_+,$ and $R$:}
	
	\vspace{0.4cm}
	
	Define function $f(x) := x^{10g(x)}$ where $g(x) := \epsilon^{-20}\log\frac{1}{x}$.
	
	\smallskip
	
	We plug this function $f$ into Lemma~\ref{lem:gap} and define $\tau_-$ and $\tau_+$ accordingly. We also set $R = \frac{1}{2\tau_-}$.
\end{highlighttechnical}

Note that function $f$ as defined above satisfies $0 < f(x) < x$ for any $0 < x < 1$ since clearly $g(x) \geq 1$ so long as $0 < x < 1$. Therefore, we can indeed plug $f$ into Lemma~\ref{lem:gap}. This results in the following properties:

\begin{corollary}\label{cor:thresholds}
	It holds that: (1) $\tau_- = (\tau_+)^{10g}$ where $g = \epsilon^{-20}\log\frac{1}{\tau_+}$. (2) $q(N) + q(C) \geq (1-\epsilon)\opt$. (3) Both $\tau_-$ and $\tau_+$ depend only on $\epsilon$ and $p$ and thus $R=O_{\epsilon, p}(1)$. (4) $\tau_- < \tau_+ \leq (\epsilon p)^{50}$.
\end{corollary}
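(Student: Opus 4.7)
The plan is to apply Lemma~\ref{lem:gap} with the specific function $f(x) := x^{10g(x)}$ fixed just above, and then read off the four conclusions. So the only genuine work is to verify the hypothesis of Lemma~\ref{lem:gap} for this $f$; once that is done, every item of the corollary follows essentially by substitution.

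First I would check that $f(x) = x^{10g(x)}$ with $g(x) = \epsilon^{-20}\log(1/x)$ satisfies $0 < f(x) < x$ for every $0 < x < 1$. For such $x$ we have $\log(1/x) > 0$ and $\epsilon^{-20} \geq 1$, so the exponent $10 g(x) = 10\epsilon^{-20}\log(1/x)$ is strictly greater than $1$. Since the base $x$ lies in $(0,1)$, raising it to a power exceeding $1$ strictly decreases it while keeping it positive, giving $0 < f(x) < x$ uniformly on $(0,1)$. Consequently Lemma~\ref{lem:gap} supplies thresholds $0 < \tau_- < \tau_+ < 1$ with the four listed properties.

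With those thresholds in hand, each claim of the corollary is a direct translation. Claim~(1) is the definition $\tau_- = f(\tau_+) = (\tau_+)^{10 g(\tau_+)}$, with $g := g(\tau_+) = \epsilon^{-20}\log(1/\tau_+)$. Claim~(2) is part~(2) of Lemma~\ref{lem:gap} verbatim. For claim~(3), part~(3) of the lemma says $\tau_-$ and $\tau_+$ depend only on $\epsilon$ and $p$; since $R = 1/(2\tau_-)$ is determined solely by $\tau_-$, we conclude $R = O_{\epsilon,p}(1)$. Finally, claim~(4) combines the bound $\tau_+ \leq (\epsilon p)^{50}$ from part~(4) of the lemma with the strict inequality $\tau_- = f(\tau_+) < \tau_+$ already established via admissibility.

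There is essentially no obstacle internal to the corollary; all the substance is buried in Lemma~\ref{lem:gap}. The only subtle point worth flagging is that the exponent $10g(x)$ must remain strictly above $1$ on the entire interval $(0,1)$, which the admissibility calculation above confirms and which is precisely what makes $f$ a legal input to the lemma.
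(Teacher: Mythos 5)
Your overall approach matches the paper's: verify that $f(x)=x^{10g(x)}$ satisfies the hypothesis of Lemma~\ref{lem:gap} and then read off the four conclusions (item (4) additionally uses $\tau_-<\tau_+$ from Lemma~\ref{lem:gap} and item (3) uses $R=1/(2\tau_-)$).

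However, your admissibility check contains a gap, one that the paper's own prose shares. From $\log(1/x)>0$ and $\epsilon^{-20}\ge 1$ it does not follow that $10\epsilon^{-20}\log(1/x)>1$ \emph{for all} $x\in(0,1)$: as $x\to 1^-$, $\log(1/x)\to 0^+$, so for $x$ sufficiently close to $1$ the exponent $10g(x)$ drops below $1$ and $f(x)=x^{10g(x)}>x$. Thus the hypothesis of Lemma~\ref{lem:gap} as stated ($0<f(x)<x$ on all of $(0,1)$) is in fact violated by this $f$. The argument still goes through because the proof of Lemma~\ref{lem:gap} only applies $f$ along the decreasing sequence $t_0=(\epsilon p)^{50}>t_1>t_2>\cdots$, and for any $x\le(\epsilon p)^{50}$ one has $\log(1/x)\ge 50\log\tfrac{1}{\epsilon p}\ge 1$, hence $g(x)\ge\epsilon^{-20}\ge 1$ and so $10g(x)>1$ and $f(x)<x$. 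So either strengthen your estimate by restricting to $x\le(\epsilon p)^{50}$ and noting Lemma~\ref{lem:gap} only needs $f$ there, or replace $g$ by $\max\{1,\epsilon^{-20}\log(1/x)\}$ so that the hypothesis holds literally on $(0,1)$; as written, the inference "exponent $>1$" is unjustified.
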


The next lemma shows that $R$ is set such that Algorithm~\ref{alg:sampling}  samples (almost) all crucial edges.

\begin{observation}\label{obs:samplealmostallcrucial}
	For every edge $e \in C$, $\Pr[e \in Q] \geq 1-\epsilon$.
\end{observation}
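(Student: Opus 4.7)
The plan is to bound the probability that a crucial edge is missed across all $R$ independently drawn realizations. First I would observe that each $\mc{G}_i$ has the same distribution as $\mc{G}$, and since $\MM{\cdot}$ is a deterministic function of its input, $\Pr[e \in \MM{\mc{G}_i}] = q_e$ for every $i \in [R]$. Because the realizations $\mc{G}_1, \ldots, \mc{G}_R$ are mutually independent, the events $\{e \in \MM{\mc{G}_i}\}$ are independent across $i$, which gives
\[
\Pr[e \notin Q] \;=\; \prod_{i=1}^{R} \Pr[e \notin \MM{\mc{G}_i}] \;=\; (1-q_e)^{R}.
\]
For $e \in C$ we have $q_e \geq \tau_+$, so using $1-x \leq e^{-x}$ this is at most $e^{-R \tau_+}$.

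Next I would translate this into the desired bound by plugging in the chosen values of $\tau_-, \tau_+, R$ from Corollary~\ref{cor:thresholds}. Since $R = 1/(2\tau_-)$ and $\tau_- = (\tau_+)^{10g}$ with $g = \epsilon^{-20}\log(1/\tau_+)$, we have
\[
R\tau_+ \;=\; \frac{\tau_+}{2\tau_-} \;=\; \frac{1}{2}\left(\frac{1}{\tau_+}\right)^{10g-1}.
\]
Using $\tau_+ \leq (\epsilon p)^{50}$ from Corollary~\ref{cor:thresholds}(4), one immediately has $\log(1/\tau_+) \geq 50\log(1/(\epsilon p))$, so $g \geq 1$ and $10g-1 \geq g$. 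Therefore
\[
\log(R\tau_+) \;\geq\; g\log(1/\tau_+) - \log 2 \;=\; \epsilon^{-20}(\log(1/\tau_+))^{2} - \log 2 \;\geq\; \log\log(1/\epsilon),
\]
which is far more than enough to conclude $R\tau_+ \geq \log(1/\epsilon)$, and hence $\Pr[e \notin Q] \leq e^{-R\tau_+} \leq \epsilon$.

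There is no real obstacle here: the observation is essentially a direct union-bound-style calculation, and the only mildly non-trivial step is verifying that the chosen gap between $\tau_+$ and $\tau_-$ is wide enough that $R = 1/(2\tau_-)$ independent samples suffice to hit any edge whose inclusion probability in a single sampled matching is at least $\tau_+$. Given the generous polynomial-versus-super-polynomial gap forced by $g = \epsilon^{-20}\log(1/\tau_+)$, this step goes through with a great deal of slack.
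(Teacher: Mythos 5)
Your proof is correct and takes essentially the same route as the paper's: both compute $\Pr[e\notin Q] = (1-q_e)^R$, lower-bound $q_e$ by $\tau_+$, and then verify that the chosen gap between $\tau_-$ and $\tau_+$ makes $R\tau_+ \geq \ln(1/\epsilon)$. The paper simply asserts $R = \frac{1}{2\tau_-} > \ln\epsilon^{-1}/\tau_+$ ``follows easily'' from Corollary~\ref{cor:thresholds}(1), whereas you carry out that arithmetic explicitly; there is no substantive difference.
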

\begin{proof}
	Note that $e \in Q$ if there is at least one $i \in [R]$ where $e \in \MM{\mc{G}_i}$. The probability that $e \in \MM{\mc{G}_i}$ for any fixed $i$ is precisely $q_e$. Since realizations $\mc{G}_1, \ldots, \mc{G}_R$ are independent, it holds that $\Pr[e \not\in Q] = (1-q_e)^{R}$. On the other hand $q_e \geq \tau_+$ since $e$ is crucial. Also $R = \frac{1}{2\tau_-} > \ln \epsilon^{-1}/\tau_+$ where the latter inequality follows easily from Corrolary~\ref{cor:thresholds} part (1). Combining all of these gives: 
	$$
	\Pr[e \not\in Q] = (1-q_e)^R < (1-\tau_+)^{\ln \epsilon^{-1}/\tau_+} < e^{-\ln \epsilon^{-1}} = \epsilon.
	$$
	Therefore indeed $\Pr[e \in Q] \geq 1-\epsilon$.
\end{proof}

\subsection{The Vertex-Independent Matching Lemma}

As discussed before, a key technical contribution of this work that allows getting an arbitrary good approximation-factor is a ``vertex-independent matching'' lemma that we state here. The proof of this lemma is involved and thus we defer it to Section~\ref{sec:independentmatching}. In Section~\ref{sec:analysisviavertexindependent}, we show how Lemma~\ref{lem:independentmatching} can be used to analyze Algorithm~\ref{alg:sampling} and prove Theorem~\ref{thm:main}.

\newcommand{\independentmatching}[0]{There is a randomized algorithm that constructs an integral matching $Z$ of $\mc{C}$ (the realized subgraph of $C$) such that defining $X_v$ as the indicator random variable for $v \in V(Z)$, we get:
	\begin{enumerate}[itemsep=0.2pt,topsep=5pt]
		\item $\E[|Z|] \geq q(C) - 30\epsilon\opt$.
		\item For every vertex $v$, $\Pr[X_v] \leq \max\{c_v - \epsilon^2, 0\}$, where recall that $c_v$ is the probability that vertex $v$ is matched via a crucial edge in $\MM{\mc{G}}$.
		\item The matching $Z$ is independent of the realization of non-crucial edges in $G$.
		\item Let $\lambda := \epsilon^{-20}\log\Delta_C$. For every $k$ and every $\{v_1, v_2, \ldots, v_k \} \subseteq V$ such that $d_C(v_i, v_j) \geq \lambda$ for all $v_i \not= v_j$, random variables $X_{v_1}, \ldots, X_{v_k}$ are independent.
	\end{enumerate}
We emphasize that $\E[|Z|]$ and $X_v$ are both defined with respect to the randomizations in both the realization of $C$, and the randomization of the algorithm in constructing $Z$.
	}
\begin{lemma}[Vertex-Independent Matching Lemma]\label{lem:independentmatching}
	\independentmatching{}
\end{lemma}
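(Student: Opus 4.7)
The plan is to produce $Z$ as the output of an explicit local randomized procedure on $C$ whose dependency radius around any vertex is strictly less than $\lambda/2$, aimed at a carefully constructed fractional matching on $C$. First I would build a fractional matching $y$ on $C$ by starting from $y^0_e := q_e$ for $e \in C$, for which $y^0_v = c_v$ and $|y^0| = q(C)$, and then shaving an additive $\epsilon^2$ off each vertex. Concretely, for each vertex $v$ with $c_v > \epsilon^2$ I scale incident weights by $(c_v - \epsilon^2)/c_v$; for each $v$ with $c_v \le \epsilon^2$ I zero out its incident weights; and each edge's final weight is the minimum of the two endpoint scaling factors times $y^0_e$. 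The resulting $y$ satisfies $y_v \le \max\{c_v - \epsilon^2, 0\}$ at every $v$, and the total loss in $|y|$ is at most $\tfrac12 \epsilon^2 n$; Assumption~\ref{ass:optlarge} ($n \le 10\opt/\epsilon$) then gives $|y| \ge q(C) - 5\epsilon\opt$.

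Each crucial edge $e$ carries three independent random bits: its realization $R_e \sim \mathrm{Bern}(p_e)$, and uniform priorities $\pi_e, \sigma_e \in [0,1]$. Call $e$ a \emph{candidate} iff $R_e = 1$ and $\pi_e \le y_e/p_e$; since $y_e \le q_e \le p_e$, this occurs with probability exactly $y_e$, independently across edges. Let $H$ be the candidate graph. Extract $Z$ from $H$ by the classical ranking-greedy rule in increasing $\sigma_e$ order (include $e$ iff both endpoints are still unmatched when $e$ is processed), implemented as the standard bounded-depth recursion: deciding $e$ calls recursively on each adjacent candidate of smaller $\sigma$. Truncate the recursion at depth $r := \lfloor(\lambda-1)/2\rfloor$, declaring any undecided edge \emph{out of} $Z$. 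Property~(3) is immediate because the procedure never reads a non-crucial edge. Property~(4) follows because truncation makes $X_v$ a deterministic function of $\{(R_e, \pi_e, \sigma_e) : d_C(v,e) < r\}$, and radius-$r$ balls are disjoint for vertex pairs at $C$-distance $\ge \lambda > 2r$, so disjoint input sets give joint independence. Property~(2) is a union bound: $\Pr[X_v = 1] \le \Pr[\exists\, e \ni v \text{ with } e \in H] \le \sum_{e \ni v} y_e = y_v \le \max\{c_v - \epsilon^2, 0\}$.

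The real work is property~(1). I would decompose the loss relative to $q(C)$ into (a) the fractional gap $|y^0| - |y| \le 5\epsilon\opt$ from the previous paragraph, (b) the rounding loss between $|y|$ and the expected untruncated matching size on $H$, and (c) the extra loss from cutting off the recursion at depth $r$. For (b), the per-vertex $\epsilon^2$ slack is what makes random-greedy rounding near-optimal: since $\E[\deg_H(v)] = y_v \le 1 - \epsilon^2$ and $\Delta_C$ is constant, a conditional-expectation analysis in the spirit of Karp--Sipser / ranking on bounded-degree graphs gives untruncated $\E[|Z|] \ge (1 - O(\epsilon))|y|$. For (c), I would prove an exponential tail bound of the form $\Pr[\text{the influence chain from } e \text{ reaches depth } t] \le \Delta_C^{O(1)} \exp(-\Omega(t/\log\Delta_C))$ by an induction on $t$ that counts descending-$\sigma$ walks in $C$ weighted by candidate probabilities; plugging in $r = \Theta(\epsilon^{-20}\log\Delta_C)$ drives this tail well below $\epsilon\opt$ even after multiplying by $|y| \le n$. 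Combining (a)--(c) yields $\E[|Z|] \ge q(C) - 30\epsilon\opt$.

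The main obstacle I expect to fight is the exponent accounting in step~(c): making the exponential rate $1/\log\Delta_C$ of the truncation tail, the polynomial prefactor $\Delta_C^{O(1)}$, and the $\epsilon^2$ slack budget all line up so that the three losses combine to at most $30\epsilon\opt$. The particular choice $\lambda = \epsilon^{-20}\log\Delta_C$ in the lemma statement essentially encodes this calibration, and getting the exponents exactly in line --- rather than off by a power of $\epsilon$ or $\log\Delta_C$ --- is, in my view, the delicate technical step of the proof, and presumably the reason that the analysis in Section~\ref{sec:independentmatching} is substantially longer than the local-algorithm construction itself.
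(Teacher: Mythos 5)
Your high-level strategy — a local randomized procedure on $\mc{C}$ with bounded dependency radius, verifying property~(4) via disjoint local neighborhoods and property~(3) by never touching non-crucial edges — does mirror the paper's use of the \local{} model, and your union-bound argument for property~(2) is fine. But there is a genuine gap in your step~(b) that is not a matter of ``exponent accounting'': it is a structural obstruction, and it means your procedure cannot satisfy property~(1) at all.

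Your procedure forms a random subgraph $H$ of $C$ in which each edge $e$ is present independently with probability $y_e$, and then extracts a matching from $H$. Whatever extraction rule you use (greedy by $\sigma$-rank, truncated or not), the expected output size is at most $\E[\mu(H)]$, and $\E[\mu(H)]$ can be bounded away from $|y|$ by a constant factor even when $y_v \le 1-\epsilon^2$ for every $v$. A two-edge path $u$--$v$--$w$ already shows this: with $y_{uv} = y_{vw} = (1-\epsilon^2)/2$ we have $|y| = 1-\epsilon^2$, but $\mu(H) \le 1$, so $\E[\mu(H)] = 1 - (1 - (1-\epsilon^2)/2)^2 \to 3/4$ as $\epsilon\to 0$; stars with more leaves push this toward $1-1/e$. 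The per-vertex $\epsilon^2$ slack you introduce buys you property~(2) (it caps $\Pr[X_v]$), but it does nothing to close the constant-factor rounding gap: the loss comes from the fact that a vertex's incident candidate set is a sum of Bernoullis with mean $<1$, which still has constant variance and hence a constant probability of collisions. Bounded $\Delta_C$ does not help either, since the two-edge example has degree $2$. In short, there is no calibration of $\lambda$, $\tau_\pm$, or the depth $r$ that makes ``independent-per-edge sampling followed by greedy'' achieve $q(C)-O(\epsilon)\opt$; the plan fails before truncation even enters the picture.

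The paper avoids this by never committing to a single independently sampled subgraph. Its $\findmatching{r}{\mc{C}}$ procedure (Algorithm~3) runs itself recursively on $\alpha+1 = \epsilon^{-7}$ realizations $\mc{C}_0,\dots,\mc{C}_\alpha$ (with $\mc{C}_0=\mc{C}$), then searches for short \emph{augmenting hyperwalks} across these realizations whose endpoints are \emph{unsaturated} vertices (those whose current matching probability is still below $c_v - 2\epsilon^2$), and applies an $(1-\epsilon)$-approximate MIS of non-conflicting such hyperwalks. The key structural leverage — which your construction discards — is that the marginals $q_e$ come from an actual integral matching distribution ($\MM{\mc{C}_i\cup\mc{N}_i}$), so ``green/red'' alternating hyperwalks exist in quantity whenever the current matching falls short of $q(C)-30\epsilon\opt$ (Lemma~6.7). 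Iterating $t=\epsilon^{-9}$ times grows $\E[|Z|]$ to the target while the saturated/unsaturated bookkeeping keeps each $\Pr[X_v]$ below $c_v - \epsilon^2$ (Section~6.4), and the recursion has a \local{} implementation in $O(t\epsilon^{-4}\log\Delta_C) < \lambda/2$ rounds, which gives property~(4). To salvage your write-up you would need to replace step~(b) entirely with some mechanism that iteratively repairs near-collisions rather than tolerating them, which is essentially what the augmenting-hyperwalk machinery does.
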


\begin{observation}\label{obs:ggtlambda}
	Let $g$ be as defined in Corollary~\ref{cor:thresholds} and $\lambda$ be as defined in Lemma~\ref{lem:independentmatching}. Then it holds that $g \geq \lambda$.
\end{observation}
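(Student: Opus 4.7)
The plan is to observe that this is essentially a one-line consequence of Observation~\ref{obs:crucialdegree} together with the monotonicity of the logarithm. Recall from Corollary~\ref{cor:thresholds} that $g = \epsilon^{-20}\log\frac{1}{\tau_+}$, and from Lemma~\ref{lem:independentmatching} that $\lambda = \epsilon^{-20}\log \Delta_C$. Since these two quantities have an identical prefactor $\epsilon^{-20}$ and the logarithm is monotone, it suffices to compare $\Delta_C$ with $1/\tau_+$.

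First I would invoke Observation~\ref{obs:crucialdegree}, which states $\Delta_C \leq 1/\tau_+$. Taking logarithms of both sides gives $\log \Delta_C \leq \log(1/\tau_+)$, and multiplying through by the positive quantity $\epsilon^{-20}$ preserves the inequality, yielding
\[
\lambda \;=\; \epsilon^{-20}\log \Delta_C \;\leq\; \epsilon^{-20}\log\tfrac{1}{\tau_+} \;=\; g,
\]
which is exactly the claim.

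There is no real obstacle here; the only thing to be careful about is ensuring that $\log\Delta_C$ is nonnegative (so that multiplying by $\epsilon^{-20} > 0$ is unambiguous), but this is fine provided $\Delta_C \geq 1$, which holds trivially whenever $C$ contains at least one edge. If $C$ is empty then $\lambda = 0$ and the inequality is vacuous since $g > 0$ by Corollary~\ref{cor:thresholds} part~(4) (which gives $\tau_+ < 1$). Hence the observation follows immediately.
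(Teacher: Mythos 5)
Your proof is correct and matches the paper's argument exactly: both plug $\Delta_C \leq 1/\tau_+$ (Observation~\ref{obs:crucialdegree}) into the definitions of $\lambda$ and $g$ and compare by monotonicity of $\log$. The extra remark about the edge case $\Delta_C < 1$ is a harmless precaution the paper elides.
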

\begin{proof}
	Since $\lambda = \epsilon^{-20}\log \Delta_C$ by definition and $\Delta_C \leq 1/\tau_+$ by Observation~\ref{obs:crucialdegree}, we get that $\lambda \leq \epsilon^{-20}\log \frac{1}{\tau_+}$. On the other hand $g = \epsilon^{-20}\log \frac{1}{\tau_+}$. Therefore, $g \geq \lambda$.
\end{proof}
\section{The Analysis via the Vertex-Independent Matching Lemma}\label{sec:analysisviavertexindependent}

In this section, given correctness of Lemma~\ref{lem:independentmatching}, we prove Theorem~\ref{thm:main}. In what follows we give the outline of the proof by referring to the needed lemmas that will be proved in subsequent Sections~\ref{sec:constructfractional}, \ref{sec:validityofx}, \ref{sec:analysisfractional}, and \ref{sec:turntofracmatching}.

\smparagraph{Proof Outline for Theorem~\ref{thm:main}.} 
	Let $Q$ be the output of by Algorithm~\ref{alg:sampling} where parameter $R$ is set as described above. We show that one can construct a matching of expected size at least $(1-56\epsilon)\opt$ on the realized subgraph $\mc{Q}$ of $Q$. This implies that $\E[\mu(\mc{Q})] \geq (1-56\epsilon)\opt = (1-56\epsilon)\E[\mu(\mc{G})]$. In other words, this proves that the approximation-factor of the algorithm is at least $(1-56\epsilon)$. (Note this is equivalent to $(1-\epsilon)$ approximation  since one can choose $\epsilon$ to be any desirably small constant.)
	
	In order to construct a matching of expected size at least $(1-56\epsilon)\opt$ on $\mc{Q}$, we first describe how to construct an ``expected fractional matching'' (see Definition~\ref{def:expfractional}) $x$ on $\mc{Q}$ in Sections~\ref{sec:constructfractional}, \ref{sec:validityofx}, and \ref{sec:analysisfractional}. Later on, we show in Section~\ref{sec:turntofracmatching} how to turn $x$ into a  fractional matching $y$ on $\mc{Q}$ such that $\E[|y|] \geq (1-55\epsilon)\opt$ (see Lemma~\ref{lem:ylarge}). Finally, to turn $y$ into an {\em integral} matching, we show (Observation~\ref{obs:yblossom}) that the so called ``blossom inequalities'' of size up to $1/\epsilon$ also hold for $y$. That is, we show that for all vertex subsets $U \subseteq V$ with $|U| \leq 1/\epsilon$, we have $y(U) \leq \lfloor \frac{|U|}{2}\rfloor$. By Edmond's celebrated theorem \cite{edmonds1965maximum,schrijver2003combinatorial} on the matching polytope, this means that there is an integral matching of size at least $\frac{1}{1+\epsilon}|y| \geq (1-\epsilon)|y|$ in \mc{Q}. As described, $\E[|y|] \geq (1-55\epsilon)\opt$, thus indeed $\E[\mu(\mc{Q})] \geq (1-\epsilon)(1-55\epsilon)\opt \geq (1-56\epsilon)\opt$ as desired.
 
\subsection{Construction of an Expected Fractional Matching $x$ on $\mc{Q}$}\label{sec:constructfractional}

In this section, we describe an algorithm that constructs an ``expected fractional matching'' $x$ on $\mc{Q}$ as defined below.

\begin{definition}\label{def:expfractional}
	Let $\mathcal{A}$ be a random process that assigns a fractional value $x_e \in [0, 1]$ to each edge $e$ of a graph $G(V, E)$. We say $x$ is an expected fractional matching if:
	\begin{enumerate}[itemsep=0pt,topsep=5pt]
		\item For each vertex $v$, defining $x_v := \sum_{e \ni v} x_e$ we have $\E[x_v] \leq 1$.
		\item For all subsets $U \subseteq V$ with $|U|\leq 1/\epsilon$, $x(U) \leq \lfloor \frac{|U|}{2} \rfloor$ with probability 1.
	\end{enumerate}
\end{definition}

We emphasize that the definition only requires $\E[x_v] \leq 1$, thus depending on the coin tosses of the process, it may occur that $x_v > 1$, violating the constraints of a normal fractional matching. We will later argue that in our construction, the values of $x_v$'s are sufficiently concentrated around their mean and thus we can turn our expected fractional matching to an actual fractional matching of (almost) the same size.

As described before, we construct an expected fractional matching $x$ on the edges of graph $\mc{Q}$. Note that here the graph $\mc{Q}$ itself is also stochastic. In the construction, we treat crucial and non-crucial edges completely differently.

\smparagraph{Crucial edges.} On the crucial edges, we first construct an integral matching $Z$ using the algorithm of Lemma~\ref{lem:independentmatching}. Once we have $Z$, we define $x$ on crucial edges as follows.

\begin{highlighttechnical}
\vspace{-0.4cm}
\begin{flalign}\label{eq:crucialxe}
	\text{For every crucial edge $e$, } \qquad\qquad x_e := \begin{cases}
    1,& \text{if $e \in Z$ and $e \in Q$,}\\
    0,              & \text{otherwise}.
    \end{cases}
\end{flalign}
\end{highlighttechnical}

Note from Observation~\ref{obs:samplealmostallcrucial} that each crucial edges belong to $Q$ with probability at least $1-\epsilon$. Therefore the construction above (roughly speaking) sets $x_e = 1$ for most of the edges $e$ in $Z$.

\smparagraph{Non-crucial edges.} For defining $x$ on the non-crucial edges, we start with a number of useful definitions. For any edge $e$, define $t_e$ to be the number of matchings $\MM{\mc{G}_1}, \ldots, \MM{\mc{G}_R}$ that include $e$. Then based on that, define 
\begin{equation}\label{eq:deff}
	f_e := \begin{cases}
		\frac{t_e}{R}, & \text{if $\frac{t_e}{R} \leq \frac{1}{\sqrt{\epsilon R}}$ and $e$ is non-crucial,}\\
		0, & \text{otherwise.}
	\end{cases}
\end{equation}
Note that $f_e$ is a random variable of only the randomization of Algorithm~\ref{alg:sampling}, i.e. it is independent of the realization. Also note that $f_e$ is desirably non-zero only on the edges that belong to graph $Q$. Having defined $f_e$, we define $x_e$ on the non-crucial edges as follows.
\begin{highlighttechnical}
For every non-crucial edge $e$, define
\begin{flalign}\label{eq:noncrucialxe}
	x_e = \begin{cases}
    \frac{f_e}{p_e(1-\Pr[X_v])(1-\Pr[X_u])}, & \text{if $e$ is realized, $u, v \not\in V(Z)$, and $d_C(u, v) \geq \lambda$,}\\
    0,              & \text{otherwise}.
    \end{cases}
\end{flalign}
\end{highlighttechnical}

We note that $\lambda$ in the definition above is the number defined in Lemma~\ref{lem:independentmatching} and that $X_v$ is the indicator random variable for the event $v \in V(Z)$.

Before concluding this section, let $f_v := \sum_{e \in N : v \in e} f_e$ for each vertex $v$. We note the following properties of $f$, which can be derived directly from the definition above. The proof is given in Section~\ref{sec:proofs}.

\begin{claim}\label{cl:frange}
	It holds that:
	\begin{enumerate}
		\item For every non-crucial edge $e$, $\E[f_e] \leq q_e$.
		\item For every non-crucial edge $e$, $\E[f_e] \geq (1-\epsilon)q_e$.
		\item For every vertex $v$, it always holds that $\sum_{e \ni v} f_e \leq 1$.
		\item For every vertex $v$, $\Pr[f_v > n_v + 0.1\epsilon] \leq (\epsilon p)^{10}$, where recall that $n_v$ is the probability that $v$ is matched via a non-crucial edge in $\MM{\mc{G}}$.
	\end{enumerate}
\end{claim}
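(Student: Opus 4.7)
The plan is to prove the four parts of Claim~\ref{cl:frange} using (i) linearity of expectation against the random choice of realizations $\mc{G}_1,\dots,\mc{G}_R$, (ii) the fact that each $\MM{\mc{G}_i}$ is a matching and contains $e$ independently across $i$ with probability $q_e$, and (iii) standard Chernoff concentration, exploiting that $R$ is huge (since $R=1/(2\tau_-)$ and $\tau_-=(\tau_+)^{10g}$ is very small by Corollary~\ref{cor:thresholds}).

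For part~(1), I would observe that $f_e\leq t_e/R$ holds pointwise by definition~\eqref{eq:deff}. Since $t_e=\sum_{i=1}^R \mathbbm{1}(e\in\MM{\mc{G}_i})$ is a sum of $R$ independent Bernoulli random variables each of mean $q_e$, we have $\E[t_e/R]=q_e$, and hence $\E[f_e]\leq q_e$.

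For part~(2), the only loss relative to part~(1) is the truncation when $t_e/R>1/\sqrt{\epsilon R}$. Since $e$ is non-crucial, $q_e\leq\tau_-$ and $\E[t_e]=Rq_e\leq R\tau_-=1/2$, whereas the truncation threshold on $t_e$ is $\sqrt{R/\epsilon}$, which is vastly larger than the mean for any small $\epsilon$ and the $R$ chosen. I would apply a multiplicative Chernoff bound to write
\[
\E\!\left[\tfrac{t_e}{R}\,\mathbbm{1}\!\left(\tfrac{t_e}{R}>\tfrac{1}{\sqrt{\epsilon R}}\right)\right]\leq \sum_{k\geq \sqrt{R/\epsilon}}\tfrac{k}{R}\Pr[t_e=k]\leq \epsilon\, q_e,
\]
using that the tail decays super-exponentially past the mean once $R$ exceeds a polynomial in $\epsilon^{-1}$ and $\tau_-^{-1}$, which is guaranteed by Corollary~\ref{cor:thresholds}. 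Subtracting from $\E[t_e/R]=q_e$ gives $\E[f_e]\geq (1-\epsilon)q_e$. The routine part is spelling out the Chernoff calculation; it is comfortably slack.

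For part~(3), I would use that each $\MM{\mc{G}_i}$ is a matching, so for every $i$ and every $v$, $\sum_{e\ni v}\mathbbm{1}(e\in\MM{\mc{G}_i})\leq 1$. Summing over $i\in[R]$ gives $\sum_{e\ni v}t_e\leq R$, and since $0\leq f_e\leq t_e/R$ pointwise, $\sum_{e\ni v}f_e\leq 1$ deterministically.

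For part~(4), I would let $Y_i:=\mathbbm{1}(v\text{ is matched via a non-crucial edge in }\MM{\mc{G}_i})=\sum_{e\in N,\ e\ni v}\mathbbm{1}(e\in\MM{\mc{G}_i})$ (well-defined since $\MM{\mc{G}_i}$ is a matching), so that $f_v\leq\sum_{e\in N,\ e\ni v}t_e/R=\frac{1}{R}\sum_{i=1}^R Y_i$. The $Y_i$ are i.i.d.\ Bernoulli with mean exactly $n_v$ (by definition of $n_v$ and the fact that each $\mc{G}_i$ has the same law as $\mc{G}$). Applying a standard additive Chernoff/Hoeffding bound,
\[
\Pr[f_v>n_v+0.1\epsilon]\leq \Pr\!\left[\tfrac{1}{R}\sum_i Y_i>n_v+0.1\epsilon\right]\leq \exp(-\Omega(\epsilon^2 R)).
\]
Since $R=1/(2\tau_-)=(\tau_+)^{-10g}/2$ with $g=\epsilon^{-20}\log(1/\tau_+)$, the exponent $\epsilon^2 R$ is astronomically larger than $10\log(1/(\epsilon p))$, so this probability is at most $(\epsilon p)^{10}$, as required.

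The only non-trivial part is~(2): I need the truncation loss to be a genuine $\epsilon$-fraction of $q_e$, not merely small. The main obstacle there is making sure the Chernoff estimate for the tail beyond $\sqrt{R/\epsilon}$ beats $\epsilon q_e$ even when $q_e$ is as small as an arbitrary non-crucial value; the generous gap between $\E[t_e]\leq 1/2$ and the cutoff $\sqrt{R/\epsilon}$, together with the huge value of $R$ dictated by Corollary~\ref{cor:thresholds}, makes this go through.
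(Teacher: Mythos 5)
Your proposal is correct and structurally identical to the paper's argument: parts (1) and (3) are proved exactly as you say, and for (2) and (4) the key is concentration of $t_e$ (resp.\ the sum $\sum_i Y_i$) around its mean. The difference is the choice of tail bound: the paper uses Chebyshev's inequality throughout, whereas you reach for Chernoff. For part (4) these are interchangeable, and the paper even remarks explicitly that Chernoff would work but the second moment suffices. For part (2), however, Chebyshev is slightly slicker for the very reason you flag as your "main obstacle": one needs the truncation loss to be $\leq \epsilon q_e$ (scaling \emph{with} $q_e$), not merely small. With Chebyshev the factor $q_e$ falls out immediately, since $\Var[t_e] = R(q_e - q_e^2) \leq Rq_e$ and dividing by $(\sqrt{R/\epsilon})^2 = R/\epsilon$ gives $\Pr[t_e \geq \sqrt{R/\epsilon}+1] \leq \epsilon q_e$; combined with the deterministic bound $t_e/R \leq 1$, this yields $\E[f_e] \geq q_e - \epsilon q_e$ with no tail sum to estimate. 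Your multiplicative Chernoff route also works, but to get the $q_e$ scaling you need to track it explicitly, e.g.\ via $\Pr[t_e \geq k] \leq \binom{R}{k} q_e^k \leq (Rq_e)^k/k!$ and noting $Rq_e \leq 1/2$ so that $(Rq_e)^k \leq (1/2)^{k-1} Rq_e$; the display you wrote asserts the conclusion $\leq \epsilon q_e$ but leaves this bookkeeping implicit, which is precisely the point where a naive exponential tail bound of the form $\exp(-\Omega(\cdot))$ would \emph{not} deliver the required $q_e$ factor. So: correct approach, same skeleton, but the paper's Chebyshev argument for part (2) is more economical and I would recommend it.
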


Consider a non-crucial edge $\{u, v\}$ between two nodes $u$ and $v$ with $d_C(u, v) \geq \lambda$. The probability that $x_e$ is non-zero is $p_e(1-\Pr[X_v])(1-\Pr[X_u])$: Both $u$ and $v$ should be unmatched in $Z$ and $e$ should be realized, and further all these events are independent. This intuitively explains why we set $x_e = \frac{f_e}{p_e(1-\Pr[X_v])(1-\Pr[X_u])}$ if all these conditions hold: We want the denominator to cancel out with this probability so that we get $\E[x_e] = f_e$. We will formalize this intuition in Section~\ref{sec:analysisfractional} where we prove the expected size of $x$ is large.

\subsection{Validity of $x$}\label{sec:validityofx}

In this section, we prove that $x$ is indeed an expected fractional matching of $\mc{Q}$.

First, we prove that $x$ is non-zero only on the edges of $\mc{Q}$. This simply follows from the construction of $x$.

\begin{claim}
	Any edge $e$ with $x_e > 0$ belongs to $\mc{Q}$. That is, $x$ is only non-zero on the set of edges queried by Algorithm~\ref{alg:sampling} that are also realized.
\end{claim}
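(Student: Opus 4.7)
The plan is to verify the claim by a simple case split on whether the edge $e$ is crucial or non-crucial, since $x_e$ is defined separately on these two classes in \eqref{eq:crucialxe} and \eqref{eq:noncrucialxe}. Edges that are neither crucial nor non-crucial (the ``ignored'' edges) are not assigned any positive value by either definition, so the claim is vacuous for them; I would mention this briefly for completeness.

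For the crucial case, the argument is immediate from \eqref{eq:crucialxe}: if $x_e > 0$ for a crucial edge $e$, then by construction $e \in Z$ and $e \in Q$. Since Lemma~\ref{lem:independentmatching} guarantees that $Z$ is a matching of $\mc{C}$, we have $e \in \mc{C} \subseteq \mc{E}$, and together with $e \in Q$ this gives $e \in Q \cap \mc{E} = \mc{Q}$.

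For the non-crucial case, I would unfold the definition \eqref{eq:noncrucialxe} and then the definition \eqref{eq:deff} of $f_e$. If $x_e > 0$ for a non-crucial edge $e$, then the first case of \eqref{eq:noncrucialxe} must apply, which in particular requires $e$ to be realized (so $e \in \mc{E}$) and requires $f_e > 0$. But $f_e > 0$ forces $t_e \geq 1$ by \eqref{eq:deff}, meaning $e \in \MM{\mc{G}_i}$ for at least one $i \in [R]$, hence $e \in Q$ by the definition of $Q$ in Algorithm~\ref{alg:sampling}. Combining, $e \in Q \cap \mc{E} = \mc{Q}$.

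No step here is a genuine obstacle; this is essentially a bookkeeping check that the definitions of $x_e$ on crucial and non-crucial edges were set up so that the indicator conditions ``$e \in Q$'' (implicitly via $f_e$ in the non-crucial case) and ``$e$ is realized'' are both enforced whenever $x_e$ is nonzero. The only mild subtlety worth pointing out explicitly is that in the non-crucial case the membership in $Q$ is not written directly in \eqref{eq:noncrucialxe} but is hidden inside $f_e$, which is why one has to chase through \eqref{eq:deff}.
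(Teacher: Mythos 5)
Your proof is correct and follows essentially the same case split (crucial vs.\ non-crucial) that the paper uses; the only cosmetic difference is that you unfold $f_e > 0 \Rightarrow t_e \geq 1 \Rightarrow e \in Q$ explicitly whereas the paper states directly that $e \notin Q$ implies $f_e = 0$. Your remark about ignored edges is a sound completeness observation but is not needed, since by construction $x$ is assigned only on $C \cup N$.
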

\begin{proof}
	For any crucial edge $e$, we either have $x_e = 1$ or $x_e = 0$. By definition, if $x_e = 1$ then $e \in Z \cap Q$. By Lemma~\ref{lem:independentmatching}, $Z$ is a matching of {\em realized} crucial edges, i.e. $e \in Z$ implies $e \in \mc{E}$. Therefore, $e \in Z \cap Q$ implies $e \in \mc{E} \cap Q = \mc{Q}$ as desired.
	
	For any non-crucial edge $e$, if $e \not\in Q$, then $f_e = 0$ by definition of $f_e$. Therefore, if $x_e > 0$, then $f_e > 0$ which implies $e \in Q$. Moreover, by (\ref{eq:noncrucialxe}), $x_e > 0$ implies $e$ is realized. Combining these two, we get that if $x_e>0$ then $e \in \mc{Q}$.
\end{proof}

Next, we prove condition (1) of Definition~\ref{def:expfractional}.

\begin{claim}\label{cl:expxvlt1}
	For every vertex $v$, $\E[x_v] \leq 1$.
\end{claim}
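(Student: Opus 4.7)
The plan is to split $x_v = \sum_{e \in C,\, e \ni v} x_e + \sum_{e \in N,\, e \ni v} x_e$ (with ignored edges contributing zero, as $x_e$ is only defined to be nonzero on $C$ or $N$), bound the expectation of each sum separately, and then conclude using the identity $c_v + n_v = q_v$ together with $q_v \leq 1$ from Observation~\ref{obs:qv}.

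For the crucial part, I will use the construction in~(\ref{eq:crucialxe}): since $x_e \in \{0,1\}$ and $x_e = 1$ forces $e \in Z$, and $Z$ is an integral matching, the sum $\sum_{e \in C,\, e \ni v} x_e$ is at most $\mathbbm{1}[v \in V(Z)] = X_v$. Taking expectations and applying Lemma~\ref{lem:independentmatching} part~2 yields
\[
\E\Bigl[\sum_{e \in C,\, e \ni v} x_e\Bigr] \;\leq\; \Pr[X_v] \;\leq\; c_v.
\]

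For the non-crucial part, the main step is to show that for each non-crucial edge $e=\{u,v\}$ with $d_C(u,v) \geq \lambda$, we have $\E[x_e] = \E[f_e]$; all other non-crucial edges around $v$ contribute $0$ to $x_v$. Writing
\[
x_e \;=\; \frac{f_e \cdot \mathbbm{1}[e \in \mc{E}] \cdot \mathbbm{1}[u,v \notin V(Z)]}{p_e(1-\Pr[X_u])(1-\Pr[X_v])},
\]
the key observation is that the three random quantities in the numerator are mutually independent: $f_e$ depends only on the independent realizations $\mc{G}_1,\dots,\mc{G}_R$ drawn by Algorithm~\ref{alg:sampling}; $\mathbbm{1}[e \in \mc{E}]$ is a function of the realization of a non-crucial edge, which by Lemma~\ref{lem:independentmatching} part~3 is independent of $Z$, and which is also independent of the $\mc{G}_i$; and $\mathbbm{1}[u,v \notin V(Z)]$ is a function of $Z$. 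Hence
\[
\E\bigl[f_e \cdot \mathbbm{1}[e \in \mc{E}] \cdot \mathbbm{1}[u,v \notin V(Z)]\bigr] \;=\; \E[f_e]\cdot p_e\cdot \Pr[u,v \notin V(Z)],
\]
and since $d_C(u,v) \geq \lambda$, Lemma~\ref{lem:independentmatching} part~4 gives $\Pr[u,v \notin V(Z)] = (1-\Pr[X_u])(1-\Pr[X_v])$. Dividing through gives $\E[x_e] = \E[f_e]$. Summing over non-crucial edges at $v$ and applying Claim~\ref{cl:frange}(1) yields
\[
\E\Bigl[\sum_{e \in N,\, e \ni v} x_e\Bigr] \;\leq\; \sum_{e \in N,\, e \ni v} \E[f_e] \;\leq\; \sum_{e \in N,\, e \ni v} q_e \;=\; n_v.
\]
Combining the two bounds, $\E[x_v] \leq c_v + n_v = q_v \leq 1$, which is the desired inequality.

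The main subtlety — really the only non-routine step — is justifying the three-way independence used to conclude $\E[x_e]=\E[f_e]$. I will need to be explicit that the randomness naturally partitions into three independent sources (the $R$ algorithmic samples producing $f_e$; the realization $\mc{G}$ generating $\mc{E}$, whose crucial part $\mc{C}$ drives $Z$ and whose non-crucial coordinate $\mathbbm{1}[e\in\mc{E}]$ is independent of $\mc{C}$; and the internal coins of the algorithm of Lemma~\ref{lem:independentmatching}), and that the denominator in~(\ref{eq:noncrucialxe}) is strictly positive (guaranteed by $\Pr[X_v]\leq 1-\epsilon^2<1$ from Lemma~\ref{lem:independentmatching} part~2) so there is no division-by-zero issue.
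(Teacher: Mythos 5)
Your proof is correct and reaches the same conclusion via a somewhat cleaner route than the paper. The paper conditions on whether $v \in V(Z)$: if so, $x_v \leq 1$ trivially; if not, it computes the conditional expectation $\E[x_{e_i} \mid v \notin V(Z)] = \E[f_{e_i}]/(1-\Pr[X_v])$, where the factor $1-\Pr[X_v]$ does not cancel out and must then be bounded against $n_v \leq 1 - c_v$ to get $\E[x_v \mid v \notin V(Z)] \leq n_v/(1-c_v) \leq 1$. You instead compute the unconditional expectation by splitting $x_v$ into its crucial and non-crucial contributions: the crucial sum is at most $X_v$ pointwise (since $Z$ is a matching), giving $\E[\text{crucial}] \leq \Pr[X_v] \leq c_v$, and for the non-crucial sum the full denominator cancels so $\E[x_e] = \E[f_e] \leq q_e$, giving $\E[\text{non-crucial}] \leq n_v$. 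The three-way independence you invoke — algorithmic samples $\mc{G}_1,\dots,\mc{G}_R$, the non-crucial realization $\mathbbm{1}[e \in \mc{E}]$, and $Z$ — is exactly the same input the paper uses, just applied without first conditioning. Your route avoids the lopsided cancellation and the $n_v/(1-c_v)$ manipulation, at the price of needing the pointwise bound on the crucial sum (which is immediate). One small imprecision worth fixing: you describe $c_v + n_v = q_v$ as an identity, but $q_v$ also includes mass from ignored edges with $q_e \in (\tau_-,\tau_+)$, so the correct relation is $c_v + n_v \leq q_v \leq 1$; your conclusion only needs the inequality, so this does not affect the argument.
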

\begin{proof}
	Suppose at first that there is an edge $e$ incident to $v$ that belongs to matching $Z$. Then we either have $x_e = 1$ or $x_e = 0$ (depending on whether $e \in Q$ or not). For all other edges $e'$ connected to $v$ (crucial or non-crucial) we have $x_{e'} = 0$ by (\ref{eq:crucialxe}) and (\ref{eq:noncrucialxe}). Therefore if such edge $e$ exists, we indeed have $x_v \leq 1$. For the rest of the proof, we condition on the event that no such edge $e$ exists, i.e. $v \not\in V(Z)$ and prove the claim.
	
	Let $u_1, u_2, \ldots, u_r$ be neighbors of $v$ in graph $G$ such that for all $i \in [r]$: (1) edge $\{v, u_i\}$ is non-crucial, (2) $d_C(v, u_i) \geq \lambda$. Let $e_i := \{v, u_i\}$; we  claim that conditioned on $v \not\in V(Z)$, we have
	\begin{equation}\label{eq:21410238471098412}
	x_v = x_{e_1} + x_{e_2} + \ldots + x_{e_r}.
	\end{equation}
	To see this, fix an edge $e = \{v, u\}$ for some $u \not\in \{u_1, \ldots, u_r\}$. We show that $x_e = 0$, which suffices to prove (\ref{eq:21410238471098412}). First if $e$ is crucial, then $e \not\in Z$ given that $v \not\in V(Z)$; thus according to (\ref{eq:crucialxe}) we set $x_e = 0$. Moreover, if $e$ is non-crucial, the assumption $u \not\in \{u_1, \ldots, u_r\}$ implies $d_C(v, u) < \lambda$ by definition of the set. In this case also, we set $x_e = 0$ according to (\ref{eq:noncrucialxe}); concluding the proof of (\ref{eq:21410238471098412}).
	
	By linearity of expectation applied to (\ref{eq:21410238471098412}), we get
	\begin{equation}\label{eq:612903247}
	\E[x_v \mid v \not\in V(Z)] = \sum_{i=1}^r \E[x_{e_i} \mid v \not\in V(Z)].
	\end{equation}
	Moreover, for any arbitrary $i \in [r]$ we have
	\begin{align}
		\nonumber\E[x_{e_i} \mid v\not\in V(Z)] &= \Pr[u_i \not\in V(Z), e_i \text{ realized} \mid v \not\in V(Z)] \times \frac{\E[f_{e_i}]}{p_{e_i}(1-\Pr[X_v])(1-\Pr[X_{u_i}])}\\
		&= p_{e_i}(1-\Pr[X_{u_i}]) \times \frac{\E[f_{e_i}]}{p_{e_i}(1-\Pr[X_v])(1-\Pr[X_{u_i}])} = \frac{\E[f_{e_i}]}{1-\Pr[X_v]}.\label{eq:7128312098}
	\end{align}
	The second equality above follows from the fact that the event of $e_i$ being realized is independent of $u_i$ or $v$ being in $V(Z)$, as indicated by Lemma~\ref{lem:independentmatching} part 3; and also the fact that $u_i \not\in V(Z)$ and $v \not\in V(Z)$ are also independent from each other due to Lemma~\ref{lem:independentmatching} part 4 combined with the assumption that $d_C(u_i, v) \geq \lambda$. We also note that we have used $\E[f_{e_i}]$ instead of $\E[f_{e_i} \mid v\not\in V(Z)]$ in the equation above since $f_{e_i}$ is only a random variable of the randomization used in Algorithm~\ref{alg:sampling} whereas the matching $Z$ is constructed in Lemma~\ref{lem:independentmatching} independent of the outcome of Algorithm~\ref{alg:sampling}.
	
	Combining (\ref{eq:612903247}) and (\ref{eq:7128312098}) we get
	\begin{equation}\label{eq:421610986201363}
		\E[x_v \mid v\not\in V(Z)] = \sum_{i = 1}^r \frac{\E[f_{e_i}]}{1-\Pr[X_v]} = \frac{1}{1-\Pr[X_v]}\sum_{i=1}^r\E[f_{e_i}].
	\end{equation}	
	From Claim~\ref{cl:frange} part 1, we know $\E[f_{e_i}] \leq q_{e_i}$. Replacing this into the equality above, we get
	$$
		\E[x_v \mid v\not\in V(Z)] \leq \frac{1}{1-\Pr[X_v]} \sum_{i=1}^r q_{e_i} \leq \frac{n_v}{1-\Pr[X_v]}.
	$$
	
	Lemma~\ref{lem:independentmatching} part (2) guarantees that $\Pr[X_v] < c_v$ which implies $1-\Pr[X_v] > 1-c_v$. On the other hand, $c_v + n_v$ is upper bounded by the probability that $v$ is matched in $\opt$, thus $c_v + n_v \leq 1$, implying $n_v \leq 1-c_v$. These, combined with the equation above, gives
	$$
		\E[x_v \mid v\not\in V(Z)] \leq \frac{n_v}{1-\Pr[X_v]} \leq \frac{1-c_v}{1-c_v} = 1.
	$$
	Recalling also that $\E[x_v \mid v\in V(Z)] \leq 1$ as described at the start of the proof, this concludes the proof of the claim that $\E[x_v] \leq 1$.
\end{proof}

Next, we show that condition (2) of Definition~\ref{def:expfractional} also holds for our construction.

\begin{claim}\label{cl:xblossom}
	For all subsets $U \subseteq V$ with $|U|\leq 1/\epsilon$, $x(U) \leq \lfloor \frac{|U|}{2} \rfloor$ with probability 1.
\end{claim}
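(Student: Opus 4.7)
My plan is to split $x(U)$ according to how each edge interacts with the matching $Z$ constructed by Lemma~\ref{lem:independentmatching}. Write $U = U_Z \sqcup U_{\bar Z}$ with $U_Z := U \cap V(Z)$, and analyze three edge types. For \emph{crucial} edges: by (\ref{eq:crucialxe}), $x_e \in \{0,1\}$, and $x_e = 1$ only if $e \in Z$. Any such $Z$-edge inside $U$ has both endpoints in $V(Z)$ and therefore both in $U_Z$; since $Z$ is a matching, there are at most $\lfloor |U_Z|/2 \rfloor$ of them. For \emph{non-crucial edges with at least one endpoint in $U_Z$}: by (\ref{eq:noncrucialxe}), $x_e > 0$ requires both endpoints to lie outside $V(Z)$, so all such edges contribute $0$ to $x(U)$.

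The heart of the argument is the third type, non-crucial edges internal to $U_{\bar Z}$. The key step will be a uniform upper bound on $x_e$: whenever $x_e > 0$ for a non-crucial $e = \{u,v\}$, combining the definition in (\ref{eq:noncrucialxe}) with $f_e \leq 1/\sqrt{\epsilon R}$ from (\ref{eq:deff}), the trivial bound $p_e \geq p$, and the lower bound $1 - \Pr[X_v] \geq \epsilon^2$ (which follows from Lemma~\ref{lem:independentmatching} part~2 using $c_v \leq 1$, and symmetrically for $u$) yields
\[
x_e \;\leq\; \frac{1}{p\,\epsilon^4\, \sqrt{\epsilon R}}.
\]
Plugging in $R = 1/(2\tau_-)$ and $\tau_- \leq \tau_+ \leq (\epsilon p)^{50}$ from Corollary~\ref{cor:thresholds} will make this right-hand side far smaller than $2\epsilon^2$. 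Since there are at most $\binom{|U_{\bar Z}|}{2} \leq \binom{1/\epsilon}{2} \leq 1/(2\epsilon^2)$ non-crucial edges internal to $U_{\bar Z}$, their total contribution is at most $1$.

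Combining the three parts, if $|U_{\bar Z}| \leq 1$ the non-crucial contribution is $0$ and $x(U) \leq \lfloor |U_Z|/2 \rfloor \leq \lfloor |U|/2 \rfloor$. Otherwise $\lfloor |U_{\bar Z}|/2 \rfloor \geq 1$ absorbs the extra $1$, and using $\lfloor a/2 \rfloor + \lfloor b/2 \rfloor \leq \lfloor (a+b)/2 \rfloor$ we conclude $x(U) \leq \lfloor |U|/2 \rfloor$. The main obstacle I anticipate is not conceptual but quantitative: the $\epsilon^{-2}$ blow-ups from both the $1-\Pr[X_v]$ factors and the $\binom{1/\epsilon}{2}$ count of potentially active non-crucial pairs must be simultaneously overpowered by the smallness of $1/\sqrt{\epsilon R}$, and this relies crucially on the very aggressive choice of $\tau_-$ from Lemma~\ref{lem:gap}. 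Once that single estimate is pinned down, the blossom inequality falls out directly from the bookkeeping above.
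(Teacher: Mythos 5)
Your proposal is correct and ultimately rests on the same key estimate as the paper: the non-crucial $x_e$ are uniformly tiny because of the aggressive choice of $\tau_-$ (hence $R$), so the $O(1/\epsilon^2)$ potential non-crucial pairs in $U$ contribute less than $1$ in total, while the crucial part contributes integrally and is bounded by a matching constraint. The paper organizes this slightly differently—it argues by contradiction on a smallest violating set after stripping out \emph{integrally matched} vertices (those incident to a crucial edge with $x_e = 1$), which reduces to showing a set with no such vertices has $x(U) < 1$—whereas you split directly on $U_Z = U \cap V(Z)$ and add $\lfloor |U_Z|/2\rfloor + 1 \leq \lfloor |U|/2\rfloor$ when $|U_{\bar Z}| \geq 2$. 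Both partitions achieve the same goal (isolate the integral contribution, then bound the fractional remainder by a sub-unit quantity); your version is direct rather than by contradiction and uses $V(Z)$ in place of the paper's notion of integrally matched vertices, which is a looser classifier (a vertex in $V(Z)$ whose $Z$-edge missed $Q$ is counted in $U_Z$ but contributes nothing) yet still yields a valid upper bound. One small thing to keep explicit when writing this up: the crucial contribution counts only $Z$-edges with both endpoints in $U$ that are also in $Q$, so $\lfloor |U_Z|/2\rfloor$ is a (possibly slack) upper bound rather than an exact count, and that's fine for the inequality you need.
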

\begin{proof}
	By definition of $x$, the value of $x_e$ on crucial edges is either 1 or 0. Moreover, the definition also implies that if a vertex $v$ is incident to a crucial edge $e$ with $x_e = 1$, for all other edges $e'$ incident to $v$ we have $x_{e'} = 0$. Call all such vertices {\em integrally matched}. Fix a subset $U$ and let $U'$ be the subset of $U$ excluding its integrally matched vertices. One can easily confirm that if $x(U) > \lfloor |U|/2 \rfloor$, then also $x(U') > \lfloor |U'|/2 \rfloor$. Therefore, either the claim holds, or there should exist a subset with no integrally matched vertices that violates it. Let $U$ be the smallest such subset and observe that $|U| \leq 1/\epsilon$ (otherwise $U$ does not contradict the claim's statement).	
	
	Since $U$ has no integrally matched vertex, for every crucial edge $e$ inside $U$ we have $x_e = 0$ and for every non-crucial edge $e$ inside $U$ by definition (\ref{eq:noncrucialxe}) we have
	$
		x_e \leq \frac{f_e}{p_e (1-\Pr[X_u]) (1-\Pr[X_v])}.
	$
	By definition of $f_e$, it holds that $f_e \leq 1/\sqrt{\epsilon R}$ and by Lemma~\ref{lem:independentmatching} part 2, $\Pr[X_u], \Pr[X_v] \leq 1-\epsilon^2$. Replacing these into the bound above, we get
	$
		x_e \leq \frac{1}{p \times \epsilon^2 \times \epsilon^2 \sqrt{\epsilon R}}.
	$ Noting from Corollary~\ref{cor:thresholds} part 4 that $\tau_- < (\epsilon p)^{50}$ and that $R = 2/\tau_-$, we get $R > 2/(\epsilon p)^{50}$. Replacing this into the previous upper bound on $x_e$, we get that $x_e$ is much smaller than say $\epsilon^3$.
	
	Now since $|U| \leq 1/\epsilon$ there are at most $\binom{|U|}{2} < 1/\epsilon^2$ edges $e$ inside $U$ that can have non-zero $x_e$. For each of these, as discussed above $x_e < \epsilon^3$. Thus we have $x(U) < \epsilon^3 \times 1/\epsilon^2 < 1$ which cannot be larger than $\lfloor |U|/2 \rfloor$ if $|U| \geq 2$ (if $|U| \leq 1$, then there are no edges with both endpoints in $U$ and thus clearly $x(U) = 0$). This contradicts the assumption that $x(U) > \lfloor |U|/2 \rfloor$, implying that there is no such subset. 
\end{proof}

\subsection{The Expected Size of $x$}\label{sec:analysisfractional}

In this section we prove the following.

\begin{lemma}\label{lem:sizeofx}
	It holds that $\E\left[|x| \right] \geq (1-34\epsilon)\opt$.
\end{lemma}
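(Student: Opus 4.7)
\smparagraph{Proof plan for Lemma~\ref{lem:sizeofx}.} The plan is to split $|x| = \sum_{e \in C} x_e + \sum_{e \in N} x_e$, bound each expected contribution separately, and combine them via Corollary~\ref{cor:thresholds}(2), which states $q(C) + q(N) \geq (1-\epsilon)\opt$. For the crucial part, (\ref{eq:crucialxe}) immediately gives $\sum_{e \in C} x_e = |Z \cap Q|$. Because the matching $Z$ is a function only of the realization of the crucial edges and the internal randomness of the algorithm of Lemma~\ref{lem:independentmatching}, whereas $Q$ is a function only of the fresh independent realizations $\mc{G}_1, \ldots, \mc{G}_R$ used by Algorithm~\ref{alg:sampling}, the events $\{e \in Z\}$ and $\{e \in Q\}$ are independent for every crucial $e$. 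Combining this independence with Observation~\ref{obs:samplealmostallcrucial} ($\Pr[e \in Q] \geq 1-\epsilon$ for $e \in C$) and Lemma~\ref{lem:independentmatching}(1), I will conclude
\[
\E\left[\sum_{e \in C} x_e\right] \;=\; \sum_{e \in C} \Pr[e \in Z]\,\Pr[e \in Q] \;\geq\; (1-\epsilon)\,\E[|Z|] \;\geq\; (1-\epsilon)\bigl(q(C) - 30\epsilon\opt\bigr) \;\geq\; q(C) - 31\epsilon\opt,
\]
where the last inequality uses $q(C) \leq \opt$.

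For a non-crucial edge $e = \{u,v\}$ with $d_C(u,v) < \lambda$, (\ref{eq:noncrucialxe}) forces $x_e = 0$, so only edges with $d_C(u,v) \geq \lambda$ contribute. For such an edge, I will invoke three pieces of independence: part~(3) of Lemma~\ref{lem:independentmatching} makes the event ``$e$ realized'' independent of $Z$, part~(4) together with $d_C(u,v) \geq \lambda$ makes $X_u$ and $X_v$ independent, and $f_e$ is a function purely of the coin tosses of Algorithm~\ref{alg:sampling} and hence independent of everything else. Exactly as in the chain of equalities~(\ref{eq:7128312098}) from the proof of Claim~\ref{cl:expxvlt1}, this yields
\[
\E[x_e] \;=\; \E\!\left[\frac{f_e \cdot \mathbbm{1}(e \text{ realized})\cdot \mathbbm{1}(u,v \notin V(Z))}{p_e(1-\Pr[X_u])(1-\Pr[X_v])}\right] \;=\; \E[f_e] \;\geq\; (1-\epsilon)\,q_e,
\]
where the last inequality is Claim~\ref{cl:frange}(2). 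Writing $N' := \{e = \{u,v\} \in N : d_C(u,v) < \lambda\}$, summation gives $\sum_{e \in N} \E[x_e] \geq (1-\epsilon)\bigl(q(N) - q(N')\bigr)$, so the remaining task reduces to showing $q(N') \leq \epsilon\opt$.

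This last bound is the only step that requires a small calculation, and is precisely where the parameter tuning of Section~\ref{sec:analysissetup} is designed to pay off; it is the main (and rather mild) obstacle of the proof. For each vertex $v$, at most $\Delta_C^\lambda$ vertices $u$ satisfy $d_C(v,u) < \lambda$, and each non-crucial edge carries $q$-value at most $\tau_-$, so $q(N') \leq \tfrac{n}{2}\Delta_C^\lambda \tau_-$. Observation~\ref{obs:crucialdegree} gives $\Delta_C \leq \tau_+^{-1}$, Corollary~\ref{cor:thresholds}(1) gives $\tau_- = \tau_+^{10g}$, and $\lambda \leq g$ by Observation~\ref{obs:ggtlambda}; hence $\Delta_C^\lambda \tau_- \leq \tau_+^{10g-\lambda} \leq \tau_+^{9g}$, which by Corollary~\ref{cor:thresholds}(4) is vanishingly smaller than $\epsilon^3$. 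Combined with $n \leq 10\,\opt/\epsilon$ from Assumption~\ref{ass:optlarge}, I obtain $q(N') \leq \epsilon\opt$, and therefore $\sum_{e \in N}\E[x_e] \geq q(N) - 2\epsilon\opt$. Adding the two contributions and applying Corollary~\ref{cor:thresholds}(2),
\[
\E[|x|] \;\geq\; q(C) + q(N) - 33\epsilon\opt \;\geq\; (1-\epsilon)\opt - 33\epsilon\opt \;=\; (1-34\epsilon)\opt,
\]
as claimed. The conceptual point is that the exponentially large multiplicative gap between $\tau_+$ and $\tau_-$ engineered through the function $g$ is exactly what neutralizes the otherwise $\Delta_C^\lambda$-sized blow-up coming from the locality radius $\lambda$ of Lemma~\ref{lem:independentmatching}.
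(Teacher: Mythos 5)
Your proposal is correct and follows the paper's high-level structure: the same crucial/non-crucial split, the same independence-based bookkeeping giving $\E[\sum_{e \in C} x_e] \geq q(C) - 31\epsilon\opt$ and $\E[x_e] = \E[f_e] \geq (1-\epsilon)q_e$ for a contributing non-crucial edge, and the same final combination via Corollary~\ref{cor:thresholds}(2). The one place you diverge is in bounding the $q$-mass of non-crucial edges whose endpoints are $C$-close (your $N'$; the paper's $N \setminus N'$). The paper's Claim~\ref{cl:nplarge} charges each such edge $\{u,v\}$ to a short $C$-path witnessing $d_C(u,v) < \lambda$, bounds the charge per crucial edge by $4\Delta_C^{2\lambda}$, and converts $|C|$ to $q(C)/\tau_+$ to get $q(N\setminus N') \leq \epsilon q(C)$ without any appeal to $n$. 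You instead count all vertex pairs at $C$-distance $< \lambda$ directly, obtaining $q(N') \leq \tfrac{n}{2}\Delta_C^\lambda\tau_-$, and then import Assumption~\ref{ass:optlarge} to trade $n$ for $\opt/\epsilon$. Your route is shorter and gives a cleaner exponent ($\Delta_C^\lambda$ versus $\Delta_C^{2\lambda}$), at the cost of an extra dependence on Assumption~\ref{ass:optlarge}; since the paper already uses that assumption elsewhere (e.g.\ in Lemma~\ref{lem:ylarge}), nothing is lost, and the huge $\tau_+^{\Theta(g)}$ slack in Corollary~\ref{cor:thresholds} absorbs either version comfortably. Both arguments are valid and the final arithmetic matches the paper's $(1-34\epsilon)\opt$.
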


We start by analyzing the size of $x$ on the crucial edges. This is a simple consequence of Lemma~\ref{lem:independentmatching} part 1 which guarantees $\E[Z]\geq q(C)-30\epsilon \opt$ and Observation~\ref{obs:crucialdegree} which guarantees each crucial edge belongs to $Q$ with probability at least $1-\epsilon$.

\begin{claim}\label{cl:sizeofxcrucial}
	It holds that $\E\left[\sum_{e \in C} x_e \right] \geq q(C)-31\epsilon \opt$.
\end{claim}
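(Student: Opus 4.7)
By the definition of $x$ on crucial edges in (\ref{eq:crucialxe}), we have $x_e \in \{0,1\}$ for every $e \in C$ and $x_e = 1$ precisely when $e \in Z$ and $e \in Q$. Therefore
$$
\sum_{e \in C} x_e \;=\; |Z \cap Q| \;\geq\; |Z| - |Z \setminus Q|,
$$
and it suffices to lower-bound $\E[|Z|]$ and upper-bound $\E[|Z \setminus Q|]$.

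The first quantity is handled directly by Lemma~\ref{lem:independentmatching} part~1, which gives $\E[|Z|] \geq q(C) - 30\epsilon\opt$. For the second quantity, the plan is to exploit the fact that $Z$ and $Q$ are constructed using \emph{independent} sources of randomness: $Z$ is produced by the algorithm of Lemma~\ref{lem:independentmatching} on $\mc{C}$ (and depends only on the realization $\mc{G}$ of $G$, together with the internal randomness of that algorithm), whereas $Q$ is determined by the auxiliary independent realizations $\mc{G}_1,\ldots,\mc{G}_R$ drawn in Algorithm~\ref{alg:sampling}. Hence for each crucial edge $e$, the events $\{e \in Z\}$ and $\{e \notin Q\}$ are independent, so
$$
\E[|Z \setminus Q|] \;=\; \sum_{e \in C} \Pr[e \in Z]\,\Pr[e \notin Q] \;\leq\; \epsilon \sum_{e \in C}\Pr[e \in Z] \;=\; \epsilon\,\E[|Z|],
$$
where the inequality uses $\Pr[e \notin Q] \leq \epsilon$ from Observation~\ref{obs:samplealmostallcrucial}.

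Finally, I need the trivial upper bound $\E[|Z|] \leq \opt$ to turn the multiplicative loss above into an additive $\epsilon\opt$ loss. This is immediate: $Z$ is a matching of crucial edges, so $\E[|Z|] \leq q(C) \leq \opt$ (alternatively, summing Lemma~\ref{lem:independentmatching} part~2 over $v$ gives $2\E[|Z|] = \sum_v \Pr[X_v] \leq \sum_v c_v = 2q(C)$). Putting the pieces together,
$$
\E\!\left[\sum_{e \in C} x_e\right] \;\geq\; \E[|Z|] - \E[|Z \setminus Q|] \;\geq\; (1-\epsilon)\,\E[|Z|] - 0 \;\geq\; q(C) - 30\epsilon\opt - \epsilon\opt \;=\; q(C) - 31\epsilon\opt,
$$
as required. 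There is no real obstacle here; the only subtle point is articulating cleanly that the randomness producing $Z$ is independent of the randomness producing $Q$, which is what justifies factoring the joint probability in the bound on $\E[|Z\setminus Q|]$.
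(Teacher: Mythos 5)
Your proof is correct and takes essentially the same route as the paper: both arguments reduce to the independence of $Z$ (produced by the algorithm of Lemma~\ref{lem:independentmatching}) and $Q$ (produced by Algorithm~\ref{alg:sampling}), factor $\Pr[e\in Z, e\in Q]$ accordingly, apply Observation~\ref{obs:samplealmostallcrucial} and Lemma~\ref{lem:independentmatching} part~1, and close with $q(C)\leq\opt$ (equivalently $\E[|Z|]\leq\opt$) to convert the multiplicative $(1-\epsilon)$ loss into an additive $\epsilon\opt$. The only cosmetic difference is that you pass through $|Z|-|Z\setminus Q|$ while the paper writes $\sum_{e\in C}\Pr[e\in Q\text{ and }e\in Z]$ directly; these are the same computation.
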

\begin{proof}
	Denoting $x(C) = \sum_{e \in C} x_e$, we have
	\begin{equation*}
		\E[x(C)] = \E \Big[ \sum_{e \in C} x_e \Big] = \sum_{e \in C} \E[x_e] = \sum_{e \in C} \Pr[e \in Q \text{ and } e \in Z].
	\end{equation*}
	Observe that $Z$ and $Q$ are picked independently as Lemma~\ref{lem:independentmatching} is essentially unaware of $Q$. Therefore, for any crucial edge $e$ we get 
	$$
	\Pr[e \in Q \text{ and } e \in Z] = \Pr[e \in Q] \times \Pr[e \in Z] \geq (1-\epsilon)\Pr[e \in Z],
	$$
	where the latter inequality comes from Observation~\ref{obs:samplealmostallcrucial}. Replacing this to the equality above gives
	$$
	\E[x(C)] \geq (1-\epsilon)\sum_{e \in C} \Pr[e \in Z] = (1-\epsilon)\E[|Z|] \stackrel{\text{Lemma~\ref{lem:independentmatching} part 1}}{\geq} (1-\epsilon) (q(C)-30\epsilon \opt) \geq q(C) - 31\epsilon \opt,
	$$
	completing the proof of the claim.
\end{proof}

To analyze the size of $x$ on the non-crucial edges, we first define $N'$ to be the subset of non-crucial edges $\{u, v\}$ such that $d_C(u, v) \geq \lambda$ and define $q(N') := \sum_{e \in N'} q_e$ and $x(N') := \sum_{e \in N'} x(N')$. Definition of $N'$ is useful since recall from (\ref{eq:noncrucialxe}) that for any $\{u, v\} \in N$ with $d_C(u, v) < \lambda$ (i.e. $\{u, v\} \not\in N'$) we set $x_e = 0$. Therefore only the edges in $N$ that also belong to $N'$ have non-zero $x_e$, implying $x(N) = x(N')$.

\begin{claim}\label{cl:nplarge}
	It holds that $q(N') \geq q(N)-\epsilon q(C)$.
\end{claim}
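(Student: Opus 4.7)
\noindent\textbf{Proof plan for Claim~\ref{cl:nplarge}.} I would prove the equivalent statement $q(N \setminus N') \leq \epsilon\, q(C)$. The key observation is that every edge $\{u,v\} \in N \setminus N'$ satisfies $d_C(u,v) < \lambda < \infty$, so both endpoints must be incident to at least one crucial edge. Let $V_C := \{v \in V : \exists\, e \in C \text{ with } v \in e\}$ be this vertex set. The strategy is to first bound $|V_C|$ by a function of $q(C)$, then bound the number of ``close'' pairs $(u,v)$ in $V_C$ using the degree bound on $C$, and finally use $q_e \leq \tau_-$ for each non-crucial edge to turn this into an upper bound on $q(N\setminus N')$.

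First, I would bound $|V_C|$. Each $v \in V_C$ has at least one incident crucial edge $e$ with $q_e \geq \tau_+$, so $c_v \geq \tau_+$. Since $\sum_v c_v = 2\,q(C)$ by double-counting, this gives $|V_C| \leq 2\,q(C)/\tau_+$. Next, for each $u \in V_C$, the ball $B_u := \{v : d_C(u,v) < \lambda\}$ has size at most $\sum_{i=0}^{\lambda-1} \Delta_C^i \leq \Delta_C^\lambda$ (bounding the standard BFS growth with degree at most $\Delta_C$). Every edge $\{u,v\} \in N\setminus N'$ has $v \in B_u$, so the number of such edges is at most $\tfrac{1}{2} \sum_{u \in V_C} |B_u| \leq |V_C|\,\Delta_C^\lambda / 2 \leq q(C)\,\Delta_C^\lambda / \tau_+$. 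Since each such edge satisfies $q_e \leq \tau_-$, I get
\[
    q(N\setminus N') \;\leq\; \frac{q(C)\,\Delta_C^\lambda\,\tau_-}{\tau_+}.
\]

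It then remains to verify the numerical inequality $\Delta_C^\lambda\,\tau_-/\tau_+ \leq \epsilon$, which is where the carefully chosen gap between $\tau_-$ and $\tau_+$ pays off. Using $\Delta_C \leq 1/\tau_+$ (Observation~\ref{obs:crucialdegree}) and $\lambda \leq g$ (Observation~\ref{obs:ggtlambda}), I would bound $\Delta_C^\lambda \leq \tau_+^{-g}$. Combined with $\tau_- = \tau_+^{10g}$ from Corollary~\ref{cor:thresholds}, this yields
\[
    \frac{\Delta_C^\lambda\,\tau_-}{\tau_+} \;\leq\; \tau_+^{-g}\cdot \tau_+^{10g}\cdot \tau_+^{-1} \;=\; \tau_+^{9g-1} \;\leq\; \tau_+^{8},
\]
since $g \geq 1$ whenever $\tau_+$ is bounded away from $1$. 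Finally, invoking $\tau_+ \leq (\epsilon p)^{50}$ from Corollary~\ref{cor:thresholds} gives $\tau_+^{8} \leq (\epsilon p)^{400} \ll \epsilon$, which closes the argument. No step looks genuinely difficult: the only subtlety is confirming that the thresholds chosen via $f(x) = x^{10g(x)}$ are indeed strong enough to absorb the $\Delta_C^\lambda$ blowup, which is precisely the purpose of the tower-style gap built into $\tau_-$.
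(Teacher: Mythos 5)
Your proof is correct and follows essentially the same strategy as the paper: bound $|N\setminus N'|$ by a local counting argument around $C$, multiply by $\tau_-$, and absorb the resulting $\Delta_C$-power blowup using the gap $\tau_- = \tau_+^{10g}$ with $g \geq \lambda$. The one place you diverge is in the counting: the paper charges each edge $e=\{u,v\}\in N\setminus N'$ to a shortest $u$--$v$ path in $C$, then bounds for each crucial edge $f$ how many such paths pass through it (getting a $\Delta_C^{2\lambda}$ factor summed over $|C|$), whereas you count directly by noting both endpoints lie in $V_C$, bounding $|V_C|\leq 2q(C)/\tau_+$ and bounding ball sizes, yielding the slightly tighter $\Delta_C^{\lambda}$ factor. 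Both bounds are far inside what the threshold gap can absorb, so the difference is cosmetic; your version is arguably a bit more direct since it avoids the path-charging intermediary.
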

\begin{proof}
	For any edge $e = \{u, v\}$ in $N \setminus N'$, we choose an arbitrary shortest path $P$ between $u$ and $v$ in graph $C$ and charge the edges of this path. Note that by definition of $N'$, such path between $u$ and $v$ exists and has size less than $\lambda$. Now, take a crucial edge $f$. We denote by $\Phi(f)$ the set of edges in $N \setminus N'$ for which we charge a path containing $f$. Below, we argue that
	\begin{equation}\label{eq:612398273497}
		|\Phi(f)| \leq 4(1/\tau_+)^{2\lambda} \qquad \forall f \in C.
	\end{equation}
	
	Fix a crucial edge $f$ and an edge $\{u, v\} \in \Phi(f)$. As discussed above, there should be a path of length less than $\lambda$ between $u$ and $v$ in graph $C$ that passes through $f$. This means that $d_C(u, f) < \lambda$ and $d_C(v, f) < \lambda$. Therefore, both $u$ and $v$ are at distance at most $\lambda$ from $f$ in graph $C$. 
	
	Observe that there are at most $2(\Delta_C)^{\lambda}$ vertices in the $\lambda$-neighborhood of $f$ in graph $C$. Thus, there are at most $2(\Delta_C)^{\lambda} \times 2(\Delta_C)^{\lambda} = 4(\Delta_C)^{2\lambda}$ pairs of vertices that can potentially charge $f$, proving $|\Phi(f)| \leq 4(\Delta_C)^{2\lambda} \leq 4(1/\tau_+)^{2\lambda}$ where the latter inequality comes from Observation~\ref{obs:crucialdegree} that $\Delta_C \leq 1/\tau_+$. This concludes the proof of (\ref{eq:612398273497}).
	
	As discussed above, each edge $e \in N \setminus N'$ charges a path in $C$, thus belongs to $\Phi(f)$ of at least one crucial edge $f$. Therefore, we get
	\begin{equation}\label{eq:10234817293478}
		|N \setminus N'| \leq \sum_{f \in C} \Phi(f).
	\end{equation}
	Every edge $e$ in $N \setminus N'$ is non-crucial, i.e. $q_e \leq \tau_-$. Thus:
	\begin{equation}\label{eq:7873241712304912348}
	\sum_{e \in N \setminus N'} q_e \leq \tau_-|N \setminus N'| \stackrel{(\ref{eq:10234817293478})}{\leq} \tau_- \sum_{f \in C} \Phi(f) \stackrel{(\ref{eq:612398273497})}{\leq} 4\tau_- |C|(1/\tau_+)^{2\lambda} \leq 4\tau_- q(C)(1/\tau_+)^{2\lambda+1},
	\end{equation}
	where the last inequality comes from the fact that $q(C) \geq |C| \tau_+$ as for every edge $e \in C$, $q_e \geq \tau_+$.
	
	From Corollary~\ref{cor:thresholds} we have $\tau_- = (\tau_+)^{10g}$ and we have $g \geq \lambda > 1$ by Observation~\ref{obs:ggtlambda}. Thus:
	$$
	 4\tau_- (1/\tau_+)^{2\lambda+1} = 4 (\tau_+)^{10g} (1/\tau_+)^{2\lambda+1} = 4 (\tau_+)^{10g - (2\lambda - 1)} < 4 \tau_+ < \epsilon.
	$$
	Replacing it into inequality (\ref{eq:7873241712304912348}), we get
	$$
	\sum_{e \in N \setminus N'} q_e \leq \epsilon q(C).
	$$
	This concludes the proof since
	$$
		q(N') = \sum_{e \in N'} q_e = \sum_{e \in N \setminus (N \setminus N')} q_e \geq \sum_{e \in N}q_e - \sum_{e \in N \setminus N'} q_e \geq q(N) - \epsilon q(C)
	$$
	as it is desired.
\end{proof}

\begin{claim}\label{cl:xnpgtqnp}
	It holds that $\E[x(N')] \geq (1-\epsilon) q(N')$.
\end{claim}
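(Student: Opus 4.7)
The plan is to compute $\E[x_e]$ for a single edge $e = \{u,v\} \in N'$ and show it equals $\E[f_e]$ on the nose, at which point Claim~\ref{cl:frange} part~2 and summing over $N'$ immediately yields the bound. The key observation is that the denominator in the definition (\ref{eq:noncrucialxe}) is a deterministic constant (it is $p_e$ times two fixed marginal probabilities of $Z$), so it can be pulled out of the expectation, and what remains in the numerator factors completely by independence.

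Concretely, I would fix $e = \{u,v\} \in N'$ and write
\[
\E[x_e] \;=\; \frac{\E\bigl[f_e \cdot \mathbbm{1}(e \text{ realized}) \cdot \mathbbm{1}(u \notin V(Z)) \cdot \mathbbm{1}(v \notin V(Z))\bigr]}{p_e\,(1-\Pr[X_u])(1-\Pr[X_v])}.
\]
Then I would argue that the three indicators and $f_e$ split into independent pieces. First, $f_e$ depends only on the internal randomness of Algorithm~\ref{alg:sampling} (the realizations $\mc{G}_1,\ldots,\mc{G}_R$) and is thus independent of both the true realization of $e$ and of the construction of $Z$. Second, since $e$ is non-crucial, Lemma~\ref{lem:independentmatching} part~3 gives that the realization of $e$ is independent of $Z$, hence of the events $\{u \notin V(Z)\}$ and $\{v \notin V(Z)\}$, and in particular $\Pr[e \text{ realized}] = p_e$. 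Third, because $d_C(u,v) \geq \lambda$ by membership in $N'$, Lemma~\ref{lem:independentmatching} part~4 gives that $X_u$ and $X_v$ are independent, so $\Pr[u \notin V(Z), v \notin V(Z)] = (1-\Pr[X_u])(1-\Pr[X_v])$.

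Combining these factorizations,
\[
\E\bigl[f_e \cdot \mathbbm{1}(e \text{ realized}) \cdot \mathbbm{1}(u \notin V(Z)) \cdot \mathbbm{1}(v \notin V(Z))\bigr]
\;=\; \E[f_e] \cdot p_e \cdot (1-\Pr[X_u])(1-\Pr[X_v]),
\]
and the denominator cancels perfectly to give $\E[x_e] = \E[f_e]$. Applying Claim~\ref{cl:frange} part~2, which says $\E[f_e] \geq (1-\epsilon) q_e$ for every non-crucial edge, and summing over $e \in N'$ yields
\[
\E[x(N')] \;=\; \sum_{e \in N'} \E[f_e] \;\geq\; (1-\epsilon)\sum_{e \in N'} q_e \;=\; (1-\epsilon)\,q(N'),
\]
as claimed. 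The only subtlety to be careful about is that $\Pr[X_u]$ and $\Pr[X_v]$ appearing in the denominator of (\ref{eq:noncrucialxe}) are numerical constants (marginals of $Z$), not random variables — this is what makes the ``pull the denominator out of the expectation'' step legal, and is also why the vertex-independence property of Lemma~\ref{lem:independentmatching} is used exactly where it is needed to identify the joint probability with the product of the marginals. I do not anticipate any real obstacle beyond being careful that each of the three independence facts is invoked against the correct pair of random objects.
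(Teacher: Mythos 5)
Your proof is correct and follows essentially the same route as the paper's: identify the denominator of (\ref{eq:noncrucialxe}) as a deterministic constant, invoke Lemma~\ref{lem:independentmatching} parts~3 and~4 together with the independence of Algorithm~\ref{alg:sampling}'s internal randomness to factor the expectation of the numerator, obtain $\E[x_e] = \E[f_e]$, and finish with Claim~\ref{cl:frange} part~2. The only cosmetic difference is that the paper first conditions on $f_e$ and then takes the outer expectation over Algorithm~\ref{alg:sampling}'s randomness, whereas you factor the joint expectation directly; the substance is identical.
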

\begin{proof}
	By linearity of expectation, we have 
	\begin{equation}\label{eq:16234421340}
	\E[x(N')] = \E \Big[ \sum_{e \in N'} x_e \Big] = \sum_{e \in N'} \E[x_e].
	\end{equation}
	We emphasize that the expectation here is taken over the randomization in Algorithm~\ref{alg:sampling}, the randomization in matching $Z$, and the randomization in realization of non-crucial edges. Specifically, we write $\E_{\alg, Z, \mc{N}}[x_e]$ to emphasize on this.
	
	The randomization of Algorithm~\ref{alg:sampling} determines the value of $f_e$ which is used in defining $x_e$. Let us first condition on $f_e$ and compute $\E_{Z, \mc{N}}[x_e \mid f_e]$. We have
	\begin{equation}\label{eq:89123}
		\E_{Z, \mc{N}}[x_e \mid f_e] = \Pr[e \in \mc{E} \text{ and } u, v \not\in V(Z) \mid f_e] \times  \frac{f_e}{p_e(1-\Pr[X_u])(1-\Pr[X_v])}.
	\end{equation}
	We claim that 
	\begin{equation}\label{eq:5123674182374}
		\Pr[e \in \mc{E} \text{ and } u, v \not\in V(Z) \mid f_e] = p_e(1-\Pr[X_u])(1-\Pr[X_v]).
	\end{equation}
	To see this, first observe that the value of $f_e$ is determined solely by the random realizations taken by Algorithm~\ref{alg:sampling}. In particular, the events $e \in \mc{E}$, and $u, v \not\in V(Z)$ are completely independent of the outcome of Algorithm~\ref{alg:sampling}. This allows us to remove the condition on $f_e$ from the left hand side of (\ref{eq:5123674182374}). Moreover, by Lemma~\ref{lem:independentmatching} part 3, the matching $Z$ is chosen independently from the realization of non-crucial edges, thus events $e \in \mc{E}$ and $u, v \not\in V(Z)$ are independent. 	Finally, the assumption that $e \in N'$, by definition of $N'$, implies that $d_C(u, v) \geq \lambda$. Therefore, by Lemma~\ref{lem:independentmatching} part 4, events $v \in V(Z)$ and $u \in V(Z)$ (and for that matter their complements) are independent. Thus, indeed:
	\begin{align*}
	\Pr[e \in \mc{E} \text{ and } u, v \not\in V(Z) \mid f_e] &= \Pr[e \in \mc{E}] \times \Pr[v \not\in V(Z)] \times \Pr[u \not\in V(Z)]\\
	&= p_e(1-\Pr[X_u])(1-\Pr[X_v]).
	\end{align*}
	Replacing (\ref{eq:5123674182374}) into (\ref{eq:89123}) we get
	\begin{equation*}
		\E_{Z, \mc{N}}[x_e \mid f_e] = p_e(1-\Pr[X_u])(1-\Pr[X_v]) \times \frac{f_e}{p_e(1-\Pr[X_u])(1-\Pr[X_v])} = f_e.
	\end{equation*}
	Taking expectation over $\alg$ from both sides, we get
	\begin{equation}\label{eq:72313409}
	\E_{\alg}[\E_{Z, \mc{N}}[x_e \mid f_e]] = \E_{\alg}[f_e].
	\end{equation}
	The left hand side equals $\E_{\alg, Z, \mc{N}}[x_e]$. For the right hand side, by Claim~\ref{cl:frange} we have $\E[f_e] \geq (1-\epsilon)q_e$. 	Replacing both the left hand side and right hand side of (\ref{eq:72313409}) by these bounds, we get
	\begin{equation}
		\E_{\alg, Z, \mc{N}}[x_e] \geq (1-\epsilon) q_e.
	\end{equation}
	Combining this with (\ref{eq:16234421340}) we get
	\begin{equation*}
	\E[x(N')] = \sum_{e \in N'} \E[x_e] \geq (1-\epsilon) \sum_{e \in N'} q_e = (1-\epsilon) q(N'),
	\end{equation*}
	completing the proof.
\end{proof}

We are now ready to prove Lemma~\ref{lem:sizeofx}.

\begin{proof}[Proof of Lemma~\ref{lem:sizeofx}]
	We have
	$$
	\E\Big[\sum_{e}x_e\Big] = \E\Big[\sum_{e \in C} x_e\Big] + \E\Big[\sum_{e \in N} x_e\Big] \stackrel{\text{Claim~\ref{cl:sizeofxcrucial}}}{\geq} q(C) - 31\epsilon \opt + \E\Big[\sum_{e \in N} x_e\Big].
	$$
	Also note that for $e \in N$, $x_e \not= 0$ iff $e \in N'$ by construction of $x$. Thus,
	$$
	\E\Big[\sum_{e \in N} x_e\Big] = 	\E\Big[\sum_{e \in N'} x_e\Big] = \E[x(N')] \stackrel{\text{Claim~\ref{cl:xnpgtqnp}}}{\geq} (1-\epsilon)q(N') \stackrel{\text{Claim~\ref{cl:nplarge}}}{\geq} (1-\epsilon)(q(N)-\epsilon q(C)).
	$$
	Combining the two equations above, we get
	\begin{align*}
		\E\Big[\sum_{e}x_e\Big] &\geq q(C) - 31\epsilon \opt + (1-\epsilon)(q(N)-\epsilon q(C)) > q(C) + q(N) - 33\epsilon \opt\\
		&\stackrel{\text{Lemma~\ref{lem:gap} part (2)}}{\geq} (1-\epsilon)\opt - 33\epsilon \opt \geq (1-34\epsilon)\opt,
	\end{align*}
	concluding the proof.
\end{proof}

\subsection{From the Expected Fractional Matching to an Actual Fractional Matching}\label{sec:turntofracmatching}

We showed that $x$ is an expected fractional matching satisfying $\E[x_v] \leq 1$ for every vertex $v$. However, as mentioned before, there is still a possibility that $x_v > 1$ depending on the coin tosses of the algorithms and the realization. This should never occur in a valid fractional matching. Thus, we define the following scaled fractional matching $y$ based on $x$ which decreases the fractional matching around vertices that deviate significantly from their expectation to 0.

\begin{equation}\label{eq:defy}
	\text{For any edge $e=\{u, v\}$,} \qquad\qquad y_e = \begin{cases}
			x_e/(1+\epsilon) & \text{if $x_v, x_u \leq 1+\epsilon$,}\\
			0 & \text{otherwise.}
	\end{cases}
\end{equation}

\begin{observation}\label{obs:yblossom}
	By definition above, $y$ is a valid fractional matching, i.e. $y_v \leq 1$ for all $v \in V$. In addition, since $y_e \leq x_e$ for all edges $e$, Claim~\ref{cl:xblossom} implies that for all $U \subseteq V$ with $|U| \leq 1/\epsilon$, $y(X) \leq \lfloor \frac{|U|}{2} \rfloor$. That is, $y$ also satisfies all blossom inequalities of size up to $1/\epsilon$.
\end{observation}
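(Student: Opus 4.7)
The plan is to verify the two conclusions of the observation separately, each as an immediate consequence of the definition (\ref{eq:defy}) of $y$ together with properties of $x$ already established in the preceding subsection.

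For the claim that $y_v \leq 1$ for every vertex $v$, I would split on the value of $x_v$. If $x_v > 1+\epsilon$, then for every edge $e$ incident to $v$ the second case of (\ref{eq:defy}) applies (since one endpoint, namely $v$ itself, violates the $1+\epsilon$ threshold), so $y_e = 0$ and hence $y_v = 0 \leq 1$. If instead $x_v \leq 1+\epsilon$, then for every edge $e = \{u,v\}$ incident to $v$ we have $y_e \leq x_e/(1+\epsilon)$ regardless of the status of $u$ (with equality when $x_u \leq 1+\epsilon$ and value $0$ otherwise). Summing over $e \ni v$ and using linearity then gives $y_v \leq x_v/(1+\epsilon) \leq 1$, so in both cases $y_v \leq 1$, confirming that $y$ is a valid fractional matching.

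For the blossom inequalities, the argument is pure monotonicity. By (\ref{eq:defy}) we have $y_e \leq x_e/(1+\epsilon) \leq x_e$ pointwise for every edge $e$. Therefore, for any $U \subseteq V$, the partial sum $y(U) = \sum_{e=\{u,v\}: u,v \in U} y_e$ is bounded above by $x(U)$. Claim~\ref{cl:xblossom} then gives $x(U) \leq \lfloor |U|/2 \rfloor$ whenever $|U| \leq 1/\epsilon$, and the bound transfers directly to $y(U)$.

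There is essentially no serious obstacle here: the vertex-load constraint is enforced by construction through the combined effect of the $(1+\epsilon)$-scaling and the zeroing-out rule, while the blossom constraints are inherited from $x$ via the pointwise bound $y_e \leq x_e$. The only subtle point worth making explicit is that the truncation in (\ref{eq:defy}) fires whenever \emph{either} endpoint is over-saturated in $x$, which is precisely what lets the case analysis for the vertex bound go through regardless of the neighbor's value of $x_u$.
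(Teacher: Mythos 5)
Your proposal is correct and matches the paper's (implicit) reasoning exactly: the paper treats both facts as immediate from the definition in (\ref{eq:defy}), and you have simply spelled out the case split on $x_v \lessgtr 1+\epsilon$ for the vertex bound and the pointwise bound $y_e \leq x_e$ for the blossom inequalities. No gap.
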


It remains to prove that while turning the expected fractional matching $x$ into an actual fractional matching $y$, we don't significantly hurt the matching's size. We address this in the lemma below.

\begin{lemma}\label{lem:ylarge}
	$\E[|y|] \geq (1-55\epsilon)\opt$.
\end{lemma}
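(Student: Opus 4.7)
The plan is to show that passing from the expected fractional matching $x$ to the genuine fractional matching $y$ loses at most an additional $20\epsilon\,\opt$ in expectation, which combined with Lemma~\ref{lem:sizeofx} yields the claimed bound. The starting point is a clean deficit inequality: whenever $y_e=0$ in~\eqref{eq:defy} despite $x_e>0$, at least one endpoint of $e$ has $x$-degree exceeding $1+\epsilon$, so charging each zeroed edge to such an endpoint (via the union bound on the indicator) gives
\[
(1+\epsilon)\,|y| \;\geq\; |x| \;-\; \sum_{v\in V} x_v\,\mathbbm{1}(x_v>1+\epsilon).
\]
Taking expectations and using $\E[|x|]\geq(1-34\epsilon)\opt$ from Lemma~\ref{lem:sizeofx} together with $1/(1+\epsilon)\geq 1-\epsilon$, the task reduces to showing $\sum_{v\in V}\E[x_v\,\mathbbm{1}(x_v>1+\epsilon)]\leq 20\epsilon\,\opt$.

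To control this sum I would combine a deterministic ceiling on $x_v$ with a tail bound on the event that $x_v$ is large. Decomposing $x_v=x_v^C+x_v^N$, equation~\eqref{eq:crucialxe} forces $x_v^C\in\{0,1\}$ while~\eqref{eq:noncrucialxe} forces $x_v^N=0$ whenever $v\in V(Z)$; hence $x_v>1+\epsilon$ implies $v\notin V(Z)$ and $x_v^N>1+\epsilon$. Using Lemma~\ref{lem:independentmatching}~(2), which gives $1-\Pr[X_u],1-\Pr[X_v]\geq\epsilon^2$, together with $f_v\leq 1$ from Claim~\ref{cl:frange}~(3), the definition~\eqref{eq:noncrucialxe} yields the deterministic bound $x_v\leq M:=1+1/(p\epsilon^{4})$. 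Consequently, once the tail bound $\Pr[x_v>1+\epsilon]\leq (\epsilon p)^{10}$ is established uniformly in $v$, Assumption~\ref{ass:optlarge} ($n\leq 10\,\opt/\epsilon$) finishes the job:
\[
\sum_v \E[x_v\,\mathbbm{1}(x_v>1+\epsilon)] \;\leq\; n\cdot M\cdot (\epsilon p)^{10} \;\leq\; \frac{10\,\opt}{\epsilon}\cdot\frac{(\epsilon p)^{10}}{p\,\epsilon^{4}} \;=\; 10\,\epsilon^{5}p^{9}\,\opt \;\leq\; 20\epsilon\,\opt.
\]

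Everything therefore reduces to the tail bound, which I would prove by peeling off three nested layers of randomness. On the outermost layer Claim~\ref{cl:frange}~(4) directly rules out $f_v>n_v+0.1\epsilon$ except with probability $(\epsilon p)^{10}$, and on the complement the computation behind Claim~\ref{cl:expxvlt1} gives $\E[x_v^N\mid f]\leq f_v\leq 1+0.1\epsilon$. On the innermost layer, conditioning further on $f$ and on the matching $Z$, $x_v^N$ becomes a weighted sum of the mutually independent Bernoullis $\{\mathbbm{1}(e\in\mc{N})\}_{e\ni v}$ with individual weights $a_e\leq 1/(p\,\epsilon^{4}\sqrt{\epsilon R})$ and total $\sum_e a_e^2\leq 1/(p^2\,\epsilon^{8}\sqrt{\epsilon R})$; since $R>(\epsilon p)^{-50}$ by Corollary~\ref{cor:thresholds}~(4), Hoeffding's inequality delivers a doubly-exponentially small deviation probability, absorbing any $O(\epsilon)$ slack. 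The intermediate $Z$-layer is the main obstacle: one must show that $\E[x_v^N\mid f,Z]$ concentrates around $\E[x_v^N\mid f]$, but the indicators $\mathbbm{1}(u\notin V(Z))$ for $N'$-neighbors $u$ of $v$ can be correlated when those neighbors are close in~$C$. I would handle this by splitting the $N'$-neighbors of $v$ into a ``well-separated'' part, in which Lemma~\ref{lem:independentmatching}~(4) directly yields joint independence and hence Chebyshev-type concentration, and a ``clustered'' part whose total contribution is negligible by a sparsity estimate analogous to Claim~\ref{cl:nplarge}, exploiting $\Delta_C\leq 1/\tau_+$ together with the enormous gap $\tau_-\ll\tau_+^{2\lambda}$ guaranteed by Corollary~\ref{cor:thresholds}~(1).
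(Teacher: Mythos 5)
Your outer structure matches the paper exactly: charge the $x\to y$ loss to endpoints with $x_v>1+\epsilon$, combine Lemma~\ref{lem:sizeofx} with a per-vertex tail bound $\Pr[x_v>1+\epsilon]$ and the deterministic ceiling $x_v=O(1/(p\epsilon^4))$ from Claim~\ref{cl:frange}~(3) and Lemma~\ref{lem:independentmatching}~(2), then finish with Assumption~\ref{ass:optlarge}. Your deficit inequality is in fact slightly tighter than the paper's (which uses a looser factor of $2$), and your use of Claim~\ref{cl:frange}~(4) for the outermost $f$-layer matches Claim~\ref{cl:6123719801923} exactly.

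However, your sketch of the middle $Z$-layer has a genuine gap. You propose to \emph{partition} the $N'$-neighbors $u_1,\dots,u_r$ of $v$ into a pairwise $\lambda$-separated ``well-separated'' subset (handled by Chebyshev via joint independence) and a ``clustered'' remainder whose total contribution you claim is negligible by an analogue of Claim~\ref{cl:nplarge}. This last step does not go through: a vertex $u_i$ is ``clustered'' merely because it has \emph{some} nearby $N'$-neighbor $u_j$ of $v$, and there can be arbitrarily many such $u_i$ (consider $N'$-neighbors arranged along a long path or dense cluster in $C$ far from $v$). Claim~\ref{cl:nplarge} bounds the number of close \emph{pairs} and works because it is aggregated over all $v$ and weighted by $q_e\leq\tau_-$; it gives no per-vertex bound on the $f$-mass $\sum_{u_i\ \text{clustered}} f_{e_i}$, for which the only generic bound is $f_v\leq 1$ --- far too weak. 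The paper avoids this by \emph{not partitioning}: Claim~\ref{cl:6123719801923} bounds $\Var[x_v\mid A]$ as a single sum of covariances $\Cov(x_{e_i},x_{e_j}\mid A)$, which vanish for $\lambda$-separated pairs (Observation~\ref{obs:cov0}) and are bounded by $f_{e_i}f_{e_j}/(p^2\epsilon^8)$ for close pairs; the total is then controlled because the number of close pairs through a fixed $u_i$ is $\leq\Delta_C^{\lambda-1}$, each $f_{e_j}\leq 1/\sqrt{\epsilon R}$, and $\sqrt{R}\gg\Delta_C^{\lambda}$ via $\tau_-=\tau_+^{10g}$ with $g\geq\lambda$. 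That pair-counting (rather than vertex-counting) step is essential and is what your sketch is missing. As a minor remark, your separate Hoeffding inner layer is unnecessary --- the paper's single Chebyshev simultaneously handles the randomness of $Z$ and of the non-crucial edge realizations --- though it is not incorrect.
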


The main ingredient in proving Lemma~\ref{lem:ylarge} is the following claim.

\begin{claim}\label{cl:6123719801923}
	For every vertex $v$, $\Pr[x_v > 1+\epsilon] \leq \epsilon^6p$.
\end{claim}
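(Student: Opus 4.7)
Since $x_v^C := \sum_{e \in C, e \ni v} x_e \in \{0,1\}$ with $x_v^C = 0$ iff $v \notin V(Z)$, and $x_v^N := \sum_{e \in N, e \ni v} x_e = 0$ whenever $v \in V(Z)$, the event $\{x_v > 1+\epsilon\}$ is equivalent to $\{v \notin V(Z),\ x_v^N > 1+\epsilon\}$. Hence it suffices to bound $\Pr[x_v^N > 1+\epsilon]$.

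The first step is to sharpen Claim~\ref{cl:frange} part~4 by direct concentration. Writing $\tilde f_v := \sum_{e \in N,\, e \ni v} t_e/R \geq f_v$, we have $R \tilde f_v = \sum_{i=1}^R M_i$ where $M_i \in \{0,1\}$ indicates whether $v$ is matched by some non-crucial edge in $\MM{\mc{G}_i}$; by definition of $n_v$ these are i.i.d.\ Bernoulli$(n_v)$. Hoeffding then gives $\Pr[f_v > n_v + \epsilon^3] \leq \exp(-2R\epsilon^6) \leq \exp(-\Omega(\epsilon^{-44}))$ using $R \geq 2(\epsilon p)^{-50}$, far below $\epsilon^6 p /2$. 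Henceforth I condition on the good event $G := \{f_v \leq n_v + \epsilon^3\}$.

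Next, conditional on Algorithm~\ref{alg:sampling}'s randomness and on a matching $Z$ with $v \notin V(Z)$, the quantity $x_v^N = \sum_{i: u_i \notin V(Z)} \frac{f_{e_i}}{p_{e_i}(1-\Pr[X_{u_i}])(1-\Pr[X_v])}\,\mathbbm{1}[e_i \text{ realized}]$ is a sum of independent scaled Bernoullis (since by Lemma~\ref{lem:independentmatching} part~3 the non-crucial realizations are independent of $Z$), each bounded by $\delta := 1/(p\epsilon^4\sqrt{\epsilon R}) \leq (\epsilon p)^{20}$. Let $\alpha_Z$ denote this conditional mean. Using Lemma~\ref{lem:independentmatching} part~4 (so that $\Pr[u_i \notin V(Z) \mid v \notin V(Z)] = 1-\Pr[X_{u_i}]$ for each $u_i$ at $C$-distance $\geq \lambda$ from $v$),
\[
\E_Z\bigl[\alpha_Z \mid \text{Alg},\, v \notin V(Z)\bigr] \;=\; \frac{f_v^{N'}}{1-\Pr[X_v]} \;\leq\; \frac{n_v+\epsilon^3}{n_v+\epsilon^2} \;<\; 1,
\]
where I used $1-\Pr[X_v] \geq 1-c_v+\epsilon^2 \geq n_v + \epsilon^2$ (combining Lemma~\ref{lem:independentmatching} part~2 with $c_v+n_v \leq 1$) and $\epsilon^3 < \epsilon^2$. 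Thus \emph{on average} over $Z$, $\alpha_Z$ is bounded below $1$.

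The main and hardest step is upgrading this first-moment bound on $\alpha_Z$ to a tail bound $\Pr[\alpha_Z > 1+\epsilon/2] \leq \epsilon^6 p/4$. The indicators $\mathbbm{1}[u_i \notin V(Z)]$ are not jointly independent across $i$, but by Lemma~\ref{lem:independentmatching} part~4 they \emph{are} jointly independent whenever the $u_i$'s lie at pairwise $C$-distance at least $\lambda$. Since $\Delta_C \leq 1/\tau_+$ by Observation~\ref{obs:crucialdegree}, the contributing $u_i$'s can be greedily partitioned into $O_{\epsilon,p}(1)$ color classes of pairwise $C$-distance $\geq \lambda$; within each class a Bernstein bound applies, and a union bound over classes gives the concentration of $\alpha_Z$. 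Finally, Bernstein applied to the inner sum (conditional on Alg and a $Z$ with $\alpha_Z \leq 1+\epsilon/2$) yields $\Pr[x_v^N > 1+\epsilon \mid \text{Alg}, Z] \leq \exp(-\Omega(\epsilon^2/\delta))$, overwhelmingly smaller than $\epsilon^6 p$. Combining the three tails---for the event $G$, for the outer concentration of $\alpha_Z$, and for the inner Bernstein on $x_v^N$---gives $\Pr[x_v > 1+\epsilon] \leq \epsilon^6 p$. The key obstacle to bypass is the two-layer dependence between $Z$ and the edge realizations, and it is precisely the local-independence property in Lemma~\ref{lem:independentmatching} part~4 that makes the outer concentration on $\alpha_Z$ possible.
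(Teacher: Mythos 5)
Your proof is correct but takes a genuinely different route from the paper. The paper performs a \emph{single-stage} second-moment argument: after conditioning on the good event for $f_v$ (proved via Chebyshev in Claim~\ref{cl:frange} part~4) and on $X_v=0$, it bounds $\Var[x_v \mid A]$ by writing the variance as a sum of covariances $\Cov(x_{e_i},x_{e_j}\mid A)$, observing (Observation~\ref{obs:cov0}) that this covariance vanishes whenever $d_C(u_i,u_j)\geq\lambda$, crudely bounding the remaining $|D_i|\leq(\Delta_C)^{\lambda-1}$ nearby pairs, and then applying Chebyshev once to $x_v$. Your proposal splits the two layers of randomness explicitly: first you concentrate $\alpha_Z$ (the conditional mean of $x_v^N$ given $Z$ and the algorithm's coins) around its expectation over $Z$ by greedily partitioning the non-crucial neighbors $u_i$ into $O_{\epsilon,p}(1)$ color classes of pairwise $C$-distance $\geq\lambda$, invoking Lemma~\ref{lem:independentmatching} part~4 for \emph{joint} independence within each class, and applying Bernstein plus a union bound over classes; then you condition on $Z$ with $\alpha_Z \leq 1+\epsilon/2$ and apply Bernstein again to the remaining sum of independent scaled Bernoulli realizations. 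You also sharpen the $f_v$ bound via Hoeffding rather than reusing the paper's Chebyshev-based Claim~\ref{cl:frange} part~4; this gives you the cleaner bound $\E_Z[\alpha_Z]<1$, though the paper's $1+0.5\epsilon$ suffices. Both strategies hinge on Lemma~\ref{lem:independentmatching} part~4 to kill the dependence among the $X_{u_i}$'s, and both close because $R$ (hence $1/\sqrt{\tau_-}$) is enormously larger than $(\Delta_C)^{\lambda}$. What your approach buys is a structurally transparent separation of the two randomness layers; what the paper's buys is brevity, since the covariance decomposition collapses the two-layer dependence into one variance computation and one Chebyshev application without ever needing the color-class machinery or multiple union bounds.
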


Let us first see how Claim~\ref{cl:6123719801923} suffices to prove Lemma~\ref{lem:ylarge} and then prove it.

\begin{proof}[Proof of Lemma~\ref{lem:ylarge}]
	We have
	\begin{flalign*}
		\sum_{e} y_e &= \sum_{e = \{u, v\}} \mathbbm{1}(x_u \leq 1+\epsilon \text{ and } x_v \leq 1+\epsilon) \frac{x_e}{1+\epsilon} && \text{By definition of $y_e$ in (\ref{eq:defy}).}\\
		&\geq \sum_{e = \{u, v\}} (1-\mathbbm{1}(x_u > 1+\epsilon)-\mathbbm{1}(x_v > 1+\epsilon)) \frac{x_e}{1+\epsilon} && \text{Union bound.}\\
		&= \sum_{e} \frac{x_e}{1+\epsilon} - 2\sum_{v : x_v > 1+\epsilon} \sum_{e \ni v} \frac{x_e}{1+\epsilon} = \sum_{e} \frac{x_e}{1+\epsilon} - 2\sum_{v : x_v > 1+\epsilon} \frac{x_v}{1+\epsilon}.
	\end{flalign*}
	Taking expectation from both sides, we get
	\begin{flalign}
		\nonumber\E\Big[ \sum_e y_e \Big] &\geq \E\Big[\sum_{e} \frac{x_e}{1+\epsilon} - 2\sum_{v : x_v > 1+\epsilon} \frac{x_v}{1+\epsilon}\Big] = \frac{1}{1+\epsilon}\left(\E\Big[\sum_{e} x_e\Big] - 2\E\Big[\sum_{v : x_v > 1+\epsilon} x_v \Big]\right)\\
		\nonumber&\geq \frac{1}{1+\epsilon}\left((1-34\epsilon)\opt - 2\E\Big[\sum_{v : x_v > 1+\epsilon} x_v \Big]\right) \qquad\qquad \text{By Lemma~\ref{lem:sizeofx}.}\\
		\nonumber &\geq (1-35\epsilon)\opt - 2\sum_{v} \Pr[x_v > 1+\epsilon]\E[x_v \mid x_v > 1+\epsilon]\\
		&\geq (1-35\epsilon)\opt - 2\sum_{v} \epsilon^6 p \E[x_v \mid x_v > 1+\epsilon] \qquad\qquad \text{By Claim~\ref{cl:6123719801923}.}\label{eq:98912308}
	\end{flalign}
	We will soon prove that for every vertex $v$, it \underline{deterministically} holds that $x_v \leq \frac{1}{p\epsilon^4}$. Replacing this into the last inequality above, gives the desired bound that
	\begin{flalign*}
		\E\Big[\sum_e y_e \Big] &\geq (1-35\epsilon)\opt - 2\sum_{v} \epsilon^6 p \frac{1}{p \epsilon^4} \geq (1-35\epsilon)\opt - 2\epsilon^2 n \stackrel{\text{Assumption~\ref{ass:optlarge}}}{\geq} (1-35\epsilon)\opt - 20\epsilon \opt \\
		&= (1-55\epsilon)\opt.
	\end{flalign*}
	Now let's see why $x_v \leq \frac{1}{p\epsilon^4}$. Observe from the definition of $x$ that if $v \in V(Z)$ then $x_v \leq 1$ and otherwise
	$$
		x_v = \sum_{e = \{v, u\}} x_e \leq \sum_{e = \{v, u\}} \frac{f_e}{p(1-\Pr[X_u])(1-\Pr[X_v])} \leq \frac{1}{p \epsilon^4} \sum_{e=\{v, u\}} f_e.
	$$
	The last inequality above comes from the fact that for every vertex $w$, $\Pr[X_w] \leq 1-\epsilon^2$ due to Lemma~\ref{lem:independentmatching} part 2, which means $1-\Pr[X_w] \geq \epsilon^2$. 
	
	Now recall from Claim~\ref{cl:frange} part 3 that $\sum_{e \ni v} f_e \leq 1$. Thus we get our desired upper bound that $x_v \leq \frac{1}{p\epsilon^4}$.	 As described above, this completes the proof that $\E[\sum_e y_e] \geq (1-55\epsilon)\opt$.
\end{proof}

We now turn to prove Claim~\ref{cl:6123719801923} that $\Pr[x_v > 1+\epsilon] \leq \epsilon^6 p$ for all $v$.

\newcommand{\eventvz}[0]{\ensuremath{A}}

\begin{proof}[Proof of Claim~\ref{cl:6123719801923}]
	If an edge incident to $v$ belongs to matching $Z$, i.e. if $X_v = 1$ (as defined in Lemma~\ref{lem:independentmatching}), then one can confirm easily from the definition of $x$ in (\ref{eq:crucialxe}) and (\ref{eq:noncrucialxe}) that either $x_v = 1$ or $x_v = 0$, implying that $\Pr[x_v > 1+\epsilon \mid X_v = 1] = 0$. As such, for the rest of the proof, we simply condition on the event that $X_v = 0$.
	
	Similar to the proof of Claim~\ref{cl:expxvlt1} let $u_1, u_2, \ldots, u_r$ be the neighbors of $v$ such that for each $i \in [r]$, (1) edge $e_i = \{v, u_i\}$ is non-crucial, and (2) $d_C(v, u_i) \geq \lambda$. Recall from (\ref{eq:21410238471098412}) that given event $X_v = 0$, it holds that
		\begin{equation*}
			x_v = x_{e_1} + x_{e_2} + \ldots + x_{e_r}.
		\end{equation*}
	
		Let $f'_v := \sum_{i=1}^r f_{e_i}$ and note that $f'_v \leq f_v$ since $f_v$ is sum of $f_e$ of all non-crucial edges $e$ connected to $v$. Claim~\ref{cl:frange} part 4 proves that $\Pr[f_v \geq n_v + 0.1\epsilon] \leq (\epsilon p)^{10}$. Therefore, it also holds that $\Pr[f'_v \geq n_v + 0.1\epsilon] \leq (\epsilon p)^{10}$ since $f'_v \leq f_v$. For the rest of the proof, we regard $f_{e_i}$'s as (adversarially) fixed with the only assumption that $f'_v < n_v + 0.1\epsilon$ which happens with probability at least $1 - (\epsilon p)^{10}$. We denote this event, as well as the event that $X_v = 0$, by $\eventvz$ and prove
		\begin{equation}\label{eq:980989825627}
		\Pr[x_v > 1 + \epsilon \mid \eventvz] \leq 0.5\epsilon^6p,
		\end{equation}
		which clearly is sufficient for proving the claim.
	
		We do this by proving a concentration bound using the second moment method. Consider the variance of $x_v$ conditioned on $\eventvz$:
		\begin{flalign*}
			\Var[x_v \mid \eventvz] = \sum_{i=1}^r \sum_{j=1}^r \Cov(x_{e_i}, x_{e_j} \mid \eventvz).
		\end{flalign*}
		Now that $f_e$'s are fixed, $x_v$ is only a random variable of (1) the randomization used in Lemma~\ref{lem:independentmatching} for obtaining matching $Z$, and (2) the realization of non-crucial edges.
		
		In what follows we identify a condition under which covariance of $x_{e_i}$ and $x_{e_j}$ becomes $0$. We will use this later to upper bound $\Var[x_v \mid \eventvz]$. 
		
		\begin{observation}\label{obs:cov0}
			Let $i, j \in [r]$ be such that $d_C(u_i, u_j) \geq \lambda$. Then $\Cov(x_{e_i}, x_{e_j} \mid \eventvz) = 0$.
		\end{observation}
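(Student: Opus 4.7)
\textbf{Plan for Observation~\ref{obs:cov0}.} The plan is to show that, conditional on $\eventvz$, the variables $x_{e_i}$ and $x_{e_j}$ are actually independent, which trivially gives zero covariance. Recall that $\eventvz$ fixes the outcome of Algorithm~\ref{alg:sampling} (and hence all the values $f_{e_k}$, subject to $f'_v < n_v + 0.1\epsilon$) and additionally stipulates $X_v = 0$. Under this conditioning, plugging into (\ref{eq:noncrucialxe}) and using that $d_C(v,u_i) \geq \lambda$ by the very choice of the set $\{u_1,\ldots,u_r\}$, the variable $x_{e_i}$ collapses to
$$x_{e_i} \;=\; \frac{f_{e_i}}{p_{e_i}(1-\Pr[X_v])(1-\Pr[X_{u_i}])} \cdot \mathbbm{1}(e_i \in \mc{E}) \cdot \mathbbm{1}(X_{u_i} = 0),$$
where the leading scalar is a deterministic constant because $f_{e_i}$, $p_{e_i}$, $\Pr[X_v]$, and $\Pr[X_{u_i}]$ are all non-random. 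The analogous formula holds for $x_{e_j}$.

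So the task reduces to showing that the two pairs $(\mathbbm{1}(e_i \in \mc{E}),\, X_{u_i})$ and $(\mathbbm{1}(e_j \in \mc{E}),\, X_{u_j})$ are mutually independent given $\eventvz$. First, since $e_i$ and $e_j$ are distinct non-crucial edges, their realization indicators $\mathbbm{1}(e_i \in \mc{E})$ and $\mathbbm{1}(e_j \in \mc{E})$ are independent Bernoullis. Second, by part~(3) of Lemma~\ref{lem:independentmatching} the matching $Z$, and hence the whole triple $(X_v, X_{u_i}, X_{u_j})$, is independent of the realization of the non-crucial edges; in particular both realization indicators are jointly independent of $(X_v,X_{u_i},X_{u_j})$. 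Third, by hypothesis $d_C(u_i,u_j) \geq \lambda$ and by construction $d_C(v,u_i), d_C(v,u_j) \geq \lambda$, so part~(4) of Lemma~\ref{lem:independentmatching} applied to the triple $\{v,u_i,u_j\}$ yields joint independence of $X_v$, $X_{u_i}$, and $X_{u_j}$; consequently, conditioning on $X_v = 0$ preserves the independence of $X_{u_i}$ and $X_{u_j}$.

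Combining these three independence statements, the two pairs $(\mathbbm{1}(e_i \in \mc{E}),\, X_{u_i})$ and $(\mathbbm{1}(e_j \in \mc{E}),\, X_{u_j})$ are mutually independent under $\eventvz$. Since $x_{e_i}$ and $x_{e_j}$ are deterministic functions of these respective pairs, they too are independent, and therefore $\Cov(x_{e_i}, x_{e_j} \mid \eventvz) = 0$. The only non-routine step is checking that the joint (not merely pairwise) independence afforded by part~(4) of Lemma~\ref{lem:independentmatching} can be invoked on the triple $\{v,u_i,u_j\}$, which is precisely why all three pairwise $C$-distances must simultaneously be at least $\lambda$ --- and this is exactly what the hypothesis $d_C(u_i,u_j) \geq \lambda$ ensures on top of the two distance conditions built into the definition of the $u_i$'s.
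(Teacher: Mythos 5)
Your proof is correct and follows essentially the same route as the paper's: both invoke Lemma~\ref{lem:independentmatching} part~4 on the triple $\{v,u_i,u_j\}$ (using all three pairwise $C$-distances being $\geq \lambda$) to get joint independence of $X_v, X_{u_i}, X_{u_j}$, invoke part~3 for independence of the edge realizations from $Z$ and the (fixed) $f$-values, and then observe that $x_{e_i}$ (resp.~$x_{e_j}$) is, conditionally on $\eventvz$, a deterministic function of $(\mathbbm{1}(e_i \in \mc{E}), X_{u_i})$ (resp.~$(\mathbbm{1}(e_j \in \mc{E}), X_{u_j})$), yielding conditional independence and hence zero covariance. Your write-up is somewhat more explicit about the exact factorization of $x_{e_i}$ and about why conditioning on $X_v=0$ preserves the independence of $X_{u_i}$ and $X_{u_j}$, but this is the same argument.
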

		\begin{proof}
			We already had $d_C(v, u_i) \geq \lambda$ and $d_C(v, u_j) \geq \lambda$ by definition of $u_i, u_j$. Combined with assumption $d_C(u_i, u_j) \geq \lambda$ and using Lemma~\ref{lem:independentmatching} part 4, we get that $X_v, X_{u_i}, X_{u_j}$ are independent. Realization of $e_i$ and $e_j$ are also independent even given $\eventvz$. This is because these are non-crucial edges and thus are realized independently from $Z$ (according to Lemma~\ref{lem:independentmatching} part 3) or the values of $f$ which are derived from Algorithm~\ref{alg:sampling}.
			
			By definition (\ref{eq:noncrucialxe}), the value of $x_{e_i}$ conditioned on $\eventvz$ is fully determined once we know $X_{u_i}$ and whether $e_i$ is realized. Similarly, the value of $x_{e_j}$ conditioned on $\eventvz$ is fully determined once we know $X_{u_j}$ and whether $e_j$ is realized. These, as discussed above, are independent. Hence $x_{e_i}$ and $x_{e_j}$, conditioned on $\eventvz$, are independent and thus their covariance is 0.
		\end{proof}
		
		Now consider two vertices $u_i$ and $u_j$ (possibly $u_i = u_j$) where $d_C(u_i, u_j) < \lambda$. Here, the covariance may not be 0. But we still can upper bound it as follows:
		\begin{flalign}
		\nonumber \Cov(x_{e_i}x_{e_j} \mid \eventvz) &= \E[x_{e_i}x_{e_j} \mid \eventvz] - \E[x_{e_i} \mid A]\E[x_{e_j} \mid A] \leq \E[x_{e_i}x_{e_j} \mid \eventvz]\\
		\nonumber &\leq \frac{f_{e_i}}{p(1-\Pr[X_v])(1-\Pr[X_{u_i}])} \times \frac{f_{e_j}}{p(1-\Pr[X_v])(1-\Pr[X_{u_j}])}\\
		&\leq \frac{f_{e_i}f_{e_j}}{p^2 \epsilon^8},\label{eq:3819123987}
		\end{flalign}
		where the last inequality follows from Lemma~\ref{lem:independentmatching} part 2 that states for all vertices $w$, $\Pr[X_w] < 1-\epsilon^2$ and thus $1-\Pr[X_w] \geq \epsilon^2$.
		
		Now, for each $i \in [r]$, let $D_i := \{j : d_C(u_i, u_j) < \lambda \}$. Since $C$ is a graph of max degree $\Delta_C$, the $\lambda-1$ neighborhood of each vertex $u_i$ in $C$ includes $\leq (\Delta_C)^{\lambda-1}$ vertices. Thus:
		\begin{equation}\label{eq:87193107123897}
			|D_i| \leq (\Delta_C)^{\lambda-1} \qquad\qquad \text{for every $i \in [r]$.}
		\end{equation}
		Having these, we obtain that
		\begin{flalign*}
			\Var[x_v \mid \eventvz] &= \sum_{i=1}^r \sum_{i=1}^r \Cov(x_{e_i}, x_{e_j} \mid \eventvz) \stackrel{\text{Obs~\ref{obs:cov0}}}{=} \sum_{i = 1}^r \sum_{j \in D_i} \Cov(x_{e_i}, x_{e_j} \mid \eventvz) \stackrel{(\ref{eq:3819123987})}{\leq} \sum_{i = 1}^r \sum_{j \in D_i} \frac{f_{e_i}f_{e_j}}{p^2\epsilon^8} \\
			& = \frac{1}{p^2 \epsilon^8}\sum_{i = 1}^r \Big(f_{e_i}\sum_{j \in D_i} f_{e_j}	\Big) \stackrel{f_{e_j} \leq \frac{1}{\sqrt{\epsilon R}} \text{ by (\ref{eq:deff})}}{\leq} \frac{1}{p^2\epsilon^8} \sum_{i = 1}^r \Big(f_{e_i} |D_i| \frac{1}{\sqrt{\epsilon R}}	\Big)\\
			& \stackrel{(\ref{eq:87193107123897})}{\leq} \frac{(\Delta_C)^{\lambda-1}}{p^2 \epsilon^8\sqrt{\epsilon R}} \sum_{i = 1}^r f_{e_i} \stackrel{\text{Claim~\ref{cl:frange} part 3}}{\leq} \frac{(\Delta_C)^{\lambda-1}}{p^2 \epsilon^8\sqrt{\epsilon R}} \stackrel{\text{Obs~\ref{obs:crucialdegree}}}{\leq} \frac{(1/\tau_+)^{\lambda-1}}{p^2\epsilon^{8.5}\sqrt{R}}.
		\end{flalign*}
		Replacing $R$ with $\frac{1}{2\tau_-}$ and noting that $\tau_- = (1/\tau_+)^{10 g}$, we get that
		\begin{flalign*}
		\Var[x_v \mid A] \leq \frac{2(1/\tau_+)^{\lambda}}{p^2 \epsilon^{8.5}(1/\tau_+)^{10g}} &= \frac{2}{p^2 \epsilon^{8.5}}(\tau_+)^{10g - \lambda}\\
		&< \frac{2\tau_+}{p^2 \epsilon^{8.5}} && \text{By Observation~\ref{obs:ggtlambda} $g \geq \lambda > 1$ and $\tau_+ < 1$.}\\
		&< \frac{2 (\epsilon p)^{50}}{p^2 \epsilon^{8.5}} && \text{Corrolary~\ref{cor:thresholds} part 4.}\\
		&= 2 \epsilon^{41.5} p^{48} < 0.1 \epsilon^8 p.
		\end{flalign*}
		With this upper bound on the variance, we can use Chebyshev's inequality to get
		\begin{equation}\label{eq:cheb18023}
			\Pr\Big[|x_v - \E[x_v \mid \eventvz]| > 0.5\epsilon \,\Big\vert\, \eventvz\Big] \leq \frac{\Var[x_v \mid \eventvz]}{(0.5\epsilon)^2} \leq \frac{0.1 \epsilon^8 p}{0.25 \epsilon^2} < 0.5\epsilon^6 p.
		\end{equation}
		Next, recall from (\ref{eq:421610986201363}) in the proof of Claim~\ref{cl:expxvlt1} that $\E[x_v \mid v\not\in V(Z)] \leq \frac{\sum_{i=1}^r \E[f_{e_i}]}{1-\Pr[X_v]} = \frac{f'_v}{1-\Pr[X_v]}$. Event $\eventvz$ in addition to $v \not\in V(Z)$ also fixes the value of $f'_v$. But recall that event $\eventvz$ (as we defined it) guarantees $f'_v \leq n_v + 0.5\epsilon$. Therefore, we get
		\begin{equation}\label{eq:74777748123}
			\E[x_v \mid \eventvz] \leq \frac{n_v + 0.5\epsilon}{1-\Pr[X_v]} \stackrel{\Pr[X_v] < c_v}{\leq} \frac{n_v + 0.5\epsilon}{1-c_v} \stackrel{n_v \leq 1-c_v}{\leq} \frac{1-c_v+0.5\epsilon}{1-c_v} \leq 1 + 0.5\epsilon.
		\end{equation}
		Combining (\ref{eq:cheb18023}) and (\ref{eq:74777748123}) we get the claimed inequality of (\ref{eq:980989825627}) that 
		$$
		\Pr[x_v > 1+\epsilon \mid \eventvz] \leq \Pr[|x_v - \E[x_v \mid A]| > 0.5\epsilon \mid \eventvz] \leq 0.5\epsilon^6p,
		$$
		which as described before suffices to prove $\Pr[x_v > 1+\epsilon] \leq \epsilon^6p$.
\end{proof}

\section{Proof of the Vertex-Independent Matching Lemma}\label{sec:independentmatching}

\newcommand{\neighbors}[1]{\ensuremath{\mathsf{Neighbors}(#1)}}
\newcommand{\isrealized}[1]{\ensuremath{\mathsf{IsRealized}(#1)}}
\newcommand{\dependent}[0]{\ensuremath{\mathcal{D}}}

In this section we turn to prove Lemma~\ref{lem:independentmatching} restated below.

\restate{Lemma~\ref{lem:independentmatching}}{
	\independentmatching{}
}

\subsection{Overview of the Algorithm}\label{sec:crucialoverview}

In this section, we give an overview of our algorithm for proving Lemma~\ref{lem:independentmatching}. We emphasize that the overview given here is deliberately informal to describe the main intuitions, with the hope that it makes the algorithm and its analysis more accessible.

Satisfying property~3 required by Lemma~\ref{lem:independentmatching} turns out to be easy. Recall that we are constructing matching $Z$ on the realized {\em crucial} edges, thus we can simply ignore realization of non-crucial edges and automatically satisfy property~3. Among the other 3, let us first focus on property~4. How can we argue that the output matching satisfies the required independence property? We show that the \local{} model of computation can be naturally used for this purpose. We start with the formal definition of the model and then describe how it can be used in this case.

\smparagraph{The \local{} model \cite{DBLP:journals/siamcomp/Linial92}.} In the \local{} model, the input is a graph and there is a processor on each node of this graph. Computation proceeds in synchronous rounds and in each round, each processor can send a message (of any size) to each of its neighbors. The goal is to output a property of this communication graph, e.g. a matching of it. At the end, each node should know its part of the output, e.g. which one of its edges, if any, is part of the matching.

\smparagraph{Why the \local{} model.} A particularly useful property of any $r$-round \local{} algorithm is that the output of each node essentially depends only on its $r$-hop neighborhood. That is, having the $r$-hop neighborhood of each node $v$ (including the random tapes of the nodes in the neighborhood), we can uniquely determine the output of $v$. Therefore if the shortest path between two nodes is at least $2r+1$, their outputs are essentially independent of each other after $r$ rounds. 

This is how we prove property~4 of Lemma~\ref{lem:independentmatching} is satisfied: We give a \local{} algorithm operating on graph $C$ where each vertex is initially only aware of the realization of its incident edges. We show that the algorithm within $< \lambda/2$ rounds, finds a matching satisfying the other 3 properties. Then property~4 will be automatically satisfied. That is, for every subset $I$ of the vertices with pairwise distance at least $\lambda$, their outputs will be independent.

\smparagraph{Overview of the algorithm.} The challenge is to ensure that the algorithm has low round-complexity while also satisfying properties 1 and 2. That is, the reported matching $Z$ should be large in expectation (property 1), and that no vertex $v$ should be matched with a larger probability than that specified in property~2. If one ignores the 2nd property, then simply finding a $(1-\epsilon)$-approximate maximum matching in graph $\mc{C}$ will satisfy the first property. And we remark that $O(\log \Delta_C)$-round algorithms (with no dependence on $n$) do exist for this purpose. However, bounding at the same time, the probability that each vertex is matched complicates things. 

Our general idea for the algorithm is as follows: We define a recursive algorithm $\findmatching{r}{\mc{C}}$ (Algorithm~\ref{alg:crucial}) which uses $\findmatching{r-1}{\mc{C}}$ as a subroutine. The base algorithm $\findmatching{0}{\mc{C}}$ returns an empty matching. Let us use $Z_r$ to denote the matching returned by $\findmatching{r}{\mc{C}}$. It will hold that
$$
0 = \E[|Z_0|] \leq \E[|Z_1|] \leq \E[|Z_2|] \leq \ldots 
$$
until eventually for large enough $t = O_\epsilon(1)$, $\E[|Z_t|]$ is desirably large, satisfying property~1. At the same time, we will ensure that for any vertex $v$, the probability that it gets matched in $Z_r$ never exceeds the upper bound of property~2 for any $r$. 

Suppose that for a vertex $v$, we hit this upper bound on the probability that it is matched for algorithm $\findmatching{r}{\mc{C}}$. At this point, we will mark $v$ as {\em saturated} and ensure that we never increase the probability of it being matched. But to keep increasing the matching's size, it may be necessary to say remove a matching edge $\{v_1, v_2\}$ between two saturated vertices $v_1$ and $v_2$, so that we can add two edges $\{v_1, v_3\}$ and $\{v_2, v_4\}$ to the matching where $v_3$ and $v_4$ are unsaturated. Such structures are similar to augmenting paths. However, since the graph is stochastic, these edges $\{v_1, v_2\}, \{v_1, v_3\}, \{v_2, v_4\}$ may not necessarily be part of one realization. We call these natural generalizations of augmenting paths, ``augmenting hyperwalks'' (see Section~\ref{sec:formalcrucialalg}) and show that they can be used to increase the matching size while not increasing probability of saturated vertices getting matched.

In Section~\ref{sec:formalcrucialalg} we present a centralized view of the algorithm. In Section~\ref{sec:p1} we analyze the expected size of the matching returned by this algorithm and argue that it satisfies property~1 of Lemma~\ref{lem:independentmatching}. In Section~\ref{sec:p2} we prove the upper bound on the probability of each vertex getting matched, thereby proving property~2 of Lemma~\ref{lem:independentmatching}. Finally, in Section~\ref{sec:p4} we show that the algorithm has an efficient \local{} implementation, satisfying property~4 of Lemma~\ref{lem:independentmatching}.

\subsection{The Formal Algorithm}\label{sec:formalcrucialalg}

We say $P = ((\mc{C}_0, M_0), \ldots, (\mc{C}_\alpha, M_\alpha))$ is a {\em profile} if each $\mc{C}_i$ is a subgraph of $C$ and each $M_i$ is a matching of $\mc{C}_i$. Furthermore, we call a sequence $W = ((e_1, s_1), \ldots, (e_k, s_k))$ a {\em hyperwalk} of size $k$ if the following conditions hold:
\begin{enumerate}[itemsep=0pt,topsep=5pt]
	\item Each $s_i$ is an integer in $\{0, \ldots, \alpha\}$.
	\item Each $e_i$ is an edge in graph $C$ and sequence $(e_1, e_2, \ldots, e_k)$ is a walk in graph $C$.
\end{enumerate}
We say $P\Delta W := ((\mc{C}_0, M'_0), \ldots, (\mc{C}_\alpha, M'_\alpha))$ is the result of {\em applying} $W$ on $P$ if:
$$
	M'_i = M_i \cup \{ e_j \mid j \text{ is odd, and } s_j = i\} \setminus \{ e_j \mid j \text{ is even, and } s_j = i\}, \qquad \text{for all $i \in \{0, \ldots, \alpha\}$}.
$$
\begin{definition}[Augmenting hyperwalks] \label{def:aug}
For every vertex $v$, let $d_P(v) := \big |\{ i \mid v \in V(M_i) \}\big |$. We say $W$ is an {\em augmenting-hyperwalk} of $P$ if it satisfies the three following conditions.
\begin{enumerate}[itemsep=0pt,topsep=5pt]
	\item $P \Delta W$ is a profile, i.e. each  $M'_i$ in $P \Delta W$ is a matching of graph $\mc{C}_i$.
	\item For all vertices $v$ in walk $(e_1, \ldots, e_k)$ except its first and last vertex, $d_P(v) = d_{P\Delta W}(v)$. 
	\item For the first and last vertices $v$ in walk $(e_1, \ldots, e_k)$, $d_P(v) + 1 = d_{P \Delta W}(v)$.
\end{enumerate}
\end{definition}
Having defined augmenting-hyperwalks, we can now formally state the algorithm---see Algorithm~\ref{alg:crucial}. The algorithm is recursive. Given a realization $\mc{C}$ of $C$, algorithm $\findmatching{r}{\mc{C}}$ uses algorithm $\findmatching{r-1}{\mc{C}}$ as a subroutine and then returns a matching of $\mc{C}$. The base algorithm $\findmatching{0}{\mc{C}}$ returns an empty matching. We will show that for $t = 1/\epsilon^9$, algorithm $\findmatching{t}{\mc{C}}$ satisfies the properties of Lemma~\ref{lem:independentmatching}.

\begin{tboxalgh}{\findmatching{r}{\mc{C}}}\label{alg:crucial}
	\vspace{-0.3cm}
	\begin{enumerate}[label=(\arabic*), itemsep=-2pt, leftmargin=23pt]
		\item If $r = 0$, return $\emptyset$.
		\item Draw $\alpha := 1/\epsilon^7-1$ realizations $\mc{C}_1, \ldots, \mc{C}_\alpha$ of $C$  where each realization $\mc{C}_{i}$ includes each edge $e$ of $C$ independently with probability $p_e$. Also let $\mc{C}_0 := \mc{C}$.
		\item Consider profile $P = ((\mc{C}_0, M_0), \ldots, (\mc{C}_\alpha, M_\alpha))$ where $M_i = \findmatching{r-1}{\mc{C}_i}$.
		\item For every vertex $v$, define $\gamma_{v, r-1} := \Pr[\text{$v$ is matched in \findmatching{r-1}{\mc{C}'}}]$ where the probability is taken over a random realization $\mc{C}'$ of $C$ and the randomization of the algorithm. 
		\item If $\gamma_{v,r-1} < c_v-2\epsilon^2$ call vertex $v$ {\em unsaturated} and {\em saturated} otherwise. 
		\item Construct a graph $H=(V_H, E_H)$ as described next. For every possible augmenting-hypewalk of size smaller than $2/\epsilon$ from $P$, we put a vertex in $V_H$ iff the first and last vertices in the walk are unsaturated. Moreover, we put an edge in $E_H$ between two nodes $u, v \in V_H$ iff their corresponding walks share at least a vertex.
		\item  \label{line:setI}$I \gets \apxMIS{H, \epsilon}$. \algcomment{This is an algorithm that returns an independent set of expected size at least $1-\epsilon$ fraction of some maximal independent set (MIS) of $H$.}
		\item $P' \gets P$.
		\item Iterate over all augmenting-hyperwalks $W \in I$ and apply them, i.e. set $P' \gets P' \Delta W$.
	\item Let $P'=((\mc{C}_0, M'_0), \ldots, (\mc{C}_\alpha, M'_\alpha))$ be the final profile. Return matching $M'_0$.
	\end{enumerate}
\end{tboxalgh}
We note a useful observation that essentially implies the entries of profile $P'$, which can be thought of as random variables of realization $\mc{C}$ and randomizations of the algorithm, are all drawn from the same distribution. The proof is essentially based on the fact that matchings $M_0, \ldots, M_\alpha$ are all drawn from the same distribution and treated symmetrically in algorithm, thus the resulting matchings $M'_0, \ldots, M'_\alpha$ all have the same distribution. See Section~\ref{sec:proofs} for a more formal proof.

\begin{observation}\label{obs:samedist}
	Matchings $M'_0, \ldots, M'_\alpha$ in profile $P'$ of algorithm $\findmatching{r}{\mc{C}}$ for any $r$ have the same distribution. That is, for any $i, j \in \{0, \ldots, \alpha\}$ and any matching $M'$ of $G$, $\Pr[M'_i = M'] = \Pr[M'_j = M']$.
\end{observation}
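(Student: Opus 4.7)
The plan is to prove the observation by induction on $r$, strengthening the claim to full exchangeability: the joint distribution of $(M'_0, M'_1, \ldots, M'_\alpha)$ is invariant under any permutation of the indices $\{0, 1, \ldots, \alpha\}$. This stronger statement is natural to induct on, and the marginal claim of Observation~\ref{obs:samedist} follows immediately by taking the permutation to be the transposition of $i$ and $j$.

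For the base case $r = 1$, every recursive call to $\findmatching{0}{\cdot}$ returns $\emptyset$, so $M_0 = \cdots = M_\alpha = \emptyset$ is trivially exchangeable. For the inductive step, I would decompose the argument into two parts. First, I would argue that the tuple $\{(\mc{C}_i, M_i)\}_{i=0}^\alpha$ entering line~(3) of Algorithm~\ref{alg:crucial} is exchangeable: the realizations $\mc{C}_0, \mc{C}_1, \ldots, \mc{C}_\alpha$ are i.i.d.\ (the input $\mc{C}_0=\mc{C}$ has the same distribution as the freshly drawn $\mc{C}_1, \ldots, \mc{C}_\alpha$), and conditioning on them, each $M_i$ is obtained by an independent call to $\findmatching{r-1}{\mc{C}_i}$ using independent random tapes, so by the inductive hypothesis the conditional distributions are symmetric in $i$.

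Second, I would verify that every remaining step of the algorithm treats the indices $0, \ldots, \alpha$ symmetrically, so that applying a permutation $\pi$ to the input tuple induces exactly the $\pi$-permutation of the output. The thresholds $\gamma_{v,r-1}$ and the saturated/unsaturated labeling depend only on the distribution of $\findmatching{r-1}{\cdot}$, not on any specific coordinate; the definition of augmenting hyperwalk in Definition~\ref{def:aug} and the graph $H$ built in line~(6) refer to the pairs $(\mc{C}_i, M_i)$ without distinguishing the index $0$ (the integers $s_j \in \{0, \ldots, \alpha\}$ appear symmetrically); $\apxMIS{H, \epsilon}$ operates on $H$ as an abstract graph, so its output distribution depends only on $H$ up to isomorphism; and the update rule $P' \gets P' \Delta W$ applies the same symmetric formula to each $M_i$. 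Composing these symmetries, if the input tuple is exchangeable then so is the output tuple $(M'_0, \ldots, M'_\alpha)$, completing the induction.

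The main potential obstacle is making precise that $\apxMIS{H, \epsilon}$ does not secretly break symmetry through, say, a vertex-ID-dependent tie-breaking rule that happens to correlate with the indices $0, \ldots, \alpha$. This is handled by noting that the vertices of $H$ are augmenting hyperwalks—objects defined without reference to a distinguished coordinate—so any internal randomness of $\apxMIS{}$ is a function of $H$'s isomorphism class and not of the labels $i$. The only other sublety is ensuring the $\alpha+1$ recursive calls use independent randomness, which is a standard assumption implicit in the algorithm's statement and costs nothing to enforce explicitly.
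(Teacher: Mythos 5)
Your high-level approach is the same as the paper's: the realizations $\mc{C}_0,\ldots,\mc{C}_\alpha$ are i.i.d., and all downstream processing treats the indices $\{0,\ldots,\alpha\}$ symmetrically, so the output tuple is exchangeable. Two comments, one cosmetic and one substantive.

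The cosmetic point: your induction on $r$ and the appeal to the inductive hypothesis are unnecessary. In the inductive step you invoke the hypothesis to claim the conditional distributions of the $M_i$ given the $\mc{C}_i$ are symmetric in $i$, but this is automatic without any inductive assumption: each $M_i$ is just $\findmatching{r-1}{\mc{C}_i}$ run with an independent random tape, so the pairs $(\mc{C}_i, M_i)$ are literally i.i.d. The exchangeability of the output of a level-$(r-1)$ call among its own internal $M'_0,\ldots,M'_\alpha$ plays no role, because only the returned matching (the $M'_0$ of that call) is used. So the paper dispenses with induction entirely and so could you.

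The substantive point is your treatment of $\apxMIS{H,\epsilon}$. You correctly flag the danger — that a $\local$ algorithm typically breaks ties via vertex IDs, and if those IDs correlate with the indices $0,\ldots,\alpha$ then symmetry fails — but your resolution does not close the gap. Saying that the vertices of $H$ are hyperwalks ``defined without reference to a distinguished coordinate'' is a statement about the hyperwalks being permutation-covariant; it does not by itself imply that $\apxMIS{}$'s \emph{output distribution} is a function of $H$ up to isomorphism. Each hyperwalk explicitly carries index labels $s_j\in\{0,\ldots,\alpha\}$, so a naive ID assignment (e.g., lexicographic in the $s_j$'s) would encode the indices into the IDs and could bias the MIS toward augmenting one $M_i$ over another. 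The paper resolves this by explicitly stipulating that the IDs fed to $\apxMIS{}$ be assigned uniformly at random, which makes the output distribution invariant under index permutations. You need that extra step; asserting invariance from the definition of $H$ alone is not justified.
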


The algorithm operates only on the crucial edges and is thus clearly independent of the non-crucial edges and their realizations. Therefore, property 3 of Lemma~\ref{lem:independentmatching} is automatically satisfied. In what follows, we prove the other 3 properties in Sections~\ref{sec:p1}, \ref{sec:p2}, and \ref{sec:p4}.

\subsection{Lemma~\ref{lem:independentmatching} Property 1: The Matching's Size}\label{sec:p1}
In this section, we prove that algorithm $\findmatching{t}{\mc{C}}$  satisfies the first property of Lemma~\ref{lem:independentmatching}. That is the matching $Z$ returned by this algorithm satisfies $\E[|Z|] \geq q(C) - 30\epsilon \opt$.

Let us denote by $Z_r$ the matching returned by $\findmatching{r}{\mc{C}}$. Note that $Z_r$ is a random variable which is a function of both the randomization in realization $\mc{C}$ of $C$, and the internal randomizations used in algorithm $\findmatching{r}{\mc{C}}$.  (Observe that $Z = Z_t$.) Similarly, we define $P_r$, $H_r$, $I_r$, and $P'_r$ as the random variables referring to the values of $P$, $H$, $I$, and $P'$ in algorithm $\findmatching{r}{\mc{C}}$.

Property~1 of Lemma~\ref{lem:independentmatching} is a corollary  of  Lemma~\ref{lem:increaseexp} which states that for any $r$, if $\E[|Z_r|] \leq q(C)-30\epsilon\opt$, then $\E[|Z_r|] -\E[|Z_{r-1}|] \geq \epsilon^9\opt$.  Observe that it is sufficient for us as it implies that for any $r$ we have
$$
\E\big[|Z_t|\big] \geq \min\{q(C) - 30\epsilon \opt,  r\epsilon^9\opt\}.
$$ This gives us the desired result that $\E\big[|Z_t|\big]  \geq q(C) - 30\epsilon \opt$ for $t=1/\epsilon^9$, since $q(C) \leq \opt$. Below we state Lemma~\ref{lem:increaseexp} and prove it.
\begin{lemma}\label{lem:increaseexp}
	For any $r$, if $\E[|Z_r|] \leq q(C)-30\epsilon\opt$, then $\E[|Z_r|] -\E[|Z_{r-1}|] \geq \epsilon^9\opt$.
\end{lemma}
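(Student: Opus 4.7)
\textbf{Plan for Lemma~\ref{lem:increaseexp}.} The plan is to relate the per-round gain $\E[|Z_r|] - \E[|Z_{r-1}|]$ to $\E[|I|]$, the expected size of the independent set of augmenting hyperwalks chosen by Algorithm~\ref{alg:crucial} at depth $r$, and then to lower-bound $\E[|I|]$ by combining the $\apxMIS{\cdot}$ guarantee with a structural supply of short augmenting hyperwalks.

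First I would establish the identity
\begin{equation*}
(\alpha+1)\bigl(\E[|Z_r|] - \E[|Z_{r-1}|]\bigr) \;=\; \E[|I|].
\end{equation*}
By Definition~\ref{def:aug}, each applied augmenting hyperwalk raises $d_P$ by $1$ at its two endpoints and leaves $d_P$ unchanged elsewhere, so it increases $\sum_{i=0}^\alpha |M_i|$ by exactly $1$. Because the hyperwalks in $I$ form an independent set of $H$, they are pairwise vertex-disjoint, their effects on the $M_i$'s commute, and $\sum_i |M'_i| = \sum_i |M_i| + |I|$. Observation~\ref{obs:samedist} says each $M'_i$ is distributed as $M'_0 = Z_r$, and each $M_i = \findmatching{r-1}{\mc{C}_i}$ is distributed as $Z_{r-1}$; taking expectations yields the identity. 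In particular $\E[|Z_r|] \geq \E[|Z_{r-1}|]$. Since $\alpha+1 = 1/\epsilon^7$, it now suffices to prove $\E[|I|] \geq \epsilon^2\opt$.

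Next I would invoke the $\apxMIS{\cdot}$ guarantee. As any maximal independent set of $H$ has size at least $|V_H|/(\Delta_H+1)$, the algorithm of step~\ref{line:setI} returns $\E[|I|] \geq (1-\epsilon)\,\E\bigl[|V_H|/(\Delta_H+1)\bigr]$. The maximum degree $\Delta_H$ admits a deterministic bound: two hyperwalks are adjacent only if they share one of their ${\leq}\,2/\epsilon + 1$ vertices, and the number of hyperwalks of size $<2/\epsilon$ through a fixed vertex is at most $(2/\epsilon)^2\,\Delta_C^{2/\epsilon}\,(\alpha+1)^{2/\epsilon}$ (choose the position of the vertex in the walk, the walk in $C$, and a label $s_j \in \{0,\ldots,\alpha\}$ per edge). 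Using $\Delta_C \leq 1/\tau_+$ by Observation~\ref{obs:crucialdegree} and $\alpha+1 = 1/\epsilon^7$, this is an $O_{\epsilon,p}(1)$ bound $D$. It therefore remains to show $\E[|V_H|] \geq D\epsilon^2\opt/(1-\epsilon) = \Omega_{\epsilon,p}(\opt)$.

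The main obstacle, and where the real work lies, is this structural lower bound on $\E[|V_H|]$. My starting point is the slack identity
\begin{equation*}
\sum_v (c_v - \gamma_{v,r-1}) \;=\; 2q(C) - 2\E[|Z_{r-1}|] \;\geq\; 2q(C) - 2\E[|Z_r|] \;\geq\; 60\epsilon\opt,
\end{equation*}
where the last step uses the hypothesis. Saturated vertices contribute at most $2\epsilon^2$ each and Assumption~\ref{ass:optlarge} gives $2\epsilon^2 n \leq 20\epsilon\opt$, so the unsaturated vertices must carry $\sum_{v \text{ unsat.}} c_v \geq 40\epsilon\opt$ of crucial matching-mass from $\MM{\mc{G}}$. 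To turn this deficit into short augmenting hyperwalks I would compare the profile $P$ against a witness profile obtained by adjoining an auxiliary matching such as $\MM{\mc{C}_0}$ (or $\MM{\mc{C}_0 \cup M_1 \cup \cdots \cup M_\alpha}$), and decompose the symmetric difference into alternating structures that cross the $\alpha+1$ layers. Because the total deficit is $\Omega(\epsilon\opt)$ while the number of layers is $\alpha+1 = \epsilon^{-7}$, a pigeonhole/averaging argument forces a constant fraction of these structures to have length $<2/\epsilon$; and because the deficit is concentrated at unsaturated vertices, their endpoints can be taken unsaturated with constant probability. Each such short alternating structure yields a distinct element of $V_H$, giving $\E[|V_H|] = \Omega_{\epsilon,p}(\opt)$ and closing the argument.
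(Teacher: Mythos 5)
The first part of your plan is exactly the paper's: you correctly establish $\E[|Z_r|]-\E[|Z_{r-1}|] = \E[|I|]/(\alpha+1)$ via the "apply $|I|$ disjoint hyperwalks" argument (the paper's Lemma~\ref{cl:increaseinZisI}), and your slack identity $\sum_v(c_v-\gamma_{v,r-1}) = 2(q(C)-\E[|Z_{r-1}|]) \geq 60\epsilon\opt$ together with the $20\epsilon\opt$ budget for saturated vertices is also the paper's starting point. The arithmetic showing it suffices to prove $\E[|I|]\geq\epsilon^2\opt$ is fine.

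Where you go wrong is the plan to lower-bound $\E[|I|]$ via $\E[|I|]\geq(1-\epsilon)\E[|V_H|]/(\Delta_H+1)$ with a \emph{deterministic} bound $\Delta_H\leq D=O(\epsilon^{-1}((\alpha+1)\Delta_C)^{2/\epsilon})$. This forces you to show $\E[|V_H|]\geq D\epsilon^2\opt/(1-\epsilon)$, which is an unrealistic target: the total number of hyperwalks of length $<2/\epsilon$ in $C$ is at most $n\cdot((\alpha+1)\Delta_C)^{2/\epsilon}\approx (10\opt/\epsilon)\cdot D/(2/\epsilon)^2$, so you are asking a $\Theta(\epsilon)$ fraction of \emph{all} short hyperwalks to be augmenting with unsaturated endpoints simultaneously. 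Nothing in the slack identity produces that density, and the alternating-structure decomposition you gesture at produces structures whose number is governed by the slack $O((\alpha+1)\epsilon\opt)$, which is smaller than $D\epsilon^2\opt$ by a factor of roughly $D\epsilon/(\alpha+1)$ --- an astronomically large factor. The worst-case $\Delta_H$ bound is simply far too lossy. The paper avoids this entirely: it constructs a family $O$ of \emph{edge-disjoint} augmenting hyperwalks with unsaturated endpoints satisfying $\E[|O|]\geq 8(\alpha+1)\epsilon\opt$, and then uses the edge-disjointness of $O$ --- not the degree of $H$ --- to bound the number of elements of $O$ that any single short hyperwalk can conflict with by $(\alpha+1)(2/\epsilon)$. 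This gives an expected MIS of size at least $\E[|O|]/(2(\alpha+1)/\epsilon)=4\epsilon^2\opt$, and $\apxMIS{}$ loses only a $(1-\epsilon)$ factor. So the key idea you are missing is to work with an \emph{edge-disjoint} supply of hyperwalks and exploit that disjointness to control the conflict degree, rather than bounding the degree of the ambient conflict graph $H$. Your sketch of constructing alternating structures against a witness profile $\MM{\mc{C}_i\cup\mc{N}_i}$ is pointing in the right direction (that is indeed how the paper builds $O$), but without the edge-disjointness bookkeeping the argument does not close.
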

\begin{proof}[Proof outline]
This lemma is a direct result of Lemma~\ref{cl:increaseinZisI} and Lemma~\ref{lem:sizeI}. The first one states that for any $r$, we have $\E[|Z_r|] \geq \E[|Z_{r-1}|] + \frac{\E[|I_r|]}{\alpha+1}$ and the second one is that if $\E[|Z_{r-1}|] \leq q(C)-30\epsilon\opt$, then $\E[|I_r|] \geq 2\epsilon^2 \opt$. Combining these two lemmas gives us $\E[|Z_r|] -\E[|Z_{r-1}|] \geq \epsilon^9\opt$ and completes the proof as $\alpha=1/\epsilon^7 - 1$.
\end{proof}

\begin{lemma}\label{cl:increaseinZisI}
	For any $r$, it holds that $\E[|Z_r|] = \E[|Z_{r-1}|] + \frac{\E[|I_r|]}{\alpha+1}$.
\end{lemma}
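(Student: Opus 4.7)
The plan is to track the total matching size across the whole profile, and then exploit the symmetry guaranteed by Observation~\ref{obs:samedist} to extract $\E[|M'_0|] = \E[|Z_r|]$.

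First I would analyze the effect of a single augmenting-hyperwalk $W$ on the quantity $\Sigma(P) := \sum_{i=0}^{\alpha} |M_i|$. Noting that $\sum_v d_P(v) = 2\Sigma(P)$ (each matching edge in some $M_i$ contributes to the degree count of its two endpoints), conditions~(2) and~(3) of Definition~\ref{def:aug} imply that applying $W$ leaves $d_P(v)$ unchanged for all internal vertices of the walk while increasing it by exactly $1$ at each of the two endpoints. Hence $\sum_v d_{P\Delta W}(v) = \sum_v d_P(v) + 2$, so $\Sigma(P\Delta W) = \Sigma(P) + 1$.

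Next I would argue that applying the collection of hyperwalks in $I_r$ sequentially gives $\Sigma(P'_r) = \Sigma(P_r) + |I_r|$. This uses the fact that $I_r$ is an independent set in $H$, so the corresponding hyperwalks are pairwise vertex-disjoint. Because of vertex-disjointness, the order of application is irrelevant, and each walk remains a valid augmenting-hyperwalk of the intermediate profile (conditions~(1)--(3) of Definition~\ref{def:aug} only depend on local edge/degree changes at its own vertices). Thus each application contributes exactly $+1$ to $\Sigma$, giving the stated additive relation.

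Now I would take expectations. By Observation~\ref{obs:samedist}, $M'_0, \ldots, M'_\alpha$ are identically distributed, so
\begin{equation*}
(\alpha+1)\,\E[|M'_0|] \;=\; \sum_{i=0}^{\alpha} \E[|M'_i|] \;=\; \E[\Sigma(P'_r)] \;=\; \E[\Sigma(P_r)] + \E[|I_r|].
\end{equation*}
On the other hand, each $M_i$ in $P_r$ equals $\findmatching{r-1}{\mc{C}_i}$ with $\mc{C}_i$ an independent realization of $C$, so $\E[|M_i|] = \E[|Z_{r-1}|]$ for every $i \in \{0, \ldots, \alpha\}$, giving $\E[\Sigma(P_r)] = (\alpha+1)\E[|Z_{r-1}|]$. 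Substituting and recalling that $Z_r = M'_0$ yields $(\alpha+1)\E[|Z_r|] = (\alpha+1)\E[|Z_{r-1}|] + \E[|I_r|]$, which rearranges to the claim.

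I do not expect any serious obstacle here: the argument is essentially double-counting combined with the symmetry observation. The only point that needs a careful (but short) check is that the hyperwalks in $I_r$ can indeed be applied one after another without invalidating each other's augmenting property, which follows directly from vertex-disjointness of members of an independent set in $H$.
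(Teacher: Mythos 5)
Your proposal is correct and follows essentially the same approach as the paper: you count $\sum_v d_P(v)$, use conditions~(2) and~(3) of Definition~\ref{def:aug} together with vertex-disjointness of the independent set to show $\Sigma(P'_r) = \Sigma(P_r) + |I_r|$, and then invoke Observation~\ref{obs:samedist} and the identical distribution of the $M_i$'s to pass to expectations.
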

\begin{proof} We start by proving that 
\begin{equation} \label{eq:md88} \sum_{v\in V} d_{P_r}(v) + 2|I_r| = \sum_{v\in V} d_{P'_r}(v).\end{equation} Note that, $P'_r$ is defined to be the result of iteratively applying all the augmenting hyperwalks of $I_r$ on $P_r$. Let $P^{(i)}_r$ be the result of iteratively applying the first $i$ augmenting hyperwalks of $I_r$ on $P_r$ and let $W_i$ be the hyperwalk that is to be applied in iteration $i$. We use proof by induction and show that for any $i$ we have $$\sum_{v\in V} d_{P_r}(v) + 2i = \sum_{v\in V} d_{P^{(i)}_r}(v).$$
Note that since hyperwalks in $I_r$ are vertex disjoint, for any two hyperwalks $W_1, W_2 \in I_r$ it holds that $W_2$ is an augmenting hyperwalk of $P_r\Delta W_1$ as well. This means that $W_i$ is indeed an augmenting hyperwalk of $P^{(i)}_r$. Moreover, recall that by definition of augmenting hyperwalks, after applying any augmenting hyperwalk on a profile $P$ there are only two vertices whose $d_{P}(v)$ increases by one and for the rest of the vertices it is unchanged. This gives us $$\sum_{v\in V} d_{P^{(i)}_r}(v) + 2 = \sum_{v\in V} d_{P^{(i+1)}_r}(v),$$ with completes the proof of  $$\sum_{v\in V} d_{P_r}(v) + 2|I_r| = \sum_{v\in V} d_{P'_r}(v)$$ since $P'_r = P^{|I_r|}_r$. 
Recall the definition $d_P(v) := \big |\{ i \mid v \in V(M_i) \}\big |$ for any profile $P$. Based on this definition, we can rewrite Equation~\ref{eq:md88} as 

\begin{equation} \label{eq:md99} \sum_{i=0}^{\alpha} |M_i| + |I_r| = \sum_{i=0}^{\alpha} |M'_i|.\end{equation} 

Observe that matchings $M_0, \dots, M_\alpha$ are coming from the same distribution and we have $\E\big[|M_i|\big] = \E\big[|Z_{r-1}|\big]$ for any $0\leq i\leq \alpha$. The reason is that they are the results of running the same matching algorithm on random realizations of $C$. Moreover, by Observation~\ref{obs:samedist}, matchings $M'_0, \dots, M'_\alpha$ are similarly coming from the same distribution which means for any $0\leq i\leq \alpha$ we have $Z_{r} = \E\big[|M'_0|\big] = \E\big[|M_i|\big]$. Combining this with Equation~\ref{eq:md99} we get
$$ \E\big[(\alpha+1)|Z_{r-1}| +|I|\big] = \E\left[\sum_{i=0}^{\alpha} |M_i| + |I|\right] = \E\left[\sum_{i=0}^{\alpha} |M'_i|\right] = \E\big[(\alpha+1)|Z_r|\big].$$
Dividing through by $\alpha + 1$ and rearranging the terms gives $ \E\big[|Z_{r-1}|\big] +\frac{\E[|I|]}{\alpha+1} =  \E\big[|Z_r|\big]$.
\end{proof}
Before proceeding to Lemma~\ref{lem:sizeI} and its proof we need the following definition.
\begin{definition}[Edge disjoint hyperwalks]
	We say two hyperwalks $W = ((e_1, s_1), \dots, (e_k, s_k) )$ and $W' = ((e'_1, s'_1), \dots, (e'_{k'}, s'_{k'}) )$ are edge disjoint if there does not exist indices $i< k$ and $j< k'$, where $e_i = e'_j$ and $s_i = s'_j$. 
\end{definition}

\begin{lemma} \label{lem:sizeI}
	If $\E[|Z_{r-1}|] \leq q(C)-30\epsilon\opt$, then $\E[|I_r|] \geq 2\epsilon^2 \opt$.
\end{lemma}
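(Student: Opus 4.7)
My plan is to exhibit an independent set of expected size $\Omega(\epsilon\opt)$ in $H$; applying the $(1-\epsilon)$-approximation guarantee of $\apxMIS{H,\epsilon}$ would then give $\E[|I_r|] \geq 2\epsilon^2\opt$. An independent set of $H$ is by definition a vertex-disjoint collection of augmenting hyperwalks of $P_r$ of size less than $2/\epsilon$ whose first and last vertices are unsaturated, so the whole task reduces to exhibiting such a collection of expected size $\Omega(\epsilon\opt)$.

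The first step is a deficit bound. Using $\sum_v c_v = 2q(C)$ and $\sum_v \gamma_{v,r-1} = 2\E[|Z_{r-1}|]$, the hypothesis $\E[|Z_{r-1}|]\leq q(C)-30\epsilon\opt$ gives
\[
\sum_v (c_v-\gamma_{v,r-1}) \;=\; 2q(C)-2\E[|Z_{r-1}|] \;\geq\; 60\epsilon\opt.
\]
A saturated vertex contributes at most $2\epsilon^2$ to this sum, and Assumption~\ref{ass:optlarge} yields $n\leq 10\opt/\epsilon$, so the saturated contribution is at most $20\epsilon\opt$. Hence the unsaturated set $U$ carries at least $40\epsilon\opt$ of deficit, with many vertices satisfying $c_v-\gamma_{v,r-1}>2\epsilon^2$.

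Second, within a single realization $\mc{C}_0$ of the profile I analyze the symmetric difference $M_0\Delta \MM{\mc{C}_0}$. Since $\MM{\mc{G}}\cap C$ is always a matching of $\mc{C}$, we have $\E[|\MM{\mc{C}_0}|]\geq q(C)$, so the expected number of augmenting paths is $\geq q(C)-\E[|Z_{r-1}|]\geq 30\epsilon\opt$. The total length of these paths is at most $2|\MM{\mc{C}_0}|$, so a length-averaging argument discards only $O(\epsilon\opt)$ augmenting paths of length exceeding $2/\epsilon$, leaving $\Omega(\epsilon\opt)$ short ones. Each such path is an augmenting hyperwalk of $P_r$ with all coordinates $s_j=0$, and these hyperwalks are pairwise vertex-disjoint within $\mc{C}_0$; this will be the source of the independent set.

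The key difficulty, and the place where most care is required, is enforcing that both endpoints of each selected path lie in $U$: endpoints of paths in $M_0\Delta \MM{\mc{C}_0}$ a priori lie in $V(\MM{\mc{C}_0})\setminus V(M_0)$ and can happen to be saturated. I plan to handle this by an accounting argument using both the saturation threshold and the inductive invariant $\gamma_{v,r-1}\leq c_v-\epsilon^2$ (property~2 of Lemma~\ref{lem:independentmatching} for $r-1$): a saturated vertex has $\gamma_{v,r-1}\in[c_v-2\epsilon^2, c_v-\epsilon^2]$, so it is ``almost filled up'' and the expected number of saturated endpoint-slots is constrained by the $\leq 20\epsilon\opt$ saturated-deficit budget versus the $\geq 30\epsilon\opt$ gross augmentation budget, yielding $\Omega(\epsilon\opt)$ short augmenting paths with both endpoints in $U$. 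If this per-realization bound turns out to be too loose in some regime (e.g., when most crucial mass touches saturated vertices), the fallback is to leverage the $\alpha+1=1/\epsilon^7$ additional independent realizations in the profile and build short cross-realization augmenting hyperwalks (using edges of $\mc{C}_j$ and matchings $M_j$ for $j\neq 0$) that reroute a saturated endpoint through a matched saturated vertex to an unsaturated endpoint; averaging over $i\in\{0,\ldots,\alpha\}$ then produces the required $\Omega(\epsilon\opt)$ vertex-disjoint augmenting hyperwalks with unsaturated endpoints, and feeding them into $\apxMIS{H,\epsilon}$ delivers $\E[|I_r|]\geq 2\epsilon^2\opt$.
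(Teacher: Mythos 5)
Your high-level strategy matches the paper's: find many disjoint, short augmenting hyperwalks of $P_r$ with unsaturated endpoints, use them to lower-bound the size of any maximal independent set of $H_r$, and apply the $(1-\epsilon)$ guarantee of $\apxMIS{H,\epsilon}$. Your opening deficit bound $\sum_{v\in U}(c_v-\gamma_{v,r-1})\geq 40\epsilon\opt$ is exactly the paper's starting point, and the length-discarding step is a clean idea (the paper implicitly needs something like it). However, the per-realization construction that carries the weight of your argument has two genuine gaps.

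\textbf{The target matching is wrong.} You take $\MM{\mc{C}_0}$, a maximum matching of the realized crucial graph. An endpoint of an augmenting path in $M_0\Delta\MM{\mc{C}_0}$ is a vertex in $V(\MM{\mc{C}_0})\setminus V(M_0)$, and $\Pr[v\in V(\MM{\mc{C}_0})]$ has nothing to do with $c_v$: it can be close to $1$ even when $c_v$ is tiny (e.g., $v$ lies in a dense crucial cluster but is usually matched non-crucially in $\MM{\mc{G}}$). Such a $v$ is then \emph{saturated} (since $c_v<2\epsilon^2$ forces $\gamma_{v,r-1}\geq 0 > c_v-2\epsilon^2$) yet is an endpoint with probability near $1$, so the ``saturated endpoint-slots are constrained by the $20\epsilon\opt$ budget'' step does not bind your construction. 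The paper instead draws a fresh realization $\mc{N}_0$ of the non-crucial edges and colors green/red against $M^g_0:=\MM{\mc{N}_0\cup\mc{C}_0}$ restricted to crucial edges; the crucial part of $M^g_0$ has the same distribution as $\MM{\mc{G}}\cap C$, so $\E[g_v]-\E[r_v]=(\alpha+1)(c_v-\gamma_{v,r-1})$, and only then does your deficit budget constrain saturated endpoints.

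\textbf{Even with the corrected target, a single realization does not suffice.} The deficit controls the \emph{difference} $\Pr[\text{green-only at }v]-\Pr[\text{red-only at }v]$, not $\Pr[\text{green-only at }v]$ alone. A saturated vertex can be a green-only endpoint in $\mc{C}_0$ with probability $1/2$ while the deficit is $0$. This is exactly why the paper's hyperwalks are allowed to switch realizations: when the walk reaches $v$ with a green edge in $\mc{C}_i$ and no red edge in $\mc{C}_i$, it continues via a red edge of $v$ in some other $\mc{C}_j$. A hyperwalk dies at a saturated $v$ only when $g_v$ and $r_v$ are out of balance across all $\alpha+1$ realizations, the number of such dead ends is at most $|g_v-r_v|$ (the paper's set $T_2$), and Chebyshev over the $\alpha+1=1/\epsilon^7$ independent copies bounds $\E[|g_v-r_v|]$. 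None of this has an analogue in one realization, so what you present as a ``fallback'' is in fact the entire content of the paper's Lemma~5.5; the paragraph sketching it is where essentially all the unsupplied work lives.

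Finally, a smaller wrinkle in your opening line: $\apxMIS{H,\epsilon}$ returns $(1-\epsilon)$ of \emph{some} maximal IS, not of the maximum IS, so exhibiting one large IS of $H$ does not by itself lower-bound $\E[|I_r|]$. You need the conflict-counting argument (every node of $H$ is a walk of length $<2/\epsilon$ and thus meets at most $O((\alpha+1)/\epsilon)$ of the disjoint hyperwalks you built) to conclude that \emph{every} maximal IS of $H$ is large. Your disjointness and length bounds make this fixable, but as written the inference is not valid.
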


\begin{proof}
	To give the desired lower-bound for $\E[|I_r|]$ we first claim that if $\E[|Z_{r-1}|] \leq q(C)-30\epsilon\opt$, then there exists a set $O$ of edge-disjoint augmenting-hyperwalks of $P_r$ with unsaturated end-points where $\E[|O|] \geq8(\alpha+1)\epsilon\opt$. We later state this claim more formally in Lemma~\ref{lemma-size} and provide a  proof for it. We are interested in set $O$ for its two following properties. First, any hyperwalk in $O$ represents a node in graph $H_r$. Second, since the hyperwalks in $O$ are edge disjoint, any hyperwalk with length smaller than $2/\epsilon$ from $P_r$ can share vertices with at most $(\alpha+1)(2/\epsilon)$ hyperwalks in this set. We note that $(\alpha+1)$ is the maximum number of edge disjoint hyperwalks that can pass through a single vertex. Combining these two properties gives that the expected size of any maximal independent set of $H_r$ is at least $\E\big[|O|\big]/(2(\alpha+1)/\epsilon) = 4\epsilon^2\opt$ since there is an edge between two vertices in $H_r$ iff their corresponding hyperwalks share at least a vertex. As stated in Line~\ref{line:setI} of \findmatching{r}{\mc{C}}, set $I_r$ is an independent set of $H_r$ with size at least $(1-\epsilon)$ fraction of a maximal independent set of $H_r$. Therefore, we have $$\E\big[|I_r|\big] \geq 4(1-\epsilon)\epsilon^2\opt.$$ Assuming that $\epsilon\leq 1/2$ we complete the proof of this claim and obtain $\E\big[|I_r|\big] \geq 2\epsilon^2\opt$.
	\end{proof}
	
In the rest of this section we focus on proving the following lemma which is previously used to complete the proof of Lemma~\ref{lem:sizeI}. Since the proof is detailed and consists of independent arguments, it includes two claims that are needed to complete the proof.
 
\begin{lemma} \label{lemma-size}
For any $r\in [t]$, if $\E\big[|Z_{r-1}|\big] \leq q(C) - 30 \epsilon \opt$, then there exists a set $O$ of edge-disjoint augmenting hyperwalks of profile $P_r =  ((\mc{C}_0, M_0), \ldots, (\mc{C}_\alpha, M_\alpha))$ with unsaturated endpoints where $\E[|O|] \geq8(\alpha+1)\epsilon\opt.$ 
\end{lemma}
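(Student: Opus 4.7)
The plan is to build $O$ layer-by-layer and then take the union, treating each layer as an independent augmenting-path computation against a carefully chosen benchmark matching, and paying only for endpoints that happen to be saturated.

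For each layer $i\in\{0,1,\dots,\alpha\}$, I would couple $\mc{C}_i$ with a freshly drawn independent realization $\mc{G}_i$ of $G$ (sampling the non-crucial edges on top of $\mc{C}_i$) and set the benchmark matching $N_i := \MM{\mc{G}_i}\cap C$. Three features make $N_i$ the right benchmark: $N_i\subseteq \mc{C}_i$, so every edge of $N_i$ is a valid layer-$i$ edge; $\E[|N_i|] = q(C)$; and $\Pr[v\in V(N_i)] = c_v$ for every $v$. I would then consider the symmetric difference of $M_i$ and $N_i$ inside $\mc{C}_i$ and let $A_i$ be its odd-length alternating components, i.e.\ the augmenting paths of $M_i$ with respect to $N_i$. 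A path $(e_1,\dots,e_k)\in A_i$ lifts to the hyperwalk $((e_1,i),\dots,(e_k,i))$, and a direct check against Definition~\ref{def:aug} confirms this is an augmenting hyperwalk of $P_r$ (interior vertices swap their $M_i$-mate but stay matched, while the two endpoints become matched at layer $i$). Vertex-disjointness of the components of a symmetric difference makes each $A_i$ edge-disjoint within its layer, and since the layer tag differs across $i$, the union $O := \bigcup_i A_i$ is edge-disjoint overall.

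For the size lower bound, $|A_i|\ge |N_i|-|M_i|$ together with $\E[|M_i|]=\E[|Z_{r-1}|]$ (since $M_i=\findmatching{r-1}{\mc{C}_i}$ on a generic realization) gives
\[
\sum_{i=0}^{\alpha} \E[|A_i|] \ \ge\ (\alpha+1)\bigl(q(C) - \E[|Z_{r-1}|]\bigr) \ \ge\ 30(\alpha+1)\epsilon\,\opt
\]
under the hypothesis. To retain only hyperwalks whose \emph{both} endpoints are unsaturated, I would subtract the expected number of aug paths with a saturated endpoint, bounded by $\sum_i \sum_{v\text{ sat}}\Pr[v\in V(N_i)\setminus V(M_i)]$. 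The target per-layer estimate is $O(\epsilon\,\opt)$: combining the saturation slack $c_v-\gamma_{v,r-1}\le 2\epsilon^2$ with $n\le 10\,\opt/\epsilon$ from Assumption~\ref{ass:optlarge} yields $\sum_{v\text{ sat}}O(\epsilon^2)\le O(\epsilon)\,\opt$, so summed over $\alpha+1$ layers the waste is at most $O((\alpha{+}1)\epsilon\,\opt)$. Choosing constants carefully leaves at least $8(\alpha+1)\epsilon\,\opt$ hyperwalks with both endpoints unsaturated, which gives $O$.

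The main obstacle is precisely the saturated-endpoint bound, because the marginals alone give only $\Pr[v\in V(N_i)\setminus V(M_i)] \le \min(c_v,\,1-\gamma_{v,r-1})$, which can stay close to $c_v$ or to $1-c_v+2\epsilon^2$ and so can swell the sum over saturated $v$ to $\Omega(n)$---far too large. What is needed is the correlation-aware bound $\Pr[v\in V(N_i)\setminus V(M_i)] \lesssim (c_v-\gamma_{v,r-1}) + O(\epsilon^2)$, asserting that the event $v\in V(M_i)$ forces $v\in V(N_i)$ up to a small slack. I would establish it by exploiting the coupling between $N_i$ and $M_i$ on the shared crucial realization $\mc{C}_i$---both are matchings driven by the same $\mc{G}_i$---combined with an induction on $r$ tracking how $\findmatching{r-1}$ aligns with the full-graph optimum on crucial edges. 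This correlation estimate is the technical heart of the lemma; the symmetric-difference bookkeeping above is then just arithmetic.
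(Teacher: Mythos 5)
Your layered benchmark idea is genuinely close to the paper's: the paper also couples each $\mc{C}_i$ with a fresh non-crucial realization $\mc{N}_i$, sets $M^g_i := \MM{\mc{N}_i \cup \mc{C}_i}$, and builds green/red alternating structures in each layer. Your bookkeeping via $|A_i| \ge |N_i| - |M_i|$ and the summation $\sum_i \E[|A_i|] \ge (\alpha+1)(q(C) - \E[|Z_{r-1}|]) \ge 30(\alpha+1)\epsilon\,\opt$ also matches the paper's accounting. However, the remaining gap is not a technical footnote; it is the entire reason the lemma is nontrivial, and your sketch for closing it would not succeed.

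The problem is that you constrain each hyperwalk to live entirely in a single layer $i$, which forces you to prove the per-layer correlation bound $\Pr[v\in V(N_i)\setminus V(M_i)] \lesssim (c_v - \gamma_{v,r-1}) + O(\epsilon^2)$ for every saturated $v$. Marginal information alone only gives $\Pr[v\in V(N_i)\setminus V(M_i)]$ up to $\min(c_v,\ 1 - \gamma_{v,r-1})$, which can be a constant, and there is no structural reason to expect more: $M_i = \findmatching{r-1}{\mc{C}_i}$ is built recursively from an approximate MIS on an augmenting-hyperwalk graph with extra independent realizations, while $N_i$ comes from a fixed deterministic $\MM{\cdot}$, so the two matchings need not align vertex-by-vertex even on the same $\mc{C}_i$. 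The paper never attempts such a per-layer coupling because it almost certainly fails. Instead, the crucial idea it uses — and which your proposal is missing — is to allow a hyperwalk to \emph{change layer} whenever it is blocked: when the alternating path in layer $i$ arrives at a vertex $v'$ with no red continuation in $\mc{C}_i$, the walk jumps to a layer $\mc{C}_j$ where $v'$ has an unused red edge but no green edge, and keeps going. This is exactly why the lemma is stated in terms of augmenting \emph{hyperwalks} with layer tags $s_i$ rather than ordinary augmenting paths; if single-layer paths sufficed, the machinery of hyperwalks would be pointless.

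The cross-layer mechanism changes the quantity that must be controlled. The paper partitions the constructed walks into $O$ (both endpoints unsaturated, ends on green), $T_1$ (ends on red at an unsaturated vertex), and $T_2$ (ends at a saturated vertex), and then bounds $\E[|T_2|]$ globally: it shows that the number of walks that die at a saturated $v$ is at most $|g_v - r_v|$ where $g_v, r_v$ are the \emph{total} counts of green/red edges at $v$ across all $\alpha+1$ layers. Because those counts are sums of $\alpha+1$ independent contributions, a Chebyshev bound gives $\E[|g_v - \E g_v|] \le 2(\alpha+1)^{2/3}$, and combined with the saturation slack $\E[g_v] - \E[r_v] = (\alpha+1)(c_v - \gamma_{v,r-1}) \le 2(\alpha+1)\epsilon^2$ this yields $\E[|T_2|] \le (\alpha+1)\epsilon\opt(20 + 40\epsilon^{1/3})$. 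This concentration across layers is what replaces the per-layer correlation bound you would need; in your single-layer setup there is nothing to concentrate, so the argument has no substitute. You should redesign the walk construction to interleave layers (as the definition of hyperwalks anticipates) and then bound the terminating-at-saturated-vertex walks via a cross-layer concentration argument rather than a per-layer coupling.
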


We will first construct set $O$ and then give a  lower-bound for its expected size. Draw $\alpha+1$ realizations $\mc{N}_0, \dots, \mc{N}_\alpha$ of the non-crucial graph $N$. For any $0\leq i \leq \alpha$, let $M^g_i := \MM{\mc{N}_i \cup \mc{C}_i}$ where $M$ returns a unique maximum matching that was also used in Algorithm~\ref{alg:sampling}. Call an edge of graph $\mc{C}_i$ \emph{green} iff it is in matching $M^g_i$ but not in matching $M_i$. Alternatively, we call an edge \emph{red} iff it is in $M_i$ but not in $M^g_i$. 
To construct set $O$ we give an algorithm to iteratively find hyperwalks that alternate between green and red edges. Since we need our hyperwalks to be edge-disjoint, after using an edge of a subgraph we mark it as used and ignore it for the rest of the algorithm. 

At each iteration of the algorithm, we construct a hyperwalk $W$ as follows until there is no such a hyperwalk left. Pick an unsaturated vertex $v$ and a subgraph $\mc{C}_i$ such that $v$ has an unused green edge in $\mc{C}_i$ but not a red one. Denote this green edge by $e=(v, v')$ and choose $(e, i)$ to be the first element of our hyperwalk. If vertex $v'$ has a red edge $e'$ in subgraph $\mc{C}_i$ we add $(e', i)$ to our hyperwalk, otherwise we look for a  subgraph $\mc{C}_j$ in which $v'$ has an unused red edge $e'$ but not a green one and choose $(e', j)$ as the second element of the hyperwalk. We continue this process by alternating the colors until it is not possible to continue. Let $u$ be the vertex in which our hyperwalk ends. If $u$ is saturated we add $W$ to a set $T_2$. Otherwise, if the last edge of $W$ is green we add it to $O$ and if it is red we add $W$ to $T_1$. In the following claim we show that the hyperwalks in $O$ have the desired property and we later prove that $|O|$ is large enough.

\begin{claim}
Any $W \in O$ is an augmenting-hyperwalks that begins and ends in unsaturated vertices. 
\end{claim}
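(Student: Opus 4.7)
The plan is to verify three things for every $W = ((e_1,s_1),\ldots,(e_k,s_k)) \in O$ with vertex sequence $v_0,v_1,\ldots,v_k$: (a) both endpoints are unsaturated, (b) the edge colors alternate starting and ending with green (so $k$ is odd), and (c) $P_r\Delta W$ is a valid profile with $d_{P_r\Delta W}(v_0) = d_{P_r}(v_0)+1$, $d_{P_r\Delta W}(v_k) = d_{P_r}(v_k)+1$, and $d_{P_r\Delta W}(v_j)=d_{P_r}(v_j)$ for every interior $v_j$. Parts (a) and (b) are read directly off the construction: the algorithm only ever selects an unsaturated starting vertex and a green starting edge; colors alternate at every extension step; and the rule that places $W$ into $O$ rather than into $T_1$ or $T_2$ demands precisely that the walk ends with a green edge and that the terminal vertex be unsaturated.

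For (c) I would first establish the following color invariant about the \emph{initial} matchings in $P_r$, which drives the rest of the argument: for any vertex $w$ and any subgraph $\mc{C}_i$, (i) $w$ is matched in $M_i$ via a red edge iff $w$ has a red edge in $\mc{C}_i$, and (ii) $w$ is unmatched in $M_i$ iff $w$ has no red edge in $\mc{C}_i$. This follows because $M^g_i$ matches $w$ through at most one edge; that edge is either absent from $M_i$ (i.e.\ green, in which case the $M_i$-partner of $w$, if any, must lie outside $M^g_i$ and is therefore red) or else is also in $M_i$ (a ``colorless'' edge), which rules out any red incidence. In particular, ``green but no red at $\mc{C}_i$'' is exactly equivalent to ``unmatched in $M_i$'', and this is what the algorithm's selection rule enforces whenever it opens or switches subgraphs.

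Given the color invariant, I would verify (c) by walking from $v_0$ to $v_k$ and checking the update at each vertex. For an interior $v_j$ lying between $e_j$ and $e_{j+1}$, there are two cases depending on the selection rule at that step. If $s_j=s_{j+1}$, then applying $W$ in $M_{s_j}$ swaps the removed red edge for the added green edge, so $v_j$ remains matched in $M_{s_j}$ (and untouched elsewhere), preserving $d_{P_r}(v_j)$. If $s_j \neq s_{j+1}$, the switch was triggered by a ``red-but-no-green'' (or ``green-but-no-red'') situation; by the color invariant this means $v_j$ was unmatched in the old subgraph's $M$ and matched via the removed red edge in the new one (or vice versa), so applying $W$ merely transfers the incidence of $v_j$ from one subgraph to the other, again preserving $d_{P_r}(v_j)$. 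At the two endpoints, only a single green edge is attached, the selection/termination rule gives the ``green-but-no-red'' pattern in the relevant $\mc{C}_{s_1}$ or $\mc{C}_{s_k}$, and the color invariant therefore places $v_0$ (resp.\ $v_k$) as unmatched in $M_{s_1}$ (resp.\ $M_{s_k}$) before $W$ is applied; the attached green edge is added cleanly, increasing $d_{P_r}$ by exactly $1$. Because each green addition lands on a vertex that was previously unmatched in the target subgraph, no conflict is created and $P_r\Delta W$ is indeed a profile.

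The step I expect to be the most delicate is justifying the ``green-but-no-red at $\mc{C}_{s_k}$'' condition for the terminal vertex $v_k$. The construction only guarantees that there is no \emph{unused} red edge at $v_k$ in $\mc{C}_{s_k}$, whereas the color invariant needs the stronger statement that $v_k$ has no red edge at $\mc{C}_{s_k}$ at all, so that adding $e_k$ to $M_{s_k}$ creates no conflict with a prior match. I would handle this using the edge-disjointness bookkeeping: any red edge already marked used must have been consumed by some previous walk $W'$ that traversed it in $\mc{C}_{s_k}$, and tracing the alternation pattern together with edge-disjointness lets one argue that such a prior usage at $v_k$ is incompatible with the current walk terminating at $v_k$ after a green step (otherwise we could either extend in $\mc{C}_{s_k}$ or conclude that two walks share an edge). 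Once this is resolved the rest of the case analysis slots in mechanically.
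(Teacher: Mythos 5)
Your proof takes the same route as the paper's: verify the three conditions in the definition of an augmenting hyperwalk directly from the construction rules, tracking how each vertex's profile degree $d_{P_r}$ changes. The paper's own proof is terse — it simply asserts that any red edge adjacent to an odd-position green edge $(e_j,s_j)$ in $\mc{C}_{s_j}$ must appear in $W$ at position $j\pm 1$ — so you are filling in the case analysis it glosses over. Two remarks.

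First, your ``color invariant'' part~(ii) is overstated as a standalone claim. ``$w$ is unmatched in $M_i$ iff $w$ has no red edge in $\mc{C}_i$'' fails in one direction: a vertex matched in $M_i$ by a \emph{colorless} edge (one in $M_i\cap M^g_i$) has no red edge at $\mc{C}_i$ yet is matched. The implication you actually invoke, ``green-but-no-red at $\mc{C}_i$ implies unmatched in $M_i$,'' is correct — the green edge already occupies $w$'s unique $M^g_i$-slot, so no colorless edge can be incident — but the unconditional iff does not hold, and the phrase ``exactly equivalent'' should be weakened accordingly. Your downstream uses always sit at a vertex carrying a green edge, so the argument survives; just state the invariant with that hypothesis.

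Second, the ``delicate step'' you flag at the terminal vertex is a genuine subtlety the paper's proof does not spell out, and your sketch is the right resolution. Concretely: suppose $v_k$'s red edge $r$ in $\mc{C}_{s_k}$ had already been marked used by an earlier hyperwalk $W'$. Since colors alternate and $r$ sits at an even position of $W'$, the adjacent step of $W'$ at $v_k$ is a green step. In whichever direction $W'$ ran, the selection rules force that green step to be $v_k$'s unique green edge $e_k$ in $\mc{C}_{s_k}$: if $W'$ arrived at $v_k$ via $r$, the ``continue in the same subgraph'' rule fires because $e_k$ was unused at that time; if $W'$ left $v_k$ via $r$ after switching into $\mc{C}_{s_k}$, the switch rule demands $v_k$ have no green edge there, contradicting the existence of $e_k$. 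Either way $e_k$ would be marked used before $W$ is built, contradicting that $W$ uses it. Worth noting: the same invariant — whenever the current walk reaches a vertex via an unused green edge in $\mc{C}_i$, the incident red edge in $\mc{C}_i$ (if any) is also unused — is exactly what makes the paper's assertion that $O$ is edge-disjoint go through, since the stated rule ``if $v'$ has a red edge in $\mc{C}_i$, add it'' does not on its face restrict to unused edges.
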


\begin{proof}
Any hyperwalk in $O$ begins with an unsaturated vertex and ends in one. Also, hyperwalks in $O$ are edge disjoint since after adding an element $(e, i)$ to a hyperwalk we mark $e$ as used in subgraph $\mc{C}_i$ and do not add it to other hyperwalks. It only remains to prove that every hyperwalk $W = ((e_1, s_1), \dots, (e_k, s_k))\in O$ is indeed an augmenting-hyperwalk.

Let $P_r \Delta W$ be the result of applying $W$ on $P_r = ((\mc{C}_0, M_0), \ldots, (\mc{C}_\alpha, M_\alpha))$. By Definition~\ref{def:aug}, there are three conditions that $P_r \Delta W$ should satisfy if $W$ is an augmenting-hyperwalk. The first condition is that any $M'_i$ is a matching in $\mc{C}_i$ where
$$
	M'_i = M_i \cup \{ e_j \mid j \text{ is odd, and } s_j = i\} \setminus \{ e_j \mid j \text{ is even, and } s_j = i\}, \qquad \text{for all $i \in \{0, \ldots, \alpha\}$}.
$$
Note that $W$ is alternating between green and red edges with green ones being in the odd positions. Further, for any element $(e, i)$ in an odd position $j$ and any red edge $e'$ adjacent to it in $\mc{C}_i$, hyperwalk $W$ contains $(e', i)$ in either position $j-1$ or position $j+1$; thus the first condition is satisfied.

As for the second condition, since $W$ is alternating between green and red edges applying it would satisfy $d_{P_r}(v) = d_{P_r \Delta W}(v)$ for any vertex $v$ that is not an end-point. Moreover, $P_r \Delta W$ simply satisfies the third condition that is $d_{P_r}(v) + 1 = d_{P_r \Delta W}(v)$ iff $v$ is the first or the last vertex of the hyper-walk since $W$ begins and ends with green edges.
\end{proof}

To complete the proof of Lemma~\ref{lemma-size}, we need to show that $\E[|O|] \geq 8(\alpha+1)\epsilon\opt$. For any vertex $v$, let $g_{v, i}$ be the number of subgraphs $\mc{C}_0, \dots \mc{C}_\alpha$ in which $v$ has an unused green edge after the $i$-th iteration of the algorithm and similarly define $r_{v, i}$ to be the number of subgraphs in which $v$ has an unused red edge after the $i$-th iteration. Each iteration here  means constructing a hyperwalk and marking its edges as used. Also, let us respectively denote the set of saturated and unsaturated vertices by $S$ and $U$. 
Consider the hyperwalk $W_i$ constructed in the $i$-th iteration. Observe that if $W_i\in O$, we have $$\sum_{v\in U} (g_{v, i-1} - r_{v, i-1}) - \sum_{v\in U} (g_{v, i} - r_{v, i}) =2 $$ since any hyperwalk in $O$ starts from an unsaturated vertex with a green edge and ends the same way. However, if $W_i\in T_2$, we have $$\sum_{v\in U} (g_{v, i-1} - r_{v, i-1}) - \sum_{v\in U} (g_{v, i} - r_{v, i}) = 1,$$ and if $W_i\in T_1$ we have $$\sum_{v\in U} (g_{v, i-1} - r_{v, i-1}) - \sum_{v\in U} (g_{v, i} - r_{v, i}) = 0.$$
We claim that when our algorithm stops after $j$ iterations $\sum_{v\in U} (g_{v, j} - r_{v, j}) \leq 0$ holds. This is because otherwise, we could still find a subgraph $\mc{C}_i$ and a vertex $v$ where $v$ has a green edge in $\mc{C}_i$ but not a red one and start a new hyperwalk. As a result we have the following lower-bound for $|O|$, where for brevity, in the rest of the proof we use $g_{v}$ and $r_{v}$ instead of $g_{v,0}$ and $r_{v,0}$:

$$|O| \geq  \frac{1}{2} \left(\sum_{v\in U} (g_{v} - r_{v}) - |T_2|\right).$$
Taking expectations,
\begin{equation} \label{eq:m22}
\E\big[|O|\big] \geq \frac{1}{2}\E\left[\sum_{v\in U} (g_{v} - r_{v}) - |T_2|\right] = \frac{1}{2}\E\left[\sum_{v\in U} (g_{v} - r_{v})\right] - \frac{1}{2}\E\big[|T_2|\big].
\end{equation}
We first focus on bounding $\E\big[\sum_{v\in U} (g_{v} - r_{v})\big]$ and prove that it is upper-bounded by $40\alpha\epsilon\opt$.

$$\sum_{v\in U} \E[g_{v} - r_{v}\big] =
 \sum_{v\in V} \E[g_{v} - r_{v}\big]  - \sum_{v\in S} \E[g_{v} - r_{v}\big] $$
 Note that $c_v$, by definition, is the probability with which vertex $v$ is matched in any $M^g_i$. Moreover, $\gamma_{v, r}$ is the probability with which vertex $v$ is matched in any $M_i$ which means $\E[g_v - r_v]= (\alpha+1)(c_v-\gamma_{v, r})$ and
 $$
 \E\left[\sum_{v\in V} (g_v-r_{v})\right] = 2(\alpha+1) (q(C) - \E\left[|Z_r|\right]).
 $$ 
 Also, since $ \E\big[|Z_r|\big] \leq q(C) - 30 \epsilon \opt$ we obtain
 $$
 \E\left[\sum _{v\in V} g_{v}\right] - \E \left[\sum_{v\in V}r_{v}\right] \geq 60(\alpha+1)\epsilon\opt.
 $$
 Moreover, by definition of saturated vertices, we know that $c_v-\gamma_{v, r} \leq 2\epsilon^2$ holds for any saturated vertex $v$ which results in 
 
 \begin{equation} \label{eq:m-exp}\sum_{v\in S} \E[g_{v} - r_{v}\big] \leq 2n(\alpha+1)\epsilon^2 \leq 20(\alpha+1)\epsilon \opt.\end{equation} Note that $2n(\alpha+1)\epsilon^2  \leq 20(\alpha+1)\epsilon \opt$ comes from Assumption~\ref{ass:optlarge} that $\opt\geq 0.1\epsilon n$. Combining these equation, we get
 
\begin{equation}\label{eq:m33} \sum_{v\in U} \E[g_{v} - r_{v}\big]  \geq 40(\alpha+1) \epsilon \opt.   \end{equation}
In the next step, we provide an upper-bound for $\E\big[|T_2|\big]$ and to do so we first prove the following claim.

\begin{claim}
For any vertex $v \in S$ the number of hyperwalks in $T_2$ that end in $v$ is $\leq |g_v-r_v|$. 
\end{claim}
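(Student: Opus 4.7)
The plan is to introduce a monovariant $\Phi(v)$ equal to the difference between the current numbers of subgraphs in which $v$ still has an unused green edge and those in which $v$ still has an unused red edge, so that $\Phi(v)$ starts at $g_v - r_v$ and equals $g_{v,i} - r_{v,i}$ after the $i$-th iteration of the construction. Because $v$ is saturated, no hyperwalk can start at $v$, so every event affecting $v$ is either an internal visit of some hyperwalk through $v$ or an endpoint of some hyperwalk at $v$. An internal visit enters via an edge of one color and leaves via an edge of the opposite color, hence uses exactly one unused green and one unused red edge at $v$, leaving $\Phi(v)$ unchanged. An endpoint whose incoming edge is green decreases $\Phi(v)$ by one; an endpoint whose incoming edge is red increases $\Phi(v)$ by one.

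Next I would translate the algorithm's continuation rule into a sign constraint on $\Phi(v)$ immediately after each endpoint event. Suppose a hyperwalk ends at $v$ with its last edge green, lying in subgraph $\mc{C}_i$. The stopping condition is precisely that no subgraph $\mc{C}_j$ has an unused red edge at $v$ together with no unused green edge at $v$; equivalently, the set of subgraphs in which $v$ currently has an unused red edge is contained in the set in which $v$ has an unused green edge. Taking cardinalities yields that the current count of unused reds at $v$ is at most the current count of unused greens at $v$, i.e.\ $\Phi(v) \geq 0$ immediately after any endpoint-green event at $v$. The symmetric statement is that $\Phi(v) \leq 0$ immediately after any endpoint-red event at $v$.

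From these two observations the claim follows by a monovariant argument. Assume without loss of generality that $g_v \geq r_v$, so $\Phi(v)$ starts at the non-negative value $g_v - r_v$. Each endpoint-green event at $v$ lowers $\Phi(v)$ by exactly one and requires $\Phi(v) \geq 0$ afterwards, hence $\Phi(v) \geq 1$ beforehand; together with the fact that internals leave $\Phi(v)$ unchanged and nothing else can lower $\Phi(v)$, this caps the total number of endpoint-green events at $v$ at $g_v - r_v$. An endpoint-red event at $v$ would need $\Phi(v) \leq -1$ beforehand, but $\Phi(v)$ starts at $g_v - r_v \geq 0$ and no allowed operation can push it below $0$, so no endpoint-red event at $v$ can occur. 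The total number of endpoint events at $v$ --- which is exactly the number of hyperwalks in $T_2$ ending at $v$ --- is therefore at most $g_v - r_v = |g_v - r_v|$, and the case $g_v < r_v$ is symmetric. The main subtlety is the second step: turning the algorithm's set-based continuation rule into the count inequality driving $\Phi$; once that is in hand, the remainder is routine invariant tracking.
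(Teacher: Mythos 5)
Your proof is correct, and it is a cleaner formulation than the paper's. Both arguments rest on the same two observations: an internal visit of a hyperwalk at $v$ consumes one unused green and one unused red edge of $v$, so it preserves $\Phi(v) = g_v^{\mathrm{cur}} - r_v^{\mathrm{cur}}$; and the stopping rule, evaluated once the terminal edge is marked used, yields the containment $\{\text{subgraphs with unused red at } v\} \subseteq \{\text{subgraphs with unused green at } v\}$ after an endpoint-green (and the reverse after an endpoint-red), i.e.\ a sign condition on $\Phi(v)$. The paper exploits these by singling out the first hyperwalk of $T_2$ to end at $v$, noting the containment holds from then on, and counting the remaining terminal elements somewhat informally. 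Your monovariant argument handles all endpoint events at the saturated $v$ uniformly: the per-event sign constraints make it impossible for the sign of $\Phi(v)$ to flip, so every endpoint event at $v$ has a single color, and $|g_v - r_v|$ bounds their number at once. This is a tighter and more transparent write-up of the same underlying idea. One minor gloss worth stating: the continuation rule actually has two clauses --- it first checks the current subgraph $\mc{C}_i$ and only then scans other subgraphs for an unused edge of the sought color without one of the other --- but since the terminal edge of $\mc{C}_i$ is marked used before the rule is evaluated, $\mc{C}_i$ has no unused edge of that color, and the single containment you invoke subsumes the $\mc{C}_i$-specific clause, so your argument is unaffected.
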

\begin{proof}
Consider the hyperwalk $W$ that is the first one to be constructed among the hyperwalks in set $T_2$ that end in vertex $v$ and let $(e, i)$ be its last element. W.l.o.g., assume that the color of edge $e$ in graph $\mc{C}_i$ is red.
The fact that $W$ stops in vertex $v$ means that at the time of construction of this hyperwalk, there is no subgraph $\mc{C}_j$ that has an unused green edge of $v$ but not a red one. Therefore, from this point of the algorithm, any subgraph $\mc{C}_k$ that contains an unused green edge $e_g$ of vertex $v$ also has an unused red edge $e_r$ of this vertex. We note that based on our algorithm if a hyperwalk with last element $(e, i)$ stops at vertex $v$  then subgraph $\mc{C}_i$ either does not contain a green edge of $v$ or a red edge of this vertex. Moreover, due to the fact that $W$ is the first hyperwalk to stop in vertex $v$ we know that previously constructed hyperwalks contain the same number of green and red edges of vertex $v$. This means that there are at most $|g_v-r_v|$ many possibilities for the last element of a hyperwalk that stops at $v$ and since our hyperwalks are edge disjoint then for any vertex $v \in S$ the number of hyperwalks in $T_2$ that end in $v$ is upper-bounded by $|g_v-r_v|$.
\end{proof}

Based on the aforementioned claim, the number of hyperwalks ending in saturated vertices is at most $\sum_{v\in S} \E[g_v - r_v]$, which means $\E\big[|T_2|\big] \leq \sum_{v\in S} \E\big[|g_v - r_v|\big]$, implying further that
\begin{align}
\nonumber \E[|T_2|] &\leq \sum_{v\in S} \E\big[|g_v - r_v|\big]\\
\nonumber	&= \sum_{v\in S} \E\Big[|g_v - \E[g_v] - r_v + \E[r_v]+ \E[g_v] - \E[r_v]|\Big]\\
\nonumber &\leq \sum_{v\in S} \E\Big[|g_v - \E[g_v]| + |r_v - \E[r_v]|+ |\E[g_v] - \E[r_v]|\Big]\\
&\leq \sum_{v\in S} \left(\E\big[|g_v - \E[g_v]|] + \E[|r_v - \E[r_v]|]\right)+ \sum_{v\in S} (\E[g_v] - \E[r_v]). \label{eq:812301982355}
\end{align}
The last equation is due to the fact that $\E[g_v] \geq \E[r_v]$ for all $v$.

Using a simple application of Chebyshev's inequality, we show that for any vertex $v$, we have  $\E[|r_v - \E[r_v]|] \leq 2(\alpha+1)^{2/3}$ and $\E[|r_g - \E[r_g]|] \leq 2(\alpha+1)^{2/3}$. Note that  we have $\var(g_v) \leq \alpha+1$ and  $\var(g_v) \leq \alpha+1$. Using Chebyshev's inequality, we have $\Pr[|r_v - \E[r_v]|\geq \beta (\alpha +1)^{1/2}] \leq 1/\beta^2.$ By setting $\beta = (\alpha +1)^{1/6}$, we get  $\Pr[|r_v - \E[r_v]|\geq (\alpha +1)^{2/3}] \leq (\alpha +1)^{-1/3},$ which gives us $\E[|r_v - \E[r_v]|] \leq 2 (\alpha +1)^{2/3}.$ Similarly, we have $\E[|g_v - \E[g_v]|] \leq 2 (\alpha +1)^{2/3}$. As a result, we get 
$$\sum_{v\in S}(\E[|r_v - \E[r_v]|] + \E[|r_g - \E[r_g]|]) \leq 4n(\alpha+1)^{2/3}.$$ Since in  \findmatching{r}{\mc{C}}  we set $\alpha=1/\epsilon^7-1$ and since $n\leq 10\opt/\epsilon$ we have

$$
\sum_{v\in S}(\E[|r_v - \E[r_v]|] + \E[|r_g - \E[r_g]|]) \leq \frac{40\opt (1/\epsilon^7)^{2/3}}{\epsilon} = \frac{40 \opt}{\epsilon^{14/3} \times \epsilon} = \frac{40\opt}{\epsilon^{7} \times \epsilon^{-4/3}} = 40(\alpha+1)\epsilon^{4/3} \opt.
$$ 
Moreover, by (\ref{eq:m-exp}) we have $$\sum_{v\in S}(\E[g_v] - \E[r_v]) \leq 20(\alpha+1)\epsilon\opt.$$
Combining these two bounds into (\ref{eq:812301982355}) we get
\begin{equation} \label{eq:m44}
\sum_{v\in S} \E\big[|T_2|\big] \leq (\alpha+1)\epsilon\opt(20+40\epsilon^{1/3}).
\end{equation}
Incorporating (\ref{eq:m33}) and (\ref{eq:m44}) into (\ref{eq:m22}) and simplifying, gives
\begin{equation}
\E\big[|O|\big] \stackrel{\text{(\ref{eq:m22})}}{\geq} \frac{1}{2}\E\Big[\sum_{v\in U} (g_{v} - r_{v})\Big] - \frac{1}{2}\E\big[|T_2|\big] \stackrel{\text{(\ref{eq:m33}), (\ref{eq:m44})}}{\geq} (\alpha+1)\epsilon\opt(10-20\epsilon^{1/3}). \end{equation} 
By letting $\epsilon$ be small enough, we can assume that $\epsilon^{1/3} \leq 0.1$ and get  $$\E[|O|]\geq 8(\alpha+1)\epsilon\opt,$$
which completes the proof of Lemma~\ref{lemma-size}. This completes all the components needed within the proof of Lemma~\ref{lem:increaseexp} which as discussed at the start of the section, implies the needed bound on the expected size of the matching returned.

\subsection{Lemma~\ref{lem:independentmatching} Property 2: Matching Probabilities}\label{sec:p2}
In this section, we prove that algorithm $\findmatching{t}{\mc{C}}$ satisfies property 2 of Lemma~\ref{lem:independentmatching} that for each vertex $v$, $\Pr[X_v] \leq \max\{c_v - \epsilon^2, 0\}$. Recall that $X_v$, as defined in Lemma~\ref{lem:independentmatching}, is the indicator of the event that $v$ is matched in \findmatching{t}{\mc{C}}, and the probability is taken over both the realization $\mc{C}$ and the randomization of algorithm $\findmatching{t}{\mc{C}}$.

Let us use $X_{v, r}$ to denote the event that vertex $v$ gets matched in matching $\findmatching{r}{\mc{C}}$. It holds that $X_{v, t} = X_v$. Therefore, it suffices to show that $\Pr[X_{v, t}] \leq \max\{c_v - \epsilon^2, 0\}$. We will, however, prove a stronger claim:

\begin{claim}\label{cl:p2holds}
	For every integer $r$ and for every vertex $v$, it holds that $\Pr[X_{v, r}] \leq \max\{c_v - \epsilon^2, 0\}$.
\end{claim}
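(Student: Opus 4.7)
The plan is to proceed by induction on $r$. The base case $r = 0$ is immediate, since $\findmatching{0}{\mc{C}}$ returns the empty matching and so $\Pr[X_{v, 0}] = 0 \leq \max\{c_v - \epsilon^2, 0\}$. For the inductive step, the core move would be to relate $\gamma_{v, r}$ to $\gamma_{v, r-1}$ via the profile-degree quantities $d_{P_r}(v)$ and $d_{P'_r}(v)$. By Observation~\ref{obs:samedist}, the matchings $M'_0, \ldots, M'_\alpha$ are identically distributed, so $\E[d_{P'_r}(v)] = (\alpha+1)\gamma_{v, r}$; and because each $M_i$ is the output of $\findmatching{r-1}$ on an independent realization of $C$, $\E[d_{P_r}(v)] = (\alpha+1)\gamma_{v, r-1}$. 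Definition~\ref{def:aug} then tells us that applying a hyperwalk $W$ increases $d_P(\cdot)$ by exactly one at each of the two endpoints of $W$ and not at all elsewhere, so
\[
d_{P'_r}(v) - d_{P_r}(v) \;=\; \#\{\,W \in I : v \text{ is an endpoint of } W\,\}.
\]

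Two structural facts about $I$ would then tame this increment. First, the hyperwalks in $I$ must be pairwise vertex-disjoint: two that share a vertex are adjacent in $H$ by construction, so cannot coexist in an independent set; hence the count above is always at most $1$. Second, by step (6) every element of $V_H$ corresponds to a hyperwalk whose two endpoints are unsaturated, so if $v$ is saturated the count is identically $0$. Taking expectations and dividing by $\alpha + 1$ yields the one-step recursion
\[
\gamma_{v, r} \;=\; \gamma_{v, r-1} + \tfrac{1}{\alpha+1}\,\Pr[\,v \text{ is an endpoint of some } W \in I\,],
\]
where the probability on the right vanishes whenever $v$ is saturated.

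From here the induction closes in two cases. If $v$ is saturated at round $r$, then $\gamma_{v, r} = \gamma_{v, r-1}$ and the inductive hypothesis finishes the job. If $v$ is unsaturated, then by definition $\gamma_{v, r-1} < c_v - 2\epsilon^2$, which in particular forces $c_v > 2\epsilon^2$ and hence $\max\{c_v - \epsilon^2, 0\} = c_v - \epsilon^2$; combining this with $\alpha = 1/\epsilon^7 - 1$ gives
\[
\gamma_{v, r} \;\leq\; \gamma_{v, r-1} + \epsilon^7 \;<\; c_v - 2\epsilon^2 + \epsilon^7 \;<\; c_v - \epsilon^2,
\]
as required.

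The delicate part of the plan is the translation of a per-vertex matching probability into the expected-degree identities on the profile, which is exactly what Observation~\ref{obs:samedist} and the vertex-disjointness of $I$ (inherited from it being an independent set in $H$) supply. Once those are in place, the inductive step reduces to an arithmetic check that the gap $\epsilon^2$ between the saturation threshold $c_v - 2\epsilon^2$ and the target $c_v - \epsilon^2$ comfortably absorbs the per-round slack $1/(\alpha+1) = \epsilon^7$; this is precisely why the algorithm was tuned to take $\alpha = 1/\epsilon^7 - 1$ rather than merely $\alpha \geq 1/\epsilon^2$.
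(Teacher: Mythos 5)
Your proof is correct and follows essentially the same route as the paper: induction on $r$, using Observation~\ref{obs:samedist} to equate per-vertex matching probabilities with expected profile-degrees, exploiting the vertex-disjointness of the hyperwalks chosen in $I$ and the fact that only unsaturated vertices can be endpoints, and then closing with the arithmetic $1/(\alpha+1) = \epsilon^7 < \epsilon^2$. The paper phrases the bookkeeping via the normalized fractions $\rho_v, \rho'_v$ rather than the raw profile degrees $d_{P_r}(v), d_{P'_r}(v)$, but this is the same computation up to a factor of $\alpha + 1$, and your statement of the unsaturated case (working directly with $\gamma_{v,r-1}$ and expectations rather than conflating the random variable $\rho_v$ with its mean) is, if anything, slightly more careful.
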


We prove this by indiction on $r$. For the base case $r=0$, algorithm $\findmatching{0}{\mc{C}}$ returns an empty matching $\emptyset$. Therefore $\Pr[X_{v, 0}] = 0$ for all vertices $v$, clearly satisfying the claim. For the induction step, fix any vertex $v$. We suppose that $\Pr[X_{v, r-1}] \leq \max\{c_v - \epsilon^2, 0\}$ and prove that it continues to hold that $\Pr[X_{v, r}] \leq \max\{c_v - \epsilon^2, 0\}$. We start with a definition.

\begin{definition}
Define $\rho_v$ to be the fraction of matchings $M_0, \ldots, M_\alpha$ in which $v$ is matched and define $\rho'_v$ similarly with respect to matchings $M'_0, \ldots, M'_\alpha$. More precisely,
$$
	\rho_v := \frac{|\{i : v \in V(M_i) \}|}{\alpha + 1}, \qquad \text{ and } \qquad \rho'_v := \frac{|\{i : v \in V(M'_i) \}|}{\alpha + 1}.
$$	
\end{definition}

\begin{observation}
	$\E[\rho_v] = \Pr[X_{v, r-1}]$ and $\E[\rho'_v] = \Pr[X_{v, r}]$.
\end{observation}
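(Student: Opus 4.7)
The plan is to unfold the definitions and apply linearity of expectation, reducing everything to a single probability per term and then invoking the symmetry already recorded in Observation~\ref{obs:samedist}.

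For the first equation, I would write
\[
\E[\rho_v] \;=\; \frac{1}{\alpha+1}\sum_{i=0}^{\alpha} \Pr\bigl[v \in V(M_i)\bigr].
\]
By construction in Algorithm~\ref{alg:crucial}, each $M_i$ equals $\findmatching{r-1}{\mc{C}_i}$, where $\mc{C}_0 = \mc{C}$ is the input realization and $\mc{C}_1, \ldots, \mc{C}_\alpha$ are freshly drawn realizations with the same distribution as $\mc{C}$. Since $X_{v,r-1}$ is by definition the indicator that $v$ is matched in $\findmatching{r-1}{\mc{C}}$ for a random realization $\mc{C}$, each term in the sum is exactly $\Pr[X_{v, r-1}]$. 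Averaging $\alpha+1$ identical copies yields $\E[\rho_v] = \Pr[X_{v, r-1}]$.

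For the second equation, the same use of linearity of expectation gives
\[
\E[\rho'_v] \;=\; \frac{1}{\alpha+1}\sum_{i=0}^{\alpha} \Pr\bigl[v \in V(M'_i)\bigr].
\]
This time, direct inspection of the algorithm does not immediately equate the terms: the matchings $M'_0, \ldots, M'_\alpha$ arise from applying a shared set of augmenting hyperwalks to the profile $P$, and the index $0$ plays the syntactically distinguished role of being the matching eventually returned. However, this is exactly the situation handled by Observation~\ref{obs:samedist}, which asserts that $M'_0, \ldots, M'_\alpha$ are identically distributed. Combined with the definitional identity $\findmatching{r}{\mc{C}} = M'_0$, which gives $\Pr[v \in V(M'_0)] = \Pr[X_{v,r}]$, every term in the sum equals $\Pr[X_{v,r}]$, so $\E[\rho'_v] = \Pr[X_{v,r}]$.

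The only non-trivial ingredient is the symmetry among $M'_0, \ldots, M'_\alpha$, and this has already been established in Observation~\ref{obs:samedist}; modulo that observation, the proof is just linearity of expectation together with the fact that each $\mc{C}_i$ is a fresh realization of $C$. I therefore do not anticipate any real obstacle, and the ``proof'' will be two or three lines per equation.
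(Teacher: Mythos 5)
Your proposal is correct and follows essentially the same route as the paper: linearity of expectation over $i$, the fact that $\mc{C}_0,\ldots,\mc{C}_\alpha$ are identically distributed realizations of $C$ for the first identity, and the symmetry of $M'_0,\ldots,M'_\alpha$ (the paper invokes this symmetry directly in-line, while you cite Observation~\ref{obs:samedist}, which records exactly that fact) together with the definitional identity $\findmatching{r}{\mc{C}}=M'_0$ for the second.
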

\begin{proof}
	For any $i \in \{0, \ldots, \alpha\}$, we have $\Pr[v \in V(M_i)] = \Pr[X_{v, r-1}]$ since $M_i = \findmatching{r-1}{\mc{C}_i}$ and $\mc{C}_i$ is picked from the same distribution that the actual realization $\mc{C}$ is picked from. Thus: 
	$$
		\E[\rho_v] = \E\left[\frac{\sum_{i=0}^\alpha \mathbbm{1}(v \in V(M_i))}{\alpha+1}\right] = \frac{1}{\alpha+1} \sum_{i=0}^\alpha \Pr[v \in V(M_i)] = \frac{1}{\alpha+1} \sum_{i=0}^\alpha \Pr[X_{v, r-1}] = \Pr[X_{v, r-1}].
	$$
	For the second equality, first observe that since $M'_0$ is the matching returned by $\findmatching{r}{\mc{C}}$, then $X_{v, r}$ is by definition exactly the event that $v \in V(M'_0)$ and thus $\Pr[X_{v, r}] = \Pr[v \in V(M'_0)]$. Moreover, due to symmetry of the algorithm in constructing $M'_0, \ldots, M'_\alpha$, it holds for any $i \in [\alpha]$ that $\Pr[v \in V(M'_i)] = \Pr[v \in V(M'_0)] = \Pr[X_{v, r}]$. Therefore, we get:
	$$
		\E[\rho'_v] = \E\left[\frac{\sum_{i=0}^\alpha \mathbbm{1}(v \in V(M'_i))}{\alpha+1}\right] = \frac{1}{\alpha+1} \sum_{i=0}^\alpha \Pr[v \in V(M'_i)] = \frac{1}{\alpha+1} \sum_{i=0}^\alpha \Pr[X_{v, r}] = \Pr[X_{v, r}],
	$$
	concluding the proof.
\end{proof}

In algorithm \findmatching{r}{\mc{C}}, we mark $v$ as either saturated or unsaturated depending on the value of $\gamma_{v, r-1}$. Note from definition of $\gamma_{v, r-1}$ that $\gamma_{v, r-1} = \Pr[X_{v, r-1}]$. Therefore, $v$ is marked as saturated if $\Pr[X_{v, r-1}] \geq c_v - 2\epsilon^2$ and unsaturated if $\Pr[X_{v, r-1}] < c_v - 2\epsilon^2$. We consider the two cases individually.

\smparagraph{If $v$ is saturated.} In this case, by definition of graph $H$, vertex $v$ cannot start or end any augmenting-hyperwalk with a corresponding vertex in $H$ (and for that matter in $I$). By definition of augmenting-hyperwalks, for all vertices (except the endpoints of the walk) applying the hyperwalk does not change the number of matchings in which the vertex is part of. Therefore, if $v$ is saturated, $\rho_v = \rho'_v$ and thus $\Pr[X_{v, r}] = \Pr[X_{v, r-1}] \leq \max\{c_v - \epsilon^2, 0\}$ where the latter inequality comes from the induction's hypothesis.

\smparagraph{If $v$ is unsaturated.} Note that in graph $H$ by definition we have edges between any pair of augmenting-hyperwalks that share a vertex in the graph. Therefore, the independent set $I$ of $H$ can include at most one augmenting-hyperwalk $W$ that includes vertex $v$. If $v$ is not an end-point of $W$, then as in the case above, we get $\rho_v = \rho'_v$. However, if $v$ is an end-point of $W$, then by definition of augmenting-hyperwalks, there will be one (and only one) $i$ where $v \in V(M'_i)$ and $v \not\in V(M_i)$. In this case, we get that
$$
	\rho'_v = \frac{|\{i : v \in V(M'_i) \}|}{\alpha + 1} = \frac{|\{i : v \in V(M_i) \}| + 1}{\alpha + 1} = \rho_v + \frac{1}{\alpha+1} \stackrel{\alpha = 1/\epsilon^3-1}{<} \rho_v + \epsilon^2.
$$
Since in this case, we had $\rho_v < c_v - 2\epsilon^2$, we get $\rho'_v < c_v - 2\epsilon^2 + \epsilon^2 = c_v -\epsilon^2$. Therefore the induction's hypothesis still holds that $\Pr[X_{v, r}] < \max\{c_v - \epsilon^2, 0\}$, completing the proof of Claim~\ref{cl:p2holds}.

\subsection{Lemma~\ref{lem:independentmatching} Property 4: Matching Independence}\label{sec:p4}
In this section, we prove that algorithm $\findmatching{t}{\mc{C}}$ satisfies property 4 of Lemma~\ref{lem:independentmatching}. That is, for every subset $I= \{v_1, \ldots, v_k\}$ of the vertices such that $d_C(v_i, v_j) \geq \lambda$ for all $v_i, v_j \in I$, random variables $X_{v_1}, \ldots, X_{v_k}$ are independent. Recall that $X_v$ for a vertex $v$ is the indicator of the event that $v$ is matched in the matching returned by \findmatching{t}{\mc{C}}. We also, again, emphasize that this ``independence'' is with regards to the randomization of realization $\mc{C}$ of $C$ on which $Z$ is constructed, and the randomization of algorithm $\findmatching{t}{\mc{C}}$ itself.

In Section~\ref{sec:crucialoverview} we gave an overview of how we can argue about such independence via an implementation of the algorithm in the \local{} model of computation. Here we give this implementation.

\smparagraph{Initialization.} The communication network is graph $C$. Each node $v$ is initially given the following information: Its incident edges in $C$ and how they are realized, the maximum degree $\Delta_C$ of graph $C$, parameter $\epsilon$, and the value of $c_v$. Note that to gather information about realization of edges further away, the nodes need to communicate. Also note that even though the value of $c_v$ may reveal some information about graph $G$ (or $C$), it crucially reveals no information about the realization $\mc{C}$ of $C$, or other sources of randomization used by the algorithm. Thus, property 4 can still be satisfied if we manage to show the algorithm can be implemented in few rounds.

\smparagraph{The \apxMIS{H, \epsilon} algorithm.} First, we mention that subroutine \apxMIS{H, \epsilon} already has an efficient \local{} implementation whose round-complexity depends only on the maximum degree of $H$ and $\epsilon$, without essentially any dependence on the number of nodes in $H$. Any implementation with such round-complexity can be used in our case. For instance, we use one implied in \cite{DBLP:conf/soda/Ghaffari19} (see Appendix~\ref{app:weakmis} for details):

\begin{lemma}[\cite{DBLP:conf/soda/Ghaffari19}]\label{lem:apxMIS}
	Given a graph $H$ of max degree $\Delta$ and any parameter $\epsilon$, there is a \local{} algorithm $\apxMIS{H, \epsilon}$ that returns an independent set $I$ of $H$ in $O(\log \frac{\Delta}{\epsilon})$ rounds such that the expected size of $I$ is at least $(1-\epsilon)$ fraction of some maximal independent set of $H$.
\end{lemma}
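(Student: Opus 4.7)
The plan is to truncate Ghaffari's randomized maximal independent set algorithm after $T = \Theta(\log(\Delta/\epsilon))$ rounds and output whatever partial independent set has been built up by then. Recall that in Ghaffari's algorithm each vertex $v$ maintains a desire level $p_v^{(t)}\in[0,1/2]$, marks itself in round $t$ with probability $p_v^{(t)}$, joins the IS if no marked neighbor is also marked, and is removed once a neighbor joins; the update rule for $p_v^{(t)}$ uses only the aggregate desire of $v$'s neighbors. The key local statement proved in \cite{DBLP:conf/soda/Ghaffari19} is: for every vertex $v$ individually, the probability that $v$ is still \emph{active} (neither joined the IS nor been removed) after $T$ rounds is at most $2^{-\Omega(T)}$. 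Crucially this bound is per-vertex and involves no global parameter such as $n$.

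Choosing $T = c\log(\Delta/\epsilon)$ with a sufficiently large absolute constant $c$ therefore yields $\Pr[v\text{ active after $T$ rounds}] \leq \epsilon/(\Delta+1)$. The algorithm \apxMIS{H,\epsilon} is then simply: run Ghaffari's procedure for $T$ rounds and return the set $I$ of vertices that have joined the IS. Since each vertex's decision is a deterministic function of its $T$-hop neighborhood in $H$ together with the random tapes of the vertices therein, this is an $O(\log(\Delta/\epsilon))$-round \local{} algorithm.

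For the size guarantee, let $U$ denote the random set of vertices still active at the end of round $T$. Conceptually extend $I$ to a maximal independent set $M$ of $H$ by applying any fixed greedy rule to $U$; then $M$ is a (random) maximal IS with $I\subseteq M$ and $M\setminus I\subseteq U$. By the per-vertex bound and linearity of expectation, $\E[|U|]\leq \epsilon n/(\Delta+1)$. Since $H$ has maximum degree $\Delta$, every maximal independent set, and in particular this $M$, has size deterministically at least $n/(\Delta+1)$, so $\E[|U|]\leq \epsilon|M|$, which gives
\[\E[|I|]\;\geq\;\E[|M|]-\E[|U|]\;\geq\;(1-\epsilon)\,\E[|M|].\]
This is exactly the form in which the lemma is invoked inside the proof of Lemma~\ref{lem:sizeI}, where the required lower bound on $\E[|M|]$ in terms of $\opt$ is established separately from the witness set $O$ of nearly edge-disjoint augmenting hyperwalks.

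The main obstacle I anticipate is isolating the correct per-vertex guarantee from \cite{DBLP:conf/soda/Ghaffari19}: the headline theorem there bounds the total runtime of a full MIS algorithm using a shattering/LLL post-processing step whose cost depends on $n$, whereas we need only the \emph{local} part of Ghaffari's analysis showing that a single vertex's survival probability decays as $2^{-\Omega(T)}$ independently of $n$. Once that per-vertex statement is extracted, converting it into a multiplicative $(1-\epsilon)$ approximation against a maximal IS uses nothing beyond the elementary deterministic bound $|M|\geq n/(\Delta+1)$; the extra $\log\Delta$ term in the round budget is precisely what lets us pay the $1/(\Delta+1)$ factor in the per-vertex failure probability.
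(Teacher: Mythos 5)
Your proof is correct and essentially identical to the paper's: both truncate Ghaffari's MIS algorithm after $O(\log(\Delta/\epsilon))$ rounds, invoke the per-vertex $\delta$-failure guarantee (Theorem~1.1 of \cite{DBLP:conf/soda/Ghaffari19}) with $\delta = \Theta(\epsilon/\Delta)$, bound $\E[|U|]$ by linearity of expectation, extend $I$ to a maximal independent set by absorbing undecided vertices, and use the deterministic bound that any MIS has size at least $n/(\Delta+1)$. Your concern about extracting only the local pre-shattering part of Ghaffari's analysis is legitimate but resolved exactly as the paper does — that is the content of the theorem cited.
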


We give a \local{} implementation of Algorithm~\ref{alg:crucial} which proves the following:

\begin{claim}\label{cl:localimp}
	For any $r \geq 0$, algorithm $\findmatching{r}{\cdot}$ can be implemented in $O(r \epsilon^{-4}\log{\Delta_C})$ rounds of \local{}.
\end{claim}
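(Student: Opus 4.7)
The plan is to prove Claim~\ref{cl:localimp} by induction on $r$. The base case $r=0$ is immediate since \findmatching{0}{\cdot} returns $\emptyset$ with no communication. For the inductive step, I will analyze each part of Algorithm~\ref{alg:crucial} and argue that the only communication-intensive portion at level $r$ is the construction of $H$ together with the call to \apxMIS{H, \epsilon}; this portion will cost $O(\epsilon^{-4}\log \Delta_C)$ rounds, giving the recurrence $T(r)=T(r-1)+O(\epsilon^{-4}\log \Delta_C)$.

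Sampling the realizations $\mc{C}_0,\ldots,\mc{C}_\alpha$ in step (2) requires no communication, since each vertex samples the realizations of its own incident edges privately and can thereafter piggyback its role in all $\alpha+1$ copies within single messages. The $\alpha+1$ recursive invocations of \findmatching{r-1}{\mc{C}_i} in step (3) are then simulated in parallel in $T(r-1)$ rounds, because \local{} allows arbitrarily large messages. The thresholds $\gamma_{v,r-1}$ in step (4) are distributional constants determined recursively by $c_v$ and $\epsilon$; since each vertex already knows $c_v$, step (5) is purely local. Finally, steps (8)--(10) require no communication either: once each vertex knows the subset of size-$< 2/\epsilon$ hyperwalks that belong to $I$ and touch it, it can unilaterally compute its status in every $M'_i$.

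For steps (6)--(7), the key observation is locality of hyperwalks. Any augmenting-hyperwalk considered has size $< 2/\epsilon$, so it is contained entirely in the $(2/\epsilon)$-hop neighborhood in $C$ of any of its vertices; thus every vertex can enumerate all candidate hyperwalks through it, and determine the adjacency of those hyperwalks in $H$, after $O(1/\epsilon)$ rounds of neighborhood gathering in $C$. Next, I bound $\Delta_H$: for a fixed hyperwalk $W$, an adjacent vertex of $H$ is a hyperwalk sharing one of the $<2/\epsilon$ vertices of $W$, and the number of size-$< 2/\epsilon$ hyperwalks through any fixed vertex of $C$ is at most $((\alpha+1)\Delta_C)^{O(1/\epsilon)}$ (extending a walk one step has at most $\Delta_C$ edge choices and $\alpha+1$ subgraph choices). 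Hence $\Delta_H \leq ((\alpha+1)\Delta_C)^{O(1/\epsilon)}$, so $\log(\Delta_H/\epsilon) = O(\epsilon^{-1}\log \Delta_C)$ after absorbing the $\log(\alpha+1)=O(\log(1/\epsilon))$ term.

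By Lemma~\ref{lem:apxMIS}, \apxMIS{H, \epsilon} runs in $O(\log(\Delta_H/\epsilon))$ rounds on $H$. Each round of communication in $H$ can be simulated in $O(1/\epsilon)$ rounds of $C$, since any two adjacent vertices of $H$ share a vertex of $C$ and therefore lie within $O(1/\epsilon)$ hops of each other in $C$. The simulation cost is then $O(\epsilon^{-1}\cdot \epsilon^{-1}\log \Delta_C) = O(\epsilon^{-2}\log \Delta_C)$, which is comfortably within the claimed $O(\epsilon^{-4}\log \Delta_C)$ slack. Unrolling the recurrence $T(r)=T(r-1)+O(\epsilon^{-4}\log \Delta_C)$ yields $T(r)=O(r\epsilon^{-4}\log \Delta_C)$. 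The only technically delicate step is the degree bound on $H$ together with the $O(1/\epsilon)$-overhead simulation; both follow cleanly from the uniform $< 2/\epsilon$ size bound on augmenting-hyperwalks, which localizes all nontrivial computation into small balls of $C$.
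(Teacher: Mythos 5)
Your proposal follows the same inductive skeleton as the paper, and the handling of Steps 1, 3, 6, 7 is essentially correct: sampling is free, the $\alpha+1$ recursive calls are run in parallel in $T(r-1)$ rounds, hyperwalks are local by their size bound, and simulating \apxMIS{} on $H$ over the network $C$ costs $O(\epsilon^{-1})\cdot O(\log(\Delta_H/\epsilon))$ rounds with your degree bound $\Delta_H \le ((\alpha+1)\Delta_C)^{O(1/\epsilon)}$.

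However, there is a real gap in your treatment of Step (4). You assert that ``the thresholds $\gamma_{v,r-1}$ in step (4) are distributional constants determined recursively by $c_v$ and $\epsilon$; since each vertex already knows $c_v$, step (5) is purely local.'' This is not right. Recall $\gamma_{v,r-1} = \Pr[v \text{ matched in } \findmatching{r-1}{\mc{C}'}]$ over a fresh random realization $\mc{C}'$. Because $\findmatching{r-1}{\cdot}$ is a $T(r-1)$-round \local{} algorithm, whether $v$ is matched is a function of the realization and random tapes inside the $T(r-1)$-hop ball around $v$ \emph{in $C$}. So $\gamma_{v,r-1}$ is not a function of $c_v$ alone: it depends on the actual topology of that ball, the $p_e$'s on its edges, and the $c_u$'s of all vertices $u$ in the ball (the $c_u$'s feed into the saturation decisions of the recursive calls). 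The initialization section of the paper gives each $v$ only its incident edges and $c_v$, so $v$ does not automatically know this ball and cannot compute $\gamma_{v,r-1}$ without communication.

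If one charges for this gathering naively it costs $T(r-1)$ rounds, turning the recurrence into $T(r) = 2T(r-1) + O(\epsilon^{-4}\log\Delta_C)$, which unrolls to an exponential in $r$ and would not give the claimed bound. The paper's proof avoids this by explicitly pipelining: in the \local{} model messages are unbounded, so during the $T(r-1)$ rounds of Step 1 each vertex can simultaneously disseminate topology and the $c_u$'s, collecting its entire $T(r-1)$-hop ball at no extra round cost. Your proof as written does not make (or even need, under your incorrect claim) this argument, so it hides the one delicate point in the round-complexity accounting. The conclusion you reach is correct, but you should replace the ``purely local'' claim for $\gamma_{v,r-1}$ by the explicit observation that the $T(r-1)$-hop ball must be gathered and that this gathering runs in parallel with the recursive simulation.
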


\begin{proof}
We prove the claim by induction on $r$. For the base case, algorithm \findmatching{0}{\cdot} can be implemented in 0 rounds since the output is always the empty matching. We assume that algorithm \findmatching{r-1}{\cdot} can be implemented in $\beta(r-1)\epsilon^{-4}\log \Delta_C$ rounds where $\beta > 1$ is a sufficiently large absolute constant that we fix later, and prove that $\findmatching{r}{\cdot}$ can be implemented in $\beta r\epsilon^{-4}\log \Delta_C$ rounds. 

\textbf{Step 1.} First, the algorithm draws $\alpha$ realizations $\mc{C}_1, \ldots, \mc{C}_\alpha$. Since information about realization of edges is stored locally on their incident vertices, we can easily generate these random realizations in $O(1)$ rounds. After that, on each graph $\mc{C}_i$ for $i \in \{0, \ldots, \alpha\}$, we recursively run the $(\beta(r-1)\epsilon^{-4}\log \Delta_C)$-round implementation of $\findmatching{r-1}{\mc{C}_i}$. Note that all of these can run in parallel. The overall round-complexity of this step, is thus $\beta(r-1)\epsilon^{-4}\log \Delta_C + O(1)$.

\textbf{Step 2.} Next, we need to compute $\gamma_{v, r-1}$ for each vertex $v$, which recall is the probability that $v$ is matched in $\findmatching{r-1}{\mc{C'}}$ where $\mc{C}'$ is a random realization of $C$. The crucial observation here is that since $\findmatching{r-1}{\cdot}$ can, by the induction hypothesis, be implemented within only $\beta(r-1)\epsilon^{-4}\log \Delta_C$ rounds, $\gamma_{v, r-1}$ is merely a function of the topology induced in the $(\beta(r-1)\epsilon^{-4}\log \Delta_C)$-hop of $v$. We first gather this neighborhood of $v$, which can be done in $(\beta(r-1)\epsilon^{-4}\log \Delta_C)$ rounds, then compute $\gamma_{v, r-1}$. We note that this gathering part can be done in parallel to the operations of Step 1. Therefore, overall, Steps 1 and 2 take $(\beta(r-1)\epsilon^{-4}\log \Delta_C + O(1))$ rounds. Having $\gamma_{v, r-1}$ for each vertex $v$, we can then determine for each vertex whether it is saturated or unsaturated since we are given the value of $c_v$ in the initialization step. 

\textbf{Step 3.} The next step is constructing graph $H$. In graph $H$, each vertex corresponds to a walk of size at most $2/\epsilon$ in $C$. Therefore, each vertex in $C$ can first gather all such walks around it in $O(1/\epsilon)$ rounds, and then determine which one of them are augmenting-hyperwalks satisfying the required properties to be considered as a node of $H$. Determining the edges of $H$ can also be done locally; once we construct the vertices, there will be an edge between any two walks that share a vertex. Therefore, overall, graph $H$ can be constructed in $O(1/\epsilon)$ rounds.

\textbf{Step 4.} Once we construct $H$, we run the \local{} implementation of $\apxMIS{H, \epsilon}$ mentioned in Lemma~\ref{lem:apxMIS} on graph $H$. We emphasize that our communication network here is graph $C$, not $H$. However, any message between two nodes of $H$ can be sent over network $C$ within $O(1/\epsilon)$ rounds. This is because any two incident nodes of $H$, are walks of size at most $O(1/\epsilon)$ in $C$ that share at least a vertex. The overall running time of this procedure is thus  $O(\frac{1}{\epsilon} \times \log \frac{\Delta_H}{\epsilon})$. We note that $\Delta_H = O(\epsilon^{-1} ((\alpha+1)\Delta_C)^{2/\epsilon})$. To see this, fix any walk $w$ with a corresponding node in $H$. This walk has at most $2/\epsilon$ nodes in $C$. Now each node in $C$ is incident to $O(((\alpha+1)\Delta_C)^{2/\epsilon})$ hyperwalks: There are $O((\Delta_C)^{2/\epsilon})$ walks of size $\leq 2/\epsilon$ branching out of each of the nodes, and each edge of the walk can take on $\alpha+1$ labels from $\{0, \ldots, \alpha\}$ to be transformed to a hyperwalk. Therefore, overall the number of rounds required for this part of the algorithm is
$$
	O\left(\frac{1}{\epsilon} \times \log \frac{\Delta_H}{\epsilon}\right) = O\left(\frac{1}{\epsilon} \log \frac{\epsilon^{-1} ((\alpha+1)\Delta_C)^{2/\epsilon}}{\epsilon} \right) = O\left(\epsilon^{-4} \log \Delta_C \right),
$$
where the last equality comes from the fact that $\alpha = \poly(\epsilon^{-1})$.

\textbf{Step 5.} Finally, applying the augmenting-hyperwalks chosen in $I$ is simple and can be done in $O(1/\epsilon)$ rounds since these walks are of size $\leq 2/\epsilon$.

\textbf{Round-complexity.} Let $\beta_2$ be a sufficiently large constant by multiplying which we can surpass the $O$-notations. We get
\begin{flalign*}
\text{\# of rounds} &\leq 
\underbrace{\beta(r-1)\epsilon^{-4}\log \Delta_C + \beta_2}_{\text{Steps 1 and 2}} + \underbrace{\beta_2(1/\epsilon)}_{\text{Step 3}} + \underbrace{\beta_2(\epsilon^{-4}\log\Delta_C)}_{\text{Step 4}} + \underbrace{\beta_2(1/\epsilon)}_{\text{Step 5}}\\
 &< \beta(r-1)\epsilon^{-4}\log \Delta_C + 4\beta_2(\epsilon^{-4}\log\Delta_C)\\
	&= \Big( \beta(r-1)+4\beta_2 \Big)\epsilon^{-4}\log\Delta_C.
\end{flalign*}
Since $\beta_2$ is an absolute constant that does not depend on $\beta$, we can set $\beta$ to be large enough with respect to it. Setting $\beta = 4\beta_2$ is sufficient since
$$
	\Big( \beta(r-1)+4\beta_2 \Big)\epsilon^{-4}\log\Delta_C = \beta r \epsilon^{-4}\log \Delta_C.
$$
This concludes the proof of the induction step, and consequently the proof of Claim~\ref{cl:localimp}.
\end{proof}

We showed in Claim~\ref{cl:localimp} that algorithm \findmatching{r}{\mc{C}}, for any $r$, can be implemented within $O(r\epsilon^{-4}\log \Delta_C)$ rounds of \local{}. Our final algorithm for Lemma~\ref{lem:independentmatching} is $\findmatching{t}{\mc{C}}$ where we set $t = 1/\epsilon^9$. Thus, the output of each vertex can be determined within $\lambda'=O(\epsilon^{-13}\log\Delta_C)$ rounds. This, as described, proves property 4 of Lemma~\ref{lem:independentmatching} since $\lambda=\epsilon^{-20}\log \Delta_C$ is larger than $\lambda'/2$ given that $\epsilon$ is small enough to surpass the hidden constants in the $O$-notation. (Recall that we can assume $\epsilon$ is smaller than any needed constant.)

\section{Deferred Proofs}\label{sec:proofs}

\begin{proof}[Proof of Lemma~\ref{lem:gap}]
	Let $t_0 = (\epsilon p)^{50}$ and for any $i \geq 1$ let $t_i = f(t_{i-1})$. Note that $t_0 > t_1 > t_2 > \ldots$ by the assumption of the lemma that $0 < f(x) < x$ for all $0 < x < 1$. For any $i \geq 1$ define $q_i = \sum_{e \in E: q_e \in (t_i, t_{i-1}]} q_e$ and let $j$ be the smallest number where $q_j \leq \epsilon \opt$. We will soon prove existence of such $j$ and also prove that $j = O(1/\epsilon)$. We claim that setting $\tau_+ = t_{j-1}$ and $\tau_- = t_j$ satisfies the conditions of the lemma.
	
	\textbf{Condition} (1): This condition holds trivially since $\tau_- = t_{j} = f(t_{j-1}) = f(\tau_+)$. 
	
	\textbf{Condition} (2): Let us define $X := \{ e \mid \tau_- < q_e < \tau_+ \}$. Recall that crucial and non-crucial edges are defined based on $\tau_+$ and $\tau_-$. That is, an edge $e$ is crucial (i.e. $e \in C$) if $q_e \geq \tau_+$, and is non-crucial (i.e. $e \in N$) if $q_e \leq \tau_-$. This implies that the remaining edges that are neither crucial nor non-crucial belong to $X$. Therefore,
	$$
		\opt = q(E) = q(C) + q(N) + q(X).
	$$
	To obtain $q(N)+q(C) \geq (1-\epsilon)\opt$ it thus suffices to show $q(X) \leq \epsilon \opt$. Noting that $\tau_+ = t_{j-1}$ and $\tau_- = t_j$ and also noting the definition of $q_j$ above, we get $q(X) \leq q_j$. Recall that we chose $j$ such that $q_j \leq \epsilon \opt$. Therefore we indeed get that $q(X) \leq \epsilon \opt$.
	
	\textbf{Condition} (3): We defined $t_0 = (\epsilon p)^{50}$ and recursively defined $t_i = f(t_{i-1})$. Since $f(\cdot)$ is only a function of its input, we get via a simple induction that both $t_j$ and $t_{j-1}$ are also functions of only $\epsilon$ and $p$. (Recall that $j = O(1/\epsilon)$.)
	
	\textbf{Condition} (4): We defined $t_0 = (\epsilon p)^{50}$ and recall that we showed $t_0 > t_1 > t_2  > \ldots$; this implies clearly that $\tau_+ = t_{j-1} \leq (\epsilon p)^{50}$.
	
	\textbf{Existence of $j$.} It only remains to prove that there exists a choice of $j$ satisfying $q_j \leq \epsilon \opt$ and that this $j$ is not too large. Precisely, we show that $j = O(1/\epsilon)$. Since intervals $(t_1, t_0], (t_2, t_1], (t_3, t_2], \ldots$ are disjoint, it holds that for each edge $e$ there is at most one $i$ for which $q_e \in (t_i, t_{i-1}]$. This means that $\sum_{i=1}^\infty q_i \leq \sum_{e \in E} q_e = \opt$.  It thus has to hold that $j \leq \lceil 1/\epsilon \rceil + 1$ or otherwise 
	$$
	\sum_{i=1}^{j-1} q_i \geq \sum_{i=1}^{\lceil 1/\epsilon \rceil + 1} \epsilon \opt = (\lceil 1/\epsilon \rceil + 1) \epsilon \opt >  \opt
	$$ contradicting the previous statement. This concludes the proof of the lemma.
\end{proof}

\begin{proof}[Proof of Claim~\ref{cl:frange}]
	We prove parts 1-3 one by one.
	
	\smparagraph{Part 1.} The upper bound $\E[f_e] \leq q_e$ is simple to prove. Consider random variable $f'_{e} = t_e/R$ and note that $f'_e \geq f_e$. We have
	$$
	\E[f'_e] = \E\left[\frac{t_e}{R}\right] = \frac{1}{R} \E[t_e] = \frac{1}{R} \left(\sum_{i=1}^R \Pr[e \in \MM{\mc{G}_i}]\right) = \frac{1}{R} (R \times \Pr[e \in \MM{\mc{G}_1}]) = q_e.
	$$
	Since $f_e \leq f'_e$, we get $\E[f_e]\leq\E[f'_e] = q_e$, concluding the proof of part 1.
	
	\smparagraph{Part 2.} Next we turn to prove the lower bound $\E[f_e] \geq (1-\epsilon)q_e$. Let $X_i$ be the indicator random variable for $e \in \MM{\mc{G}_i}$. We have $t_e = X_1 + \ldots + X_R$, $\E[X_i] = q_e$, and $\E[t_e] = Rq_e$. Note also that the $X_i$'s are independent since graphs $\mc{G}_1, \ldots, \mc{G}_R$ are drawn independently. Therefore, $\Var[t_e] = \sum_{i=1}^R \Var[X_i] = R(q_e - q_e^2)$. 
	
	Noting that $R = 0.5/\tau_-$ and that $q_e < \tau_-$ since $e$ is non-crucial, we get $R q_e < 1$. This means that if $t_e \geq a + 1$, then $|t_e - Rq_e| \geq a$; which implies $\Pr[t_e \geq a + 1] \leq \Pr[|t_e - Rq_e| \geq a]$. Therefore by setting $a = \sqrt{R/\epsilon}$ and also using Chebyshev's inequality, we get
	\begin{equation}\label{eq:1234123489172346}
		\Pr\left[t_e \geq \sqrt{R/\epsilon}+1\right] \leq  \Pr\left[|t_e - \E[t_e]| \geq \sqrt{R/\epsilon}\right] \leq \frac{\Var[t_e]}{(\sqrt{R/\epsilon})^2} = \frac{R(q_e-q_e^2)}{(\sqrt{R/\epsilon})^2} = \epsilon (q_e - q_e^2) \leq \epsilon q_e.
	\end{equation}
	Finally, we have
	\begin{align*}
	\E\left[\frac{t_e}{R}\right] &= \underbrace{\Pr\left[\frac{t_e}{R} \leq \frac{1}{\sqrt{\epsilon R}}\right] \E\left[\frac{t_e}{R} \mid \frac{t_e}{R} \leq \frac{1}{\sqrt{\epsilon R}} \right]}_{=\E[f_e]} + \Pr\left[\frac{t_e}{R} > \frac{1}{\sqrt{\epsilon R}}\right] \underbrace{\E\left[\frac{t_e}{R} \mid \frac{t_e}{R} > \frac{1}{\sqrt{\epsilon R}} \right]}_{\leq 1 \text{ since by definition, $t_e \leq R$.}}
	\end{align*}
	Rearranging the terms and replacing the bounds specified, we get
	$$
	\E[f_e] \geq \E\left[\frac{t_e}{R}\right] - \Pr\left[\frac{t_e}{R} > \frac{1}{\sqrt{\epsilon R}}\right] = \frac{1}{R}\E\left[t_e\right] - \Pr\left[t_e \geq \sqrt{R/\epsilon} + 1 \right] \stackrel{(\ref{eq:1234123489172346})}{\geq} \frac{1}{R} \times R q_e - \epsilon q_e = (1-\epsilon)q_e,
	$$
	concluding the proof of part 2.
	
	\smparagraph{Part 3.} Note that $f_e \leq t_e/R$ by definition. Thus, we have $\sum_{e \ni v} f_e \leq \sum_{e \ni v} t_e/R = R^{-1} \sum_{e \ni v} t_e$. Since each $\MM{\mc{G}_i}$ includes at most one incident edge of $v$ for being a matching, it holds that $\sum_{e \ni v} t_e \leq R$, thus indeed $\sum_{e \ni v} f_e \leq R^{-1} R = 1$.
	
	\smparagraph{Part 4.} Let $X_i$ be the event that $v$ is matched in $\MM{\mc{G}_i}$ via a non-crucial edge and define $X := \sum_{i=1}^R X_i$. Furthermore, define for each edge $e$,
	$$
		f'_e := \begin{cases}
		\frac{t_e}{R}, & \text{if $e$ is non-crucial,}\\
		0, & \text{otherwise.}
	\end{cases}
	$$
	Note that $f'_e$ is very similar to the value of $f_e$ except for the case where $t_e/R > 1/\sqrt{\epsilon R}$. In this case, $f_e = 0$ but $f'_e$ remains to be the ratio $t_e/R$. This implies that $f'_e \geq f_e$. Now let $f'_v = \sum_{e \ni v} f'_e$. Since $f_e \leq f'_e$ for all edges, we have $f_v \leq f'_v$. Therefore, instead of proving $\Pr[f_v > n_v + 0.1\epsilon] \leq (\epsilon p)^{10}$, it suffices to prove $\Pr[f'_v > n_v + 0.1\epsilon] \leq (\epsilon p)^{10}$.

	It holds from the definition that
	$$
	f'_v = \sum_{e: e \in N, v \in e} \frac{t_e}{R} = \frac{1}{R} \sum_{e: e \in N, v \in e} t_e = \frac{1}{R} \times (X_1 + \ldots + X_R) = X/R.
	$$
	Replacing this into $\Pr[f'_v > n_v + 0.1\epsilon] \leq (\epsilon p)^{10}$, we thus have to prove 
	$
		\Pr\left[X/R > n_v + 0.1\epsilon \right] \leq (\epsilon p)^{10},
	$
	or equivalently:
	$$
		\Pr[X > R n_v + 0.1 R \epsilon] \leq (\epsilon p)^{10}.
	$$
	To prove this we use a concentration bound on $X$. Note  that the $X_i$'s are independent since graphs $\mc{G}_1, \ldots, \mc{G}_R$ are drawn independently. Moreover, for each $i \in [R]$, we have $\E[X_i] = n_v$ since recall $X_i = 1$ iff $v$ is matched via a non-crucial edge in $\MM{\mc{G}_i}$ and this has probability $\sum_{e: e\in N, v \in e} q_e = n_v$. Thus $\E[X] = Rn_v$. While we can use Chernoff's bound here since all $X_i$'s are independent, even the second-moment method is enough for our desired inequality. The variance of $X$ can be bounded as follows:
	$$
	\Var[X] = \sum_{i=1}^R \Var[X_i] = \sum_{i=1}^R E[X_i^2] - \E[X_i]^2 = R (n_v - n_v^2).
	$$
	By Chebyshev's inequality, we get
	$$
		\Pr[X > Rn_v + 0.1R\epsilon] \leq \frac{R(n_v - n_v^2)}{(0.1 R \epsilon)^2} = \frac{100(n_v - n_v^2)}{R \epsilon^2} \leq \frac{100}{R\epsilon^2}.
	$$
	Since $R = 1/2\tau_-$ and $\tau_- < (\epsilon p)^{50}$ by Corrolary~\ref{cor:thresholds}, we get 
	$$
	\Pr[X > Rn_v + R\epsilon] \leq \frac{100}{R\epsilon^2} < \frac{200(\epsilon p)^{50}}{\epsilon^2} < (\epsilon p)^{10},
	$$
	which as described above concludes the proof.
\end{proof}

\begin{proof}[Proof of Observation~\ref{obs:samedist}]
First note that realizations $\mc{C}_1, \ldots, \mc{C}_\alpha$ are all drawn precisely from the same distribution that realization $\mc{C} = \mc{C}_0$ is drawn from. Thus due to symmetry, matchings $M_0, \ldots, M_\alpha$ are all derived from the same distribution. Matchings $M'_0, \ldots, M'_\alpha$ are then the result of applying the augmenting-hyperwalks $I$ found by $\apxMIS{H, \epsilon}$ on graph $H$. Construction of graph $H$ is symmetrical w.r.t. matchings $M_0, \ldots, M_\alpha$. The only remaining component of the algorithm where this symmetry may break is in algorithm $\apxMIS{H, \epsilon}$ that may be biased towards picking augmenting-hyperwalks depending on which matching $M_i$ they would augment. This can be avoided by using an algorithm for $\apxMIS{H, \epsilon}$ that is oblivious to the indices of matchings $M_0, \ldots, M_\alpha$ used to construct graph $H$. That is, suppose e.g. that we pick the ID of nodes in $H$ randomly before feeding it into $\apxMIS{H, \epsilon}$. This guarantees that the obtained matchings $M'_0, \ldots, M'_\alpha$ will all have the same distribution due to their symmetry.
\end{proof}

\section*{Acknowledgements}

We thank Noga Alon for referring us to his paper \cite{DBLP:conf/stoc/AlonMS12} on construction of Ruzsa-Szemerédi graphs and discussing its implications which were extremely insightful. We are in addition thankful to Hamed Saleh for fruitful discussions and also to anonymous STOC reviewers for helpful suggestions.

\bibliographystyle{plain}
\bibliography{refs}

\appendix

\section{Concentration of the Maximum Realized Matching's Size}\label{sec:concentration}

In this section, we prove that random variable $\mu(\mc{G})$, i.e. the size of the maximum realized matching of $G$, is highly concentrated around its mean $\E[\mu(\mc{G})] = \opt$. A similar concentration bound was previously proved also in the works of \cite{DBLP:conf/soda/BlumCHPPV17,DBLP:conf/soda/AssadiBBMS19}. Nonetheless, we provide the full proof in this section for the sake of self-containment.

\begin{lemma}\label{lem:concentration}
	For every $0 < t \leq \opt$, $\Pr[|\mu(\mc{G}) - \opt| \geq t] \leq \exp\left(-\frac{t^2}{2\opt + 2t/3}\right) < \exp\left(-\frac{t^2}{3\opt}\right)$.
\end{lemma}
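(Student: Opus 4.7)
The plan is to realize $\mu(\mc{G})$ as a \emph{self-bounding} function of the independent edge-realization indicators $Y_e := \mathbbm{1}[e \in \mc{E}]$ and then invoke the Boucheron--Lugosi--Massart (BLM) concentration inequality for such functions, which delivers tail bounds of exactly the Bernstein form $\exp(-t^2/(2\opt + 2t/3))$ that the lemma asks for.

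The main combinatorial input is verifying the two self-bounding conditions for $Z := \mu(\mc{G})$. Writing $\mc{G}^{-e}$ for the subgraph obtained by deleting edge $e$ from $\mc{G}$, I need to check: (i) $0 \le Z - \mu(\mc{G}^{-e}) \le 1$ for every edge $e$, which is immediate since $\mu$ is monotone in the edge set and deleting a single edge can drop the matching number by at most one; and (ii) $\sum_{e}\bigl(Z - \mu(\mc{G}^{-e})\bigr) \le Z$. For (ii) I will fix any maximum matching $M^{*}$ of $\mc{G}$ and observe that for every $e \notin M^{*}$ we have $M^{*} \subseteq \mc{G}^{-e}$, so $\mu(\mc{G}^{-e}) \ge |M^{*}| = Z$ and the corresponding summand vanishes. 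Hence the sum is supported on edges of $M^{*}$ and is therefore bounded by $|M^{*}| = Z$, as required.

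With the self-bounding property in hand, BLM yields
\[
\Pr[Z \ge \opt + t] \le \exp\!\bigl(-\tfrac{t^2}{2\opt + 2t/3}\bigr) \quad \text{and} \quad \Pr[Z \le \opt - t] \le \exp\!\bigl(-\tfrac{t^2}{2\opt}\bigr),
\]
and since $2\opt \le 2\opt + 2t/3$ the lower tail is also bounded by the first expression. The two-sided statement of the lemma then follows (with the mild factor-of-$2$ loss from a union bound comfortably absorbed into the slack between $\exp(-t^2/(2\opt + 2t/3))$ and $\exp(-t^2/(3\opt))$ in the stated regime $t \le \opt$; for small $t$ the bound is otherwise vacuous).

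The only real obstacle is having BLM on hand: the combinatorial check of self-boundedness for $\mu$ above is just a few lines. A self-contained alternative would be to run the entropy-method derivation of BLM on $Z$ directly, using the same self-bounding identities, or to appeal to Talagrand's inequality for configuration functions as in prior stochastic-matching works such as \cite{DBLP:conf/soda/BlumCHPPV17,DBLP:conf/soda/AssadiBBMS19}; these are lengthier but introduce no new ideas.
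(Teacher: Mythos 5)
Your approach is essentially the same as the paper's: both realize $\mu(\mathcal{G})$ as a self-bounding function of the independent edge-realization indicators, verify the two self-bounding conditions by zeroing out one coordinate and fixing a maximum matching $M^{*}$, and then invoke the Boucheron--Lugosi--Massart concentration inequality for self-bounding functions. The combinatorial check (condition (i) by monotonicity and Lipschitzness, condition (ii) by noting the summands vanish off $M^{*}$) is identical in both.

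The one divergence is at the very end. The paper quotes \cite[Theorem~6.12]{DBLP:books/daglib/0035704} directly as a \emph{two-sided} bound $\Pr[|Z-\E Z|\ge t]\le \exp\bigl(-t^2/(2\E Z+2t/3)\bigr)$ and is done. You instead write down the two one-sided tails, union-bound them, and claim the resulting factor of $2$ is ``comfortably absorbed'' into the slack between $\exp\bigl(-t^2/(2\opt+2t/3)\bigr)$ and $\exp\bigl(-t^2/(3\opt)\bigr)$ for $t\le\opt$. That last claim does not actually hold uniformly: for $t\le\opt$ one has $\tfrac{t^2}{2\opt+2t/3}-\tfrac{t^2}{3\opt}\ge \tfrac{t^2}{24\opt}$, so $2\exp\bigl(-t^2/(2\opt+2t/3)\bigr)\le\exp\bigl(-t^2/(3\opt)\bigr)$ only once $t\gtrsim\sqrt{24\ln 2\cdot\opt}\approx 4\sqrt{\opt}$. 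For $t$ in the range $[1,\,4\sqrt{\opt}]$ the bound $\exp\bigl(-t^2/(3\opt)\bigr)$ is not vacuous, yet your union-bound argument does not deliver it. In context this is a cosmetic issue: the lemma is only invoked downstream (e.g.\ Claim~B.2 with $t=\epsilon\opt$ and $\opt>3\epsilon^{-3}$, and Corollary~\ref{cor:highprobability} with $\opt=\omega(1)$ and $t=\Theta(\opt)$) where $t\gg\sqrt{\opt}$ and the factor of $2$ genuinely is harmless. But as written, your parenthetical overclaims; either cite the two-sided form of the self-bounding inequality as the paper does, or state the lemma with a leading factor of $2$, which costs nothing where it is used.
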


\begin{corollary}\label{cor:highprobability}
	Let $Q$ be a subgraph of $G$ obtained via a deterministic algorithm and suppose that $\opt = \omega(1)$. If $\E[\mu(\mc{Q})]/\E[\mu(\mc{G})] \geq \alpha$ then with high probability $\mu(\mc{Q})/\mu(\mc{G}) \geq (1-o(1))\alpha$.
\end{corollary}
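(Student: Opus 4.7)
The plan is to show that $\mu(\mc{G})$, viewed as a function of the independent edge-realization indicators $X = (X_e)_{e \in E}$ with $\Pr[X_e = 1] = p_e$, is a \emph{self-bounding function} in the sense of Boucheron--Lugosi--Massart, and then invoke the standard concentration inequality for such functions. To set things up, for each edge $e$ I define the ``edge-dropped'' version $f_e(X_{-e}) := \mu(V,\{e' \ne e : X_{e'} = 1\})$, i.e.\ the max matching size of the realization after forcibly deleting $e$ (regardless of $X_e$).

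The first step is to verify the two self-bounding conditions. For the pointwise bound, $0 \le \mu(\mc{G}) - f_e(X_{-e}) \le 1$ holds because deleting a single edge from any graph cannot increase the matching number and can decrease it by at most one. For the summation bound, I would observe that $\mu(\mc{G}) - f_e(X_{-e}) = 1$ if and only if $e \in \mc{E}$ \emph{and} $e$ lies in every maximum matching of $\mc{G}$ (an ``essential'' realized edge); otherwise the term is $0$. Since all essential edges must simultaneously appear in any one fixed maximum matching of $\mc{G}$, their count is at most $\mu(\mc{G})$, so $\sum_{e} \bigl(\mu(\mc{G}) - f_e(X_{-e})\bigr) \le \mu(\mc{G})$. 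Together these establish the $(1,0)$-self-bounding property.

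The second step is to apply the BLM / Maurer concentration inequality for self-bounding functions, which gives the Bernstein-type bounds
\[
\Pr\!\left[\mu(\mc{G}) \ge \opt + t\right] \le \exp\!\left(-\frac{t^2}{2\opt + 2t/3}\right), \qquad \Pr\!\left[\mu(\mc{G}) \le \opt - t\right] \le \exp\!\left(-\frac{t^2}{2\opt}\right),
\]
recalling that $\E[\mu(\mc{G})] = \opt$. A union bound on the two tails (absorbing the constant $2$ into the larger of the two denominators, or re-deriving directly from the symmetric BLM statement) yields the claim $\Pr[|\mu(\mc{G}) - \opt| \ge t] \le \exp(-t^2 / (2\opt + 2t/3))$, and the simpler second inequality $\exp(-t^2 / (3\opt))$ follows since $t \le \opt$ implies $2t/3 \le \opt$.

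The main obstacle is the combinatorial sum bound (ii): one has to notice that although many edges could individually flip $\mu$ when toggled, the set of edges that are ``forcing'' in a \emph{fixed} realization $\mc{G}$ is constrained to sit inside a single maximum matching, which keeps the total self-bounding budget at $\mu(\mc{G})$ rather than scaling with $|\mc{E}|$. Once this is in hand, everything else is a direct invocation of the BLM inequality, and Corollary~\ref{cor:highprobability} follows by instantiating $t = \omega(\sqrt{\opt \log \opt})$ and using the deterministic bound $\mu(\mc{Q}) \le \mu(\mc{G})$ to transfer the expected-ratio guarantee to a high-probability one.
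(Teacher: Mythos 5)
Your re-derivation of the self-bounding property and the BLM concentration bound for $\mu(\mc{G})$ is correct, and it matches the paper's proof of Lemma~\ref{lem:concentration} (your identification of the one-deficit edges as exactly the realized edges lying in \emph{every} maximum matching is a slightly crisper way of seeing that they all live inside any fixed maximum matching). However, the final step is where your argument has a genuine gap: you claim the corollary follows from concentration of $\mu(\mc{G})$ together with the pointwise inequality $\mu(\mc{Q}) \le \mu(\mc{G})$. This transfer does not work. Knowing that $\mu(\mc{G}) = (1\pm o(1))\opt$ w.h.p.\ and that $\mu(\mc{Q})$ is bounded above by $\mu(\mc{G})$ gives no lower bound on $\mu(\mc{Q})$ in terms of $\mu(\mc{G})$: even if $\mu(\mc{G}) = \opt$ deterministically, one could a priori have $\mu(\mc{Q}) = 0$ on half the probability space and $\mu(\mc{Q}) \approx 2\alpha\opt$ on the other half, matching the expectation constraint $\E[\mu(\mc{Q})] = \alpha\opt$ while making the ratio $\mu(\mc{Q})/\mu(\mc{G})$ vanish with constant probability.

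The missing ingredient is to apply the \emph{same} self-bounding concentration argument to the random variable $\mu(\mc{Q})$ itself. This is precisely why the corollary's hypotheses insist that $Q$ is obtained via a \emph{deterministic} algorithm: $\mu(\mc{Q})$ is then a function of the independent edge-realization indicators $(X_e)_{e \in Q}$, is self-bounding by an identical argument, and has expectation $\E[\mu(\mc{Q})] = \alpha\opt = \omega(1)$. Applying Lemma~\ref{lem:concentration} (equivalently, its proof with $Q$ in place of $G$) gives $\mu(\mc{Q}) = (1 \pm o(1))\E[\mu(\mc{Q})]$ w.h.p.\ in addition to $\mu(\mc{G}) = (1 \pm o(1))\opt$ w.h.p.; the two combine via a union bound to yield $\mu(\mc{Q})/\mu(\mc{G}) \ge (1-o(1))\alpha$ w.h.p. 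Your proof establishes only one of the two needed concentration statements, and the attempted substitute for the other does not hold.
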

\begin{proof}
	Lemma~\ref{lem:concentration} implies that w.h.p. $\mu(\mc{Q}) = (1\pm o(1))\E[\mu(\mc{Q})]$ and $\mu(\mc{G}) = (1\pm o(1))\E[\mu(\mc{G})]$. Therefore, w.h.p. $\mu(\mc{Q})/\mu(\mc{G}) = (1\pm o(1)) \E[\mu(\mc{Q})]/\E[\mu(\mc{G})] \geq (1-o(1))\alpha$.
\end{proof}

We note that our construction of subgraph $Q$ in Algorithm~\ref{alg:sampling} is randomized, thus the corollary above cannot be used as a black-box to imply a high probability bound. However, we remark that a similar proof to that of Lemma~\ref{lem:concentration} which we give below,  proves $\mu(\mc{Q})$ in our algorithm is concentrated around its mean even considering the randomization of Algorithm~\ref{alg:sampling}. Therefore, our algorithm also guarantees a high probability bound for the approximation-factor.

In order to prove this lemma, we use the concentration of ``self-bounding'' functions. See Sections~3.3 and 6.7 of book \cite{DBLP:books/daglib/0035704} by Boucheron, Lugosi and Massart for a thorough discussion on this concentration inequality and its proof.

\begin{definition}[{\cite[Section~6.7]{DBLP:books/daglib/0035704}}]\label{def:selfbounding}
	A function $f: \mc{X}^m \to \mathbb{R}$ is ``self-bounding'' if for every $i \in [m]$ there is a function $f_i: \mc{X}^{m-1} \to \mathbb{R}$ such that for all $x=(x_1, \ldots, x_m) \in \mathcal{X}^m$,
	\begin{enumerate}
		\item $0 \leq f(x) - f_i(x^{(i)})\leq 1$ for all $i \in [m]$, and
		\item $\sum_{i=1}^m (f(x)-f_i(x^{(i)})) \leq f(x)$,
	\end{enumerate}
	where $x^{(i)} = (x_1, \ldots, x_{i-1}, x_{i+1}, \ldots, x_n)$.
\end{definition}

\begin{lemma}[{\cite[Theorem~6.12]{DBLP:books/daglib/0035704}}]\label{lem:selfbounding}
	If $X_1, \ldots, X_m$ are independent random variables taking values in $\mathcal{X}$ and $Z = f(X_1, \ldots, X_m)$ is self-bounding, then for every $0 < t \leq \E Z$,
	$$
		\Pr[|Z - \E Z| \geq t] \leq \exp \left(- \frac{t^2}{2\E Z + 2t/3} \right).
	$$
\end{lemma}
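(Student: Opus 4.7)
This is stated verbatim as Theorem~6.12 of \cite{DBLP:books/daglib/0035704}, so in the paper one simply cites it. My plan to prove it from scratch follows the standard entropy-method argument due to Boucheron, Lugosi and Massart: derive a Bennett-type bound on the logarithmic moment generating function $\psi(\lambda) := \log \E[e^{\lambda(Z - \E Z)}]$ via a differential inequality, and then optimize via Cram\'er--Chernoff to convert the MGF bound into a tail estimate.

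The central ingredient is a ``modified log-Sobolev inequality'' for functions of independent random variables, asserting that for all $\lambda$,
\[
	\lambda \E[Z e^{\lambda Z}] - \E[e^{\lambda Z}] \log \E[e^{\lambda Z}] \;\leq\; \sum_{i=1}^m \E\bigl[e^{\lambda Z} \, \phi(-\lambda (Z - Z_i))\bigr],
\]
where $\phi(u) := e^u - u - 1$ and $Z_i := f_i(X^{(i)})$. This is obtained from the tensorization of entropy applied to the non-negative function $g := e^{\lambda Z}$: one writes $\mathrm{Ent}(g) \leq \sum_i \E[\mathrm{Ent}_i(g)]$, and because $Z_i$ is constant in $X_i$ one can subtract $e^{\lambda Z_i}$ inside each $\mathrm{Ent}_i$ and invoke the variational characterization of entropy to express each term via the discrete gradients $Z - Z_i$.

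Next I would plug in the self-bounding conditions. For $\lambda > 0$, the first condition $0 \leq Z - Z_i \leq 1$ combined with the elementary inequality $\phi(-u) \leq u^2/2$ for $u \geq 0$ gives $\phi(-\lambda(Z-Z_i)) \leq \tfrac{\lambda^2}{2}(Z-Z_i)^2 \leq \tfrac{\lambda^2}{2}(Z-Z_i)$. Summing over $i$ and applying the second condition $\sum_i(Z - Z_i) \leq Z$ bounds the right-hand side by $\tfrac{\lambda^2}{2}\E[Z e^{\lambda Z}]$. The resulting inequality $\lambda F'(\lambda) - F(\lambda)\log F(\lambda) \leq \tfrac{\lambda^2}{2}F'(\lambda)$ (with $F = \E[e^{\lambda Z}]$) translates, after a standard change of variables, into a differential inequality on $\psi$ which integrates (Herbst's argument) to the Bennett-type MGF bound $\psi(\lambda) \leq \E Z\,(e^\lambda - \lambda - 1)$. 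Plugging this into $\Pr[Z - \E Z \geq t] \leq \exp\bigl(-\sup_{\lambda \geq 0}(\lambda t - \psi(\lambda))\bigr)$ and invoking the Bennett--Bernstein numerical inequality $(1+u)\log(1+u) - u \geq u^2/(2 + 2u/3)$ at $u = t/\E Z$ yields the stated upper-tail estimate.

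The main obstacle is the lower tail, since the second self-bounding condition is asymmetric in $i$ and so the $\lambda < 0$ case does not reduce immediately to the previous argument. The saving grace is that for $\lambda < 0$ the argument $-\lambda(Z - Z_i) \geq 0$ fed into $\phi$ is again non-negative, which makes $\phi$ easier to control, and repeating the derivation produces a sub-Gaussian MGF bound on the negative side (typically with at least as good constants as the positive side). A union bound over the two sides then gives the claimed two-sided inequality with denominator $2\E Z + 2t/3$.
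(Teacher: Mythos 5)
The paper gives no proof here: Lemma~\ref{lem:selfbounding} is simply a citation of \cite[Theorem~6.12]{DBLP:books/daglib/0035704}, and you correctly recognize this. Your reconstruction of the textbook argument (tensorization of entropy $\Rightarrow$ modified log-Sobolev $\Rightarrow$ Herbst's differential inequality $\Rightarrow$ Bennett/Bernstein) is the right skeleton, and the lower-tail discussion is accurate.

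One step of your sketch does not actually deliver the constant $2t/3$, though. Using $\phi(-u)\leq u^2/2$ together with the self-bounding hypotheses bounds the right-hand side of the modified log-Sobolev inequality by $\frac{\lambda^2}{2}\E[Ze^{\lambda Z}]$, which after Herbst's argument integrates to the sub-gamma bound $\log\E[e^{\lambda(Z-\E Z)}] \leq \frac{\lambda^2\E Z/2}{1-\lambda/2}$, \emph{not} to the Poisson/Bennett bound $\E Z\,(e^\lambda-\lambda-1)$ you quote. The Cram\'er--Chernoff step applied to that sub-gamma bound only gives a denominator of $2\E Z + t$. To obtain the sharper Bennett MGF bound one must instead exploit convexity of $\phi$ together with $\phi(0)=0$: for $\lambda>0$ and $0\leq Z-Z_i\leq 1$, $\phi(-\lambda(Z-Z_i))\leq (Z-Z_i)\,\phi(-\lambda)$, so that the right-hand side is bounded by $\phi(-\lambda)\E[Ze^{\lambda Z}]$; the resulting ODE $(1-e^{-\lambda})\psi'(\lambda)\leq \psi(\lambda)$ for $\psi=\log\E[e^{\lambda Z}]$ integrates exactly to $\psi(\lambda)\leq \E Z\,(e^\lambda-1)$, i.e.\ the Poisson MGF, and only then does the numerical inequality $(1+u)\log(1+u)-u\geq u^2/(2+2u/3)$ give the claimed $2\E Z + 2t/3$. (A cosmetic point: a union bound over the two tails should, strictly speaking, carry a leading factor of $2$ that the paper's transcription drops; this is immaterial to its downstream use in Corollary~\ref{cor:highprobability}.)
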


Having this inequality, Lemma~\ref{lem:concentration} follows as follows.

\begin{proof}[Proof of Lemma~\ref{lem:concentration}]
	Let $X_e$ for each edge $e$ in graph $G$ be the indicator of the event that $e$ is realized. We can use vector $X = (X_{e_1}, \ldots, X_{e_m})$ to represent a realization of $G$ where $e_1, \ldots, e_m$ are all edges in $G$. With a slight abuse of notation, we use $\mu(X)$ to denote the size of the maximum matching in realization $X$. We first prove that function $\mu(X)$ is self-bounding. For each $i \in [m]$, define
	$$
		\mu_i(X^{(i)}) = \mu(X_{e_1}, \ldots, X_{e_{i-1}}, 0, X_{e_{i+1}}, \ldots, X_{e_m}).
	$$
	In words, $\mu_i(X^{(i)})$ is the maximum matching size in realization $X$ if we regard edge $e_i$ as unrealized. We need to show that the two conditions of Definition~\ref{def:selfbounding} hold. First, we have to show that
	$$
		0 \leq \mu(X) - \mu_i(X^{(i)}) \leq 1 \qquad \text{for all $i \in [m]$ and all realizations $X$.}
	$$
	Observe that removing a realized edge cannot increase the maximum realized matching size, thus clearly $\mu(X) - \mu(X^{(i)}) \geq 0$. Moreover, removing each edge decreases the maximum matching size by at most 1. Thus $\mu(X) - \mu(X^{(i)}) \leq 1$ proving the first condition. For the second condition, we have to show that
	$$
		\sum_{i=1}^m \left(\mu(X) - \mu_i(X^{(i)}) \right) \leq \mu(X).
	$$
	To see this, fix a maximum realized matching $M$ in realization $X$. For any edge $e_i$ outside this matching, we have $\mu(X) - \mu_i(X^{(i)}) = 0$. For the rest, as discussed above $\mu(X) - \mu_i(X^{(i)}) \leq 1$. Therefore indeed $\sum_{i=1}^m \left(\mu(X) - \mu_i(X^{(i)}) \right) \leq |M| = \mu(X)$.
	
	We proved that $\mu(X)$ is self-bounding. Since the edges are realized independently, we can plug this into Lemma~\ref{lem:selfbounding} and immediately obtain Lemma~\ref{lem:concentration}.
\end{proof}

\section{On Generality of Assumption~\ref{ass:optlarge}}\label{app:optlarge}

In this section, we prove that Assumption~\ref{ass:optlarge} comes without loss of generality. Precisely, we show that solving the problem for any input graph $G$ can be reduced to solving it for a graph $H$ with $O(\opt/\epsilon)$ vertices and $\E[\mu(\mc{H})] \geq (1-\epsilon)\opt$ where $\mc{H}$ is a realization $H$. To do this, we use a ``vertex sparsification'' idea of  Assadi~\etal{}~\cite{AKL16}. Our reduction is slightly different since we do not want parallel edges in the graph, but the main idea is essentially the same. It is also worth noting that for the reduction to work, it is crucial that our algorithm works for different edge realization probabilities. We provide the full proof for completeness.

We note that throughout the proof we may assume that $\opt$ is larger than constant $3\epsilon^{-3}$ and remark that the problem otherwise is trivial.

\smparagraph{Construction of $H$ from $G$.} We construct graph $H=(U, F)$ as follows. For $k = \frac{8\opt}{\epsilon}$, define $k$ {\em buckets} $U = \{u_1, \ldots, u_k\}$. Each of these buckets $u_i$ will correspond to a node in $H$. Assign each vertex $v$ of graph $G$ to a bucket $b(v) \in \{u_1, \ldots, u_k\}$ picked independently and uniformly at random. Then for any edge $\{v_1, v_2\}$ in graph $G$, we add an edge  $\{b(v_1), b(v_2)\}$ to $F$. Finally, we turn $H$ into a simple graph by removing self-loops and merging parallel edges.

Now we need to set the realization probability $p_e$ of every edge $e \in F$ as well. For any $e \in F$, let us denote by $E(e)$ the set of edges in the original graph $G$ that are mapped to $e$. We set
$$
	p_e := 1 - \prod_{e' \in E(e)} (1-p_{e'}).
$$
We note that $p_e$ is defined such that it precisely equals to the probability that at least one edge in $E(e)$ is realized.

\begin{claim}\label{cl:largeinH}
	Fix any matching $M$ in $G$ satisfying $|M| \leq 2\opt$. Then $\E[\mu(H)] \geq (1-\epsilon)|M|$ where the expectation is taken over the randomization of the algorithm in constructing $H$. 
\end{claim}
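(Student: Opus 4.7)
The plan is to lift the given matching $M$ of $G$ directly through the random bucket map $b(\cdot)$ and count how many $M$-edges survive as edges of a matching in $H$, via linearity of expectation. For each $e = \{v_1, v_2\} \in M$, I would define an indicator $X_e$ that equals $1$ precisely when (i) $b(v_1) \ne b(v_2)$ and (ii) no vertex of $V(M) \setminus \{v_1, v_2\}$ is mapped to $b(v_1)$ or $b(v_2)$. When $X_e = 1$, condition (i) guarantees that the lifted edge $\{b(v_1), b(v_2)\}$ is not a self-loop and therefore appears in $F$, so it is an edge of $H$.

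The key structural observation is that the set $\{\{b(v_1), b(v_2)\} : e = \{v_1, v_2\} \in M,\ X_e = 1\}$ is automatically a matching of $H$. Indeed, if $X_e = X_{e'} = 1$ for two distinct $M$-edges $e = \{v_1, v_2\}$ and $e' = \{v_1', v_2'\}$, then condition (ii) applied to $e$ forces $b(v_1'), b(v_2') \notin \{b(v_1), b(v_2)\}$, so the two lifted edges are vertex-disjoint in $H$. Therefore $\mu(H) \ge \sum_{e \in M} X_e$.

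It then remains to lower-bound $\E[X_e]$ for each fixed $e \in M$. The event $\{X_e = 0\}$ is contained in the union of at most $4|M| - 3$ pairwise bucket-collision events between $M$-vertices (one collision for each of the $\le 2|M|-1$ other endpoints colliding with $v_1$, and $\le 2|M|-2$ further collisions with $v_2$), each occurring with probability $1/k$. A union bound gives $\Pr[X_e = 0] \le 4|M|/k \le 8\opt/k = \epsilon$ by the choice $k = 8\opt/\epsilon$ and the hypothesis $|M| \le 2\opt$. Linearity of expectation then yields $\E[\mu(H)] \ge \sum_{e \in M} \E[X_e] \ge (1-\epsilon)|M|$, which is exactly the desired bound. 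There is no real obstacle here: the parameter $k = 8\opt/\epsilon$ was picked precisely so that the $O(|M|)$-many potential collisions around each $M$-edge contribute only an $\epsilon$-fraction loss in expectation.
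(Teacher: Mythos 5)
Your proof is correct and is essentially the same argument as the paper's. The paper defines a bad-vertex set $X$ (vertices of $V(M)$ with a bucket collision), shows the ``good'' $M$-edges (both endpoints outside $X$) lift to a matching of $H$, and bounds $\E[|X|]$ by a per-vertex union bound; your per-edge indicator $X_e$ is equivalent to ``both endpoints of $e$ are good,'' and your union bound over the $O(|M|)$ pairwise collisions around $e$ is the same computation reorganized by edge rather than by vertex, yielding the identical $(1-\epsilon)|M|$ bound via linearity of expectation.
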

\begin{proof}
	Let $V(M)$ be the vertex-set of matching $M$ in graph $G$ and define 
	$$X := \{v \in V(M) \mid \exists u \in V(M) \text{ s.t. } v\not= u \text{ and } b(v)=b(u)\},
	$$
	which is the set of vertices in $V(M)$ whose bucket is not unique with regards to others in $V(M)$.
	
	We first claim that $\mu(H) \geq |M| - |X|$. Call an edge $\{u, v\} \in M$ {\em good} if $u \not\in X$, $v \not\in X$, and {\em bad} otherwise. Each bad edge has at least one endpoint in $X$, thus there are at least $|M| - |X|$ good edges in $M$. One can easily confirm that the set of corresponding edges of all good edges in $M$ forms a matching in $H$. Thus $\mu(H) \geq |M|-|X|$.
	
	To conclude, we prove that $\E[|X|] \leq \epsilon |M|$ which proves $\E[\mu(H)] \geq |M| - \epsilon |M| = (1-\epsilon)|M|$. To see why $\E[|X|] \leq \epsilon |M|$, fix any vertex $v \in V(M)$ and suppose that we have adversarially fixed the bucket $b(u)$ of all other vertices $u \in V(M)$. Since the bucket of $v$ is picked uniformly at random from $10\opt/\epsilon$ buckets and $|V(M)| \leq 2|M| \leq 4\opt$, the probability of $v$ choosing a bucket already chosen by another vertex in $V(M)$ would be $\leq \frac{4\opt}{8\opt/\epsilon} \leq \epsilon/2$. By linearity of expectation over $2|M|$ vertices in $V(M)$, we get $\E[|X|] \leq \epsilon |M|$, concluding the proof.
\end{proof}

\begin{claim}\label{cl:713713}
	It holds that $\E[\mu(\mc{H})] \geq (1-3\epsilon)\opt$. Here the expectation is taken over both the randomization in construction of $H$ and the randomization in realization $\mc{H}$ of $H$.
\end{claim}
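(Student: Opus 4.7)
The plan is to couple the realization $\mc{H}$ of $H$ with a realization $\mc{G}$ of $G$ and then replay the good/bad edge analysis of Claim~\ref{cl:largeinH} using the (random) matching $\MM{\mc{G}}$ in place of the fixed $M$. The coupling I would use is the natural one: first sample $\mc{G}$, then declare each $f\in F$ realized iff at least one preimage $e'\in E(f)$ is realized in $\mc{G}$. Because $\{E(f)\}_{f\in F}$ partitions the non-self-loop edges of $G$, the indicators ``$f$ realized'' are independent across $f$ and the marginal is exactly $p_f=1-\prod_{e'\in E(f)}(1-p_{e'})$, so this coupling reproduces the correct law of $\mc{H}$.

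With the coupling fixed, condition on $\mc{G}$ and let $M := \MM{\mc{G}}$, so that $|M|=\mu(\mc{G})$. Set $X := \{v\in V(M) : \exists u\in V(M),\, u\neq v,\, b(u)=b(v)\}$, and call an edge of $M$ \emph{good} if neither endpoint lies in $X$. Since the buckets of vertices in $V(M)\setminus X$ are pairwise distinct, the $b$-images of distinct good edges are vertex-disjoint edges of $F$ (in particular not self-loops), and each such image is automatically realized in the coupled $\mc{H}$ because its preimage lies in $\mc{G}$. As each bad edge uses up at least one vertex of $X$, there are at least $|M|-|X|$ good edges, giving the pointwise inequality $\mu(\mc{H})\geq \mu(\mc{G})-|X|$. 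Taking expectations yields $\E[\mu(\mc{H})]\geq \opt - \E[|X|]$.

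It then remains to show $\E[|X|]\leq 3\epsilon\opt$ (I will in fact get $\epsilon\opt$). Conditional on $\mc{G}$, a pairwise union bound over $V(M)$ gives $\E_b[|X|\mid \mc{G}]\leq |V(M)|(|V(M)|-1)/k \leq 4\mu(\mc{G})^2/k = \epsilon\mu(\mc{G})^2/(2\opt)$, so $\E[|X|]\leq \epsilon\,\E[\mu(\mc{G})^2]/(2\opt)$. I would control the quadratic moment via $\E[\mu(\mc{G})^2]=\opt^2+\Var[\mu(\mc{G})]\leq \opt^2+\opt\leq 2\opt^2$, where the first inequality uses the self-bounding variance bound $\Var[\mu(\mc{G})]\leq \E[\mu(\mc{G})]=\opt$ (a standard consequence of the self-bounding property of $\mu(\cdot)$, which is already verified in the proof of Lemma~\ref{lem:concentration}) and the second uses $\opt\geq 1$, which holds under the standing assumption $\opt\geq 3\epsilon^{-3}$. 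Assembling these bounds gives $\E[\mu(\mc{H})]\geq (1-\epsilon)\opt\geq (1-3\epsilon)\opt$.

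The main point requiring care is the quadratic moment bound on $\mu(\mc{G})$: the cleanest route invokes the self-bounding variance inequality, which is not stated explicitly in the previous appendix. If one prefers to avoid citing it, a backup is to split $\E[|X|]$ on the event $\{\mu(\mc{G})\leq 2\opt\}$: on this event use the deterministic bound $|X|\leq 4\opt^2/k = \epsilon\opt/2$, and off this event use the crude deterministic bound $|X|\leq 2\mu(\mc{G})$ together with the tail estimate from Lemma~\ref{lem:concentration} and the hypothesis $\opt=\omega(1)$ to absorb $\E[|X|\cdot\mathbbm{1}(\mu(\mc{G})>2\opt)]$ into the $3\epsilon\opt$ slack.
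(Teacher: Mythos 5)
Your proposal is correct and shares the same skeleton as the paper's proof — the identical coupling between $\mc{G}$ and $\mc{H}$, and the same good/bad-edge count giving (pointwise under the coupling) $\mu(\mc{H}) \geq \mu(\mc{G}) - |X|$ — but it handles the fluctuation of $\mu(\mc{G})$ differently. The paper conditions on the concentration event $\{|\mu(\mc{G}) - \opt| < \epsilon\opt\}$ supplied by Lemma~\ref{lem:concentration} so that $|M| \leq 2\opt$ and Claim~\ref{cl:largeinH} can be invoked as a black box, then pays a $(1-\epsilon)$ factor for the event probability; this stacks up three $(1-\epsilon)$ factors and lands on $(1-3\epsilon)\opt$. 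You instead take expectations unconditionally and absorb the dependence on $\mu(\mc{G})$ into $\E[\mu(\mc{G})^2]$, controlled by the self-bounding variance inequality $\Var[\mu(\mc{G})] \leq \E[\mu(\mc{G})]$. This avoids the conditioning and yields the slightly sharper $(1-\epsilon)\opt$, and it makes explicit the coupling step that the paper leaves implicit (Claim~\ref{cl:largeinH} as stated bounds $\mu(H)$, not $\mu(\mc{H})$; your pointwise inequality fixes this). The cost, as you note, is invoking a fact not stated in the paper's appendix: $\Var[f]\leq\E[f]$ for self-bounding $f$. That is a standard consequence of the self-bounding property (e.g.\ \cite[Cor.~3.7]{DBLP:books/daglib/0035704}), and the appendix already verifies $\mu(\cdot)$ is self-bounding, so the invocation is legitimate. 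Your backup route via splitting on $\{\mu(\mc{G})\leq 2\opt\}$ would need the one-sided Bernstein tail for self-bounding functions (valid for all $t>0$, not just $t\leq\opt$) to dominate the integral of the upper tail, since $n$ can be arbitrarily large relative to $\opt$ here; the paper's own Lemma~\ref{lem:concentration} as stated does not suffice for that tail integral, so the variance route is the cleaner of your two options.
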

\begin{proof}
	We first map each realization $\mc{G}$ of $G$ to a realization $\mc{H}$ of $H$. To do so, we say an edge $e \in F$ is realized in $\mc{H}$ if and only if at least one edge $e' \in E(e)$ is realized in $\mc{G}$. We argue that this mapping preserves independence of edge realizations in $H$ and their realization probabilities. First, since for any two edges $e_1, e_2 \in F$ it holds that $E(e_1) \cap E(e_2) = \emptyset$, realization of an edge $e \in F$ gives no information regarding realization of other edges. Moreover, observe that each edge $e \in F$ will be precisely realized with probability $p_e$ as discussed above in defining $p_e$.
	
	Let $M$ be the maximum realized matching of $G$. By Lemma~\ref{lem:concentration}, $\Pr[||M|-\opt| \geq \epsilon \opt] < \exp(-\frac{(\epsilon \opt)^2}{3\opt}) = \exp( - \frac{\epsilon^2 \opt}{3}) < \epsilon$ where the last inequality follows from assumption $\opt > 3\epsilon^{-3}$. This means that with probability at least $1-\epsilon$, $|M| \in [(1-\epsilon)\opt, (1+\epsilon)\opt]$. Let us suppose that this event holds and denote it by $A$. Note that event $A$ is only with regards to realization of $G$ and reveals no information about the algorithm to construct $H$. Now plugging matching $M$ into Claim~\ref{cl:largeinH}, we get that $\E[\mu(\mc{H}) \mid A] \geq (1-\epsilon)|M| \geq (1-\epsilon)(1-\epsilon)\opt \geq (1-2\epsilon)\opt$. Incorporating also the probability that event $A$ holds, which as described is at least $1-\epsilon$, we get $\E[\mu(\mc{H})] \geq (1-\epsilon)(1-2\epsilon)\opt \geq (1-3\epsilon)\opt$, concluding the proof.
\end{proof}

\smparagraph{The reduction.} We are now ready to give the full reduction. Suppose we are given $n$-vertex graph $G$ with $\opt = \E[\mu(\mc{G})]$ and assume that $\opt < 0.1 \epsilon n$ (otherwise Assumption~\ref{ass:optlarge} holds). We first construct graph $H$ as described. Note that $H$ has at most $n' = \frac{8\opt}{\epsilon}$ nodes by the construction and that $\E[\mu(\mc{H})] \geq (1-3\epsilon)\opt$ by Claim~\ref{cl:713713}. Replacing $\opt$ with $\epsilon n'/8$, we get $\E[\mu(\mc{H})] \geq (1-3\epsilon)\frac{\epsilon n'}{8}$. Assuming $\epsilon < 0.05$ (recall that we can assume $\epsilon$ to be smaller than any needed constant), this implies $\E[\mu(\mc{H})] \geq \frac{\epsilon n'}{10}$ and thus Assumption~\ref{ass:optlarge} holds for graph $H$.

Let $Q$ be the result of running Algorithm~\ref{alg:sampling} on graph $H$. Since Assumption~\ref{ass:optlarge} holds for $H$, it leads to a $(1-\epsilon)$-approximation. That is, we get $\E[\mu(\mc{Q})] \geq (1-\Omega(\epsilon))\E[\mu(\mc{H})]$. We use this subgraph $Q$ to pick a bounded-degree subgraph $Q'$ of $G$ that provides a $(1-\epsilon)$-approximation: For each edge $e \in Q$, let us {\em pick} $\min\{p^{-1} \log \epsilon^{-1}, |E(e)|\}$ arbitrary edges from $E(e)$ and put them in $Q'$. We argue that this subgraph $Q'$ has maximum degree $O_{\epsilon, p}(1)$ and that $\E[\mu(\mc{Q}')] \geq (1-\Omega(\epsilon))\opt$.

\begin{claim}
	$Q'$ has maximum degree $O_{\epsilon, p}(1)$.	
\end{claim}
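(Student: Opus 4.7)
The plan is straightforward: bound the degree of an arbitrary vertex $v \in V(G)$ in $Q'$ by combining (a) the bound on the degree of its bucket $b(v)$ in $Q$, with (b) the per-edge budget used when pulling edges of $G$ out of each edge of $Q$.

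First I would bound the maximum degree of $Q$ itself. Since $Q$ is produced by Algorithm~\ref{alg:sampling} with parameter $R$, it is the union of at most $R$ matchings of subgraphs of $H$, so every vertex of $H$ has degree at most $R$ in $Q$. By the choice $R = \frac{1}{2\tau_-}$ and Corollary~\ref{cor:thresholds}~(3), we have $R = O_{\epsilon, p}(1)$.

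Next I would fix a vertex $v \in V(G)$ and count edges of $Q'$ incident to $v$. By construction, every edge $f \in Q'$ was selected from $E(e)$ for some $e \in Q$; if $f = \{v, u\}$ for some $u$, then necessarily $e = \{b(v), b(u)\}$, so $b(v)$ is an endpoint of $e$ in $H$. Thus the edges of $Q'$ incident to $v$ all arise from edges of $Q$ incident to $b(v)$, of which there are at most $R$. For each such $e \in Q$, the reduction places at most $\min\{p^{-1}\log \epsilon^{-1}, |E(e)|\} \leq p^{-1}\log \epsilon^{-1}$ edges of $E(e)$ into $Q'$; in particular, at most this many of them can be incident to $v$. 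Combining, the degree of $v$ in $Q'$ is at most $R \cdot p^{-1}\log \epsilon^{-1}$, which is $O_{\epsilon, p}(1)$.

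There is no real obstacle here: the argument is purely combinatorial and uses only the degree bound on $Q$ (inherited from the parameter $R$ of Algorithm~\ref{alg:sampling}) together with the explicit cap $p^{-1}\log \epsilon^{-1}$ imposed when expanding each edge of $Q$ back to edges of $G$. The only minor care is to observe that, because $f = \{v,u\}$ maps to $\{b(v), b(u)\}$, every $Q'$-edge at $v$ is accounted for by a unique $Q$-edge at $b(v)$, so the two bounds multiply cleanly rather than interact in a more complicated way.
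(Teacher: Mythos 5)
Your proof is correct and follows essentially the same argument as the paper: every $Q'$-edge at $v$ comes from a $Q$-edge at $b(v)$, the degree of $b(v)$ in $Q$ is $O_{\epsilon,p}(1)$, and each such $Q$-edge contributes at most $p^{-1}\log\epsilon^{-1}$ edges to $Q'$, so the bounds multiply. The only (harmless) difference is that you explicitly re-derive the degree bound on $Q$ from $R = \frac{1}{2\tau_-}$ being the number of matchings, whereas the paper simply invokes that $Q$ has maximum degree $O_{\epsilon,p}(1)$.
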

\begin{proof}
	Observe that an edge $e'$ incident to a vertex $v \in V$ is in $Q'$ only if its corresponding edge $e$ in graph $H$ is in $Q$. Since $e$ corresponds to $e'$,  it should be incident to $b(v)$ of $v$ by the construction of $H$. Moreover, since $b(v)$ has maximum degree $O_{\epsilon, p}(1)$ in $Q$ and that for each edge incident to $b(v)$ in $Q$, we put at most $O(p^{-1}\log \epsilon^{-1})$ edges in $Q'$, the degree of $v$ in $Q'$ is bounded by $O_{\epsilon, p}(1) \times O(p^{-1}\log \epsilon^{-1}) = O_{\epsilon, p}(1)$. This bounds the maximum degree of $Q'$ by $O_{\epsilon, p}(1)$.
\end{proof}

\begin{claim}
	$\E[\mu(\mc{Q}')] \geq (1-\Omega(\epsilon))\opt$.
\end{claim}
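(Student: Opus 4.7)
The plan is to use a maximum matching $M^*$ of $\mc{Q}$ as a template and, for each of its edges $e$, extract a realized witness edge that lies in $Q'$. Throughout, fix everything except the realization of $\mc{G}$---the bucket assignment $b$, the internal randomness of Algorithm~\ref{alg:sampling}, and in particular the specific subset $S_e \subseteq E(e)$ of size $\min\{p^{-1}\log\epsilon^{-1}, |E(e)|\}$ chosen into $Q'$ for every $e \in Q$. For each $e \in M^*$ let $B_e$ denote the event that at least one edge of $S_e$ is realized in $\mc{G}$, and when $B_e$ holds pick an arbitrary $e^* \in S_e \cap \mc{G}$. The goal is to show that $M' := \{e^* : e \in M^*, B_e \text{ holds}\}$ is a matching inside $\mc{Q}'$ of expected size at least $(1-2\epsilon)\E[\mu(\mc{Q})]$, which combined with the already-established $\E[\mu(\mc{Q})] \geq (1-\Omega(\epsilon))\E[\mu(\mc{H})]$ and Claim~\ref{cl:713713} will give $\E[\mu(\mc{Q}')] \geq (1-\Omega(\epsilon))\opt$.

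Verifying that $M'$ is a matching inside $\mc{Q}'$ is immediate: any two distinct $e_1, e_2 \in M^*$ are vertex-disjoint in $H$ and hence touch four distinct buckets, so the endpoints of $e_1^* \in E(e_1)$ and $e_2^* \in E(e_2)$ lie in four distinct buckets and are therefore four distinct vertices of $G$, so $e_1^*$ and $e_2^*$ share no endpoint. Since each $e^* \in S_e \cap \mc{G} \subseteq Q' \cap \mc{G} = \mc{Q}'$, we conclude $\mu(\mc{Q}') \geq |M'|$.

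The key per-edge estimate is $\Pr[B_e \mid e \in \mc{Q}] \geq 1-2\epsilon$ for every $e \in Q$. If $|E(e)| \leq p^{-1}\log\epsilon^{-1}$, then $S_e = E(e)$ so $B_e$ is literally the event $e \in \mc{Q}$ and the conditional probability is $1$. Otherwise $|S_e| = p^{-1}\log\epsilon^{-1}$, and using $p_{e'} \geq p$ together with independence of edge realizations yields $\Pr[\neg B_e] = \prod_{e' \in S_e}(1-p_{e'}) \leq (1-p)^{p^{-1}\log\epsilon^{-1}} \leq \epsilon$, while $p_e = 1-\prod_{e' \in E(e)}(1-p_{e'}) \geq 1-(1-p)^{|E(e)|} > 1-\epsilon$. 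Hence $\Pr[\neg B_e \mid e \in \mc{Q}] \leq \Pr[\neg B_e]/p_e \leq \epsilon/(1-\epsilon) \leq 2\epsilon$.

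The main subtlety, and the step that will require the most care, is converting this pointwise bound into a bound on $\E[|M'|]$, since $M^*$ is itself a random function of $\mc{Q}$ and cannot be handled by a naive per-edge sum. The plan is to condition on the entire realized subgraph $\mc{Q}$ (which determines $M^*$) and exploit a product structure: the events $\{B_e\}_{e \in Q}$ depend on the pairwise disjoint blocks of edge-realizations $\{E(e)\}_{e \in Q}$ in $G$, and conditioning on $\mc{Q}$ itself factorizes across these blocks since each block is constrained only by whether it contains at least one realized edge. Thus, conditional on $\mc{Q}$, the $B_e$'s remain mutually independent and each $e \in \mc{Q}$ still satisfies $\Pr[B_e \mid \mc{Q}] \geq 1-2\epsilon$. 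Therefore $\E[|M'| \mid \mc{Q}] \geq (1-2\epsilon)\mu(\mc{Q})$, and taking expectations together with $\E[\mu(\mc{Q})] \geq (1-\Omega(\epsilon))\E[\mu(\mc{H})]$ and Claim~\ref{cl:713713}'s $\E[\mu(\mc{H})] \geq (1-3\epsilon)\opt$ yields $\E[\mu(\mc{Q}')] \geq (1-\Omega(\epsilon))\opt$, as claimed.
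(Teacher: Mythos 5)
Your proof is correct and follows the paper's approach: bound, for each $e \in Q$, the probability that at least one of its picked edges is realized; use the disjointness of the blocks $E(e)$ to establish (conditional) independence across edges of $\mc{Q}$; and map a maximum matching of $\mc{Q}$ to a matching of surviving witness edges in $\mc{Q}'$. The main difference is one of rigor rather than substance: the paper's proof simply asserts ``for each edge $e \in Q$, there is a probability at least $(1-\epsilon)p_e$ that one picked edge in $Q'$ is realized, thus $\E[\mu(\mc{Q}')] \geq (1-\epsilon)\E[\mu(\mc{Q})]$,'' leaving the required coupling/conditioning implicit, while you spell it out explicitly (conditioning on $\mc{Q}$, noting the blockwise product structure and the conditional independence of the $B_e$'s). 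Your constants are slightly cruder ($1-2\epsilon$ instead of the paper's $1-\epsilon$; the cleaner route is $\Pr[B_e \mid e \in \mc{Q}] = p'_e/p_e \geq 1-\epsilon$), but this is immaterial for the $(1-\Omega(\epsilon))$ conclusion.
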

\begin{proof}
	For any edge $e \in Q$, define $p'_e$ to be the probability that at least one of the edges in $G$ picked for $e$ is realized. We first argue that $p'_e \geq (1-\epsilon)p_e$. To see this, note that if $|E(e)| \leq p^{-1}\log \epsilon^{-1}$, then all the edges in $E(e)$ will be picked. Thus by definition of $p_e$ we have $p'_e = p_e$. On the other hand, if $|E(e)| > p^{-1}\log \epsilon^{-1}$, we pick exactly $p^{-1}\log \epsilon^{-1}$ edges for $e$. Since each of these edges has realization probability at least $p$, the probability that at least one of them is realized is at least
$$
	1-(1-p)^{p^{-1}\log \epsilon^{-1}} \geq 1-\epsilon \geq (1-\epsilon)p_e.
$$

Now let $M$ be any matching in $Q$. For each edge $e \in M$, choose one arbitrary edge in $E(e)$. From the construction of $H$ from $G$, one can confirm that the set of these chosen edges will form a matching of size $|M|$ in $G$. This concludes the proof: For each edge $e \in Q$, there is a probability at least $(1-\epsilon)p_e$ that one picked edge in $Q'$ is realized, thus $\E[\mu(\mc{Q}')] \geq (1-\epsilon)\E[\mu(\mc{Q})]$. As it was previously shown that $\E[\mu(\mc{Q})] \geq (1-\Omega(\epsilon))\opt$, we conclude that  $\E[\mu(\mc{Q}')] \geq (1-\Omega(\epsilon))\opt$.
\end{proof}

\section{Approximate MIS}\label{app:weakmis}

In this section we describe how Lemma~\ref{lem:apxMIS} can be derived as a corollary of the algorithm of \cite{DBLP:conf/soda/Ghaffari19}. Theorem~1.1 of \cite{DBLP:conf/soda/Ghaffari19} gives a randomized \local{} independent-set (IS) algorithm which guarantees that for each node $v$, the probability that $v$ ``has not made its decision'' after $O(\log \deg(v) + \log \frac{1}{\delta})$ rounds is at most $\delta$. The decision of $v$ is finalized if it is in the IS or it has a neighbor that is in the IS (implying that $v$ cannot be in the IS). 

To achieve Lemma~\ref{lem:apxMIS} we set $\delta = \frac{\epsilon}{10\Delta}$. Let $I$ denote the independent set returned by the algorithm after $O(\log \deg(v) + \log \frac{10\Delta}{\epsilon}) = O(\log \frac{\Delta}{\epsilon})$ rounds and let $U$ and $D$ respectively denote the set of undecided and decided vertices. We have
$$
\E[|U|] = \E\Big[ \sum_{v} \mathbbm{1}(\text{$v$ is undecided}) \Big] = \sum_v \Pr[\text{$v$ is undecided}] \leq \sum_v \frac{\epsilon}{10\Delta} = \frac{\epsilon}{10\Delta}n,
$$
and thus $\E[|D|] = n - \E[|U|] \geq (1-\frac{\epsilon}{10\Delta})n \geq 0.9n$. There is at least one IS node among the at most $\Delta + 1$ inclusive neighbors of any decided vertex; thus $\E[|I|] \geq \frac{\E[|D|]}{\Delta+1} \geq \frac{0.9n}{\Delta+1} \geq \frac{0.9n}{2\Delta} = 0.45 \frac{n}{\Delta}$. On the other hand, let $I'$ be the MIS obtained by greedily adding the undecided nodes to $I$ until they form an MIS. We have $|I'| \leq |I| + |U|$. Therefore, we indeed get that
$$
\frac{\E[|I|]}{\E[|I'|]} \geq \frac{\E[|I|]}{\E[|I|] + \E[|U|]} \geq \frac{0.45\frac{n}{\Delta}}{0.45\frac{n}{\Delta} + \frac{\epsilon}{10\Delta}n} = \frac{0.45\frac{n}{\Delta}}{(0.45 + 0.1 \epsilon) \frac{n}{\Delta}} = \frac{0.45}{0.45 + 0.1 \epsilon} > 1-\epsilon,
$$ 
concluding the proof.


\end{document}